\algnewcommand\algorithmicinput{\textbf{Input:}}
\algnewcommand\algorithmicoutput{\textbf{Output:}}
\algnewcommand\Input{\item[\algorithmicinput]}
\algnewcommand\Output{\item[\algorithmicoutput]}
\tikzset{
	position/.style args={#1:#2 from #3}{
		at=(#3.#1), anchor=#1+180, shift=(#1:#2)
	},
	vertex/.style = {circle, minimum size=16pt, draw, inner sep=1pt},
	timearc/.style = {draw=black,->,shorten <=1.5pt, shorten >=1.5pt,>=stealth',
		nodes={font=\scriptsize},	
	},
}
\newcommand{\NN}{\mathbb{N}} %
\newcommand{\bigO}{\mathcal{O}} %
\newcommand\abs[1]{\left|#1\right|} %
\newcommand{\GG}{\mathcal{G}} %
\newcommand\drp{\textsc{Delay-Robust Route}}
\newcommand\sddrp{\textsc{SD-Delay-Robust Route}}
\newcommand\tddrp{\textsc{TD-Delay-Robust Route}}
\newcommand\mcc{\textsc{Multi-Colored Clique}}
\newcommand\tsat{\textsc{3-SAT}}
\newcommand\mcpsat{\textsc{Multi-Colored Monotone SAT}}
\newcommand\mcpsatshrt{\textsc{MCMSAT}}
\newcommand{\lepoly}{\le_{\textnormal{m}}^{\textnormal{poly}}}
\newcommand{\CNP}{\textnormal{NP}}
\newcommand{\CFPT}{\textnormal{FPT}}
\newcommand{\CWONE}{\textnormal{W[1]}}
\newcommand{\CXP}{\textnormal{XP}}
\newcommand{\under}[1]{G_{\textnormal{u}}(#1)}
\DeclareMathOperator{\tastart}{start}
\DeclareMathOperator{\taend}{end}
\DeclareMathOperator{\tatime}{t}
\DeclareMathOperator{\tatrav}{\lambda}
\DeclareMathOperator{\arrival}{\alpha}
\newcommand{\mvert}{\;\middle\vert\;}
\newcommand{\problemdef}[3]{
		\begin{center}
	\begin{minipage}{0.95\textwidth}
		\noindent
		\textsc{#1}
				\vspace{5pt}\\
				\setlength{\tabcolsep}{3pt}
				\begin{tabularx}{\textwidth}{@{}lX@{}}
						\textbf{Input:} 		& #2 \\
						\textbf{Question:} 	& #3
					\end{tabularx}
	\end{minipage}
		\end{center}
}
\title{Delay-Robust Routes in Temporal Graphs}
\author{Eugen~F\"uchsle}{TU Berlin, Faculty IV, Algorithmics and Computational Complexity, Germany}{fuechsle@campus.tu-berlin.de }{}{}
\author{Hendrik~Molter}{Department of Industrial Engineering and Management, Ben-Gurion~University~of~the~Negev, 
Beer-Sheva, 
Israel}{molterh@post.bgu.ac.il}{https://orcid.org/0000-0002-4590-798X}{Supported by the ISF, grant No.~1070/20.}
\author{Rolf~Niedermeier}{TU Berlin, Faculty IV, Algorithmics and Computational Complexity, Germany}{rolf.niedermeier@tu-berlin.de}{https://orcid.org/0000-0003-1703-1236}{}
\author{Malte~Renken}{TU Berlin, Faculty IV, Algorithmics and Computational Complexity, Germany}{m.renken@tu-berlin.de}{https://orcid.org/0000-0002-1450-1901}{Supported by the DFG, project MATE (NI 369/17).}
\authorrunning{Eugen F\"uchsle, Hendrik Molter, Rolf Niedermeier, and Malte Renken} %
\keywords{algorithms and complexity, parameterized complexity, time-varying networks, temporal paths, journeys} %
\begin{document}

\maketitle

\begin{abstract}
Most transportation networks are inherently temporal: Connections (e.g.\ flights, train runs) are only available at certain, scheduled times.
When transporting passengers or commodities, this fact must be considered for the the planning of itineraries.
This has already led to several well-studied algorithmic problems on 
temporal graphs.
The difficulty of the described task is increased by the fact that connections are often unreliable --- in particular, many modes of transportation suffer from occasional delays.
If these delays cause subsequent connections to be missed, the consequences can be severe.
Thus, it is a vital problem to design itineraries that are \emph{robust} to (small) delays.
We initiate the study of this problem from a parameterized complexity perspective
by proving its NP-completeness as well as several hardness and tractability results for natural parameterizations.
\end{abstract}

\section{Introduction}\label{sec:intro}
Finding a path between two vertices in a graph is one of the most fundamental problems in graph algorithmics. In the rise in popularity of \emph{temporal graphs} as a mathematical model~\cite{HotTopicHol15,HotTopicHS19,HotTopicMic16,HotTopicLVM18,HotTopicCas+12}, computing so-called \emph{temporal paths} is one of the most important algorithmic problems in this area.
Herein,
a \text{temporal graph} is a graph whose edges are present only at certain, known points in time.
For our purposes, it is specified by a set~$V$ of vertices and a set~$E$ of time arcs,
where each time arc~$(v,w,t,\lambda) \in E$ consists of a \emph{start vertex}~$v$, an \emph{end vertex}~$w$, a \emph{time label}~$t$, and a traversal time~$\lambda$;
this means that there is a (direct) connection from~$v$ to~$w$ starting at time~$t$ and arriving at time~$t+\lambda$.
Temporal graphs are prime models for many real-world networks:
Social graphs, communication networks, and transportation networks are usually not static but vary over time.

The added dimension of time causes many aspects of connectivity to behave quite differently from static (i.e., non-temporal) graphs.
In particular, the flow of goods or information through a temporal network has to respect time. More formally, it follows a \emph{temporal walk} (or \emph{path}, if every vertex is visited at most once),
i.e., a sequence of time arcs $(v_i, w_i, t_i, \lambda_i)_{i=1}^{\ell}$
where $v_{i+1} = w_i$ and $t_{i+1} \geq t_i + \lambda_i$ for all $i < \ell$.
While inheriting many properties of their static counterparts, temporal walks exhibit certain characteristics that add a new level of complexity to algorithmic problems centered around them. For example, temporal connectivity is not transitive: the existence of a temporal walk from vertex $u$ to $w$ and a temporal walk from $v$ to $w$ does not imply the existence of a temporal walk from $u$ to $w$. Furthermore, the temporal setting allows for several natural notions of an ``optimal'' temporal path~\cite{XuanFJ03}.

As the finding of (optimal) temporal paths and walks constitutes the perhaps most important building block for (algorithmic) analysis of temporal networks,
it has already been studied intensively~\cite{WuCKHHW16EffTempPath,XuanFJ03}.
However, the temporal setting allows to model further natural constraints on temporal walks and paths that do not have a counterpart in the static setting.
For example, recently the study of the computational complexity of finding temporal walks and paths that are subject to some waiting time constraints has been initiated~\cite{TemporalPathsUnderWaitingTime,BentertHNN20TempWalkWaitingTime}.

In this work, we investigate another very natural yet still unstudied temporal path variant, namely so-called \emph{delay-robust} temporal paths.
Real-world networks are often not perfect: Scheduled connections may be canceled or delayed.
This immediately brings up the natural issue of robustness.
To the best of our knowledge, this issue has so far only been analyzed with respect to cancellations \cite{Berman96Vuln},
but not with respect to delays.
We propose a model for delay-robust temporal paths and analyze natural structural and computational problems
occurring in this context.
Our main problem of interest is to determine whether there is a delay-robust temporal path between two vertices in a temporal graph.
\problemdef{\drp{}}
{A temporal graph $\mathcal{G} = (V,E)$, two vertices $s,z \in V$ and $x,\delta \in \mathbb{N}$.}
{Is there an $x$-delay-robust route from~$s$ to~$z$ in~$\GG$?} 

It remains to say how \emph{delay-robustness} is understood.
Although different notions are conceivable, we consider a sequence of vertices (called a \emph{route}) in a temporal graph to be \emph{$x$-delay-robust},
if there is a temporal path visiting the vertices in this sequence even if up to $x$~time arcs are delayed by at most~$\delta$.
We give a formal definition in \cref{sec:prelims}.

This definition is motivated by the fact that changing the vertices may be costly for a number of reasons:
storage or transhipment facilities may need to be newly allocated;
if the new route passes through different jurisdictions, then new authorizations and documents have to be acquired;
insurance policies might not cover alternative routes;
the chosen packaging might no longer be adequate (e.g.\ when switching from rail to air transportation);
or personnel might need to be moved.
All these and many more issues are of much less concern when the chosen route can be kept and only the \emph{schedule} has to be changed.

\subparagraph*{Related Work.} 
Apart from the already mentioned work on finding temporal walks and paths, there has been extensive research on many other connectivity-related problems on temporal graphs~\cite{BussMNR20,Erlebach0K21,KlobasMMNZ21}.
Delays in temporal graphs have been considered as a modification operation to manipulate reachability sets~\cite{DBLP:conf/aaai/DeligkasP20,DBLP:journals/corr/MolTerRenkenZschoche}. 
The individual delay operation considered in the mentioned work delays a single time arc and is similar to our notion of delays.
The deletion of time arcs~\cite{DBLP:journals/corr/MolTerRenkenZschoche,EnrightMMZ19,enright2018deleting}, the deletion of vertices~\cite{staticExp1:journals/jcss/ZschocheFMN20,FluschnikMNRZ20,staticExp2/DBLP:journals/jcss/KempeKK02}, as well as reordering of time arcs~\cite{enright2021assigning} have also been considered as temporal graph modification operations to manipulate the connectivity properties of the temporal graph. The corresponding computational problems in all mentioned work are NP-hard and can be also considered as computing ``robustness measures'' for the connectivity in temporal graphs.

In companion work \cite{ForeseenUnforeseenDelays} we investigate the related problem where we ask whether two vertices remain connected even if up to $x$ time arcs are delayed.
Note that in this setting, the specific temporal path connecting the two vertices can visit different vertices for different delays. We show that this problem can be solved in polynomial time.
We further investigate the problem variant where the delays occur dynamically during the ``journey'' from the start to the destination vertex.
In this case the problem becomes PSPACE-complete if every vertex can be visited at most once and stays polynomial-time solvable, otherwise.

\subparagraph*{Our Contribution.} We introduce the computational problem of finding routes that are robust under delays.
We investigate its computational complexity with a focus on parameterized algorithms and hardness~\cite{DBLP:DowneyFellowsParamCompl,DBLP:CyganEtAlParamAlgs}. 

We first give some structural results in \cref{ch:struct_inv}, including that \textsc{Delay-Robust Path} is solvable in polynomial time if the underlying graph\footnote{The \emph{underlying graph} of a temporal graph is the undirected static graph obtained by connecting all vertices that are connected by a time arc.} is a forest. 
In \cref{subsec:reduction_framework}, we show that \textsc{Delay-Robust Path} is NP-hard even if the underlying graph has constant bandwidth, which implies that it also has constant treewidth. We further show that \textsc{Delay-Robust Path} is W[1]-hard when parameterized by the combination of the feedback vertex number of the underlying graph and the number of delays.
In \cref{sec:algs}, we present our general algorithmic results where we explore how the polynomial-time algorithm for underlying forests can be generalized. We give a polynomial-time algorithm for the case where we have a constant number of delays. We further give two FPT algorithms: one for the underlying feedback edge set number as a parameter and one for the combination of the so-called \emph{timed} feedback vertex number~\cite{TemporalPathsUnderWaitingTime} and the number of delays as a parameter.

\section{Preliminaries}
\label{sec:prelims}

We abbreviate~$\{1, 2, \dots, n\}$ as $[n]$ and $\{n, n+1, \dots, m\}$ as $[n, m]$.
For any time arc~$e = (v, w, t, \lambda_e)$,
we denote the starting and ending vertices as $\tastart(e) = v$ an $\taend(e) = w$, the time label as $\tatime(e) = t$, and the traversal time as $\tatrav(e) = \lambda_e$.
Furthermore, for any vertex~$v$, $\tau_v^+$~denotes the set of time steps where $v$~has outgoing time arcs,
and $\tau_v^-$~denotes the time steps with incoming time arcs.
We set $\tau_v := \tau_v^+ \cup \tau_v^-$.

Given a temporal graph $\GG$, we denote by $T$ the maximum time label of all time arcs in~$\GG$.
When removing all time information and directions from the time arcs of a temporal graph~$\GG = (V, E)$,
the resulting (static \& undirected) graph $\under{\GG} = (V,E')$ with 
$E' = \{\{v,w\} \mid (v,w,t,\lambda) \in E\}$
is called the \emph{underlying graph} of $\GG$.

\subparagraph{Delays.}
We distinguish two different types of delays.
Both are applied to a single time arc~$e$ and delay it by a natural number~$\delta$.
A \emph{starting delay} increases the time label~$\tatime(e)$ by $\delta$
while a \emph{traversal delay} increases the traversal time~$\lambda(e)$ by~$\delta$.
In the example of a railway network, a starting delay would correspond to a delayed departure at a station
whereas a traversal delay would describe a delay occurring on the way between two stations.

For a given set~$D \subseteq E$ of \emph{delayed arcs},
a sequence of time arcs $(v_i, w_i, t_i, \lambda_i)_{i=1}^\ell$ is called
a $D$-starting-delayed temporal walk
resp.\ a $D$-traversal-delayed temporal walk
if it is a temporal walk in the temporal graph obtained from~$\GG$
by applying starting delays resp.\ traversal delays to all time arcs in~$D$.
(We omit $D$ as well as the type of delay when they are clear from context.)
Note that a traversal-delayed temporal walk is always also a temporal walk in~$\GG$,
which is not necessarily true for a starting-delayed temporal walk.

As an example consider the following temporal walk
with edges labeled by $(\tatime(e), \tatrav(e))$:
\begin{center}
\begin{tikzpicture}
	\begin{scope}[every node/.style=vertex]
		\node at (0,0) (a) {$a$};
		\node[position=0:8ex from a] (b) {$b$};
		\node[position=0:8ex from b] (c) {$c$};
	\end{scope}
	\draw[timearc]
		(a) edge[edge label={$(1,1)$}] (b)
		(b) edge[edge label={$(3,1)$}] (c);
\end{tikzpicture}
\end{center}
When delaying the first time arc by 1, i.e. when setting $\delta = 1$ and $D = \{(a, b, 1, 1)\}$,
then this is also a starting-delayed as well as a traversal-delayed temporal walk:
Due to the delay, the first time arc arrives in $b$ at time step $2+\delta=3$ which is no later than the departure of the second time arc. 
However, if we instead set~$\delta = 2$ and $D = \{(a, b, 1, 1), (b, c, 3, 1)\}$,
then it is still a starting-delayed temporal walk
but no longer a traversal-delayed temporal walk,
because the first time arc only reaches~$b$ at time~$4$.

We say a sequence~$R$ of vertices forms a \emph{(delayed) route} if there is a (delayed) temporal walk which \emph{follows} $R$,
that is, which visits exactly the vertices of~$R$ in the given order.
Generally, a temporal walk or route from vertex~$s$ to vertex~$z$ is also called a \emph{temporal $(s,z)$-walk} or \emph{$(s,z)$-route}.
A \emph{(delayed) temporal path} is a (delayed) temporal walk where no vertex is visited twice.

\subparagraph{Robustness.}
We say that a temporal route is \emph{traversal-delay-robust} resp.\ \emph{starting-delay-robust} for a given number~$x$ of delays
if it is a $D$-traversal-delayed resp.\ $D$-starting-delayed temporal route for all delay sets~$D$ of size~$\abs{D} \leq x$.
Of course, this also depends on the value of~$\delta$.
An example can be seen in \cref{fig:example_drp}.

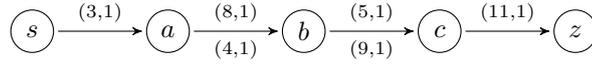
\begin{figure}
	\centering
	\tikzstyle{alter}=[circle, minimum size=16pt, draw, inner sep=1pt] 
	\tikzstyle{majarr}=[draw=black,->,shorten <=1.5pt, shorten >=1.5pt]
	
	\begin{tikzpicture}[auto, >=stealth',shorten <=1pt, shorten >=1pt]
		\node[alter] at (0,0) (s) {$s$};
		\node[alter, position=0:8ex from s] (a) {$a$};
		\node[alter, position=0:8ex from a] (b) {$b$};
		\node[alter, position=0:8ex from b] (c) {$c$};
		\node[alter, position=0:8ex from c] (z) {$z$};		
		
		\draw[majarr] (s) edge node[anchor=south,sloped]{$\scriptstyle (3,1)$} (a);
		\draw[majarr] (a) edge node[midway, anchor=north, sloped]{$\scriptstyle (4,1)$} node[midway, anchor=south, sloped]{$\scriptstyle (8,1)$} (b);
		\draw[majarr] (b) edge node[midway, anchor=south, sloped]{$\scriptstyle (5,1)$} node[midway, anchor=north, sloped]{$\scriptstyle (9,1)$} (c);
		\draw[majarr] (c) edge node[midway, anchor=south, sloped]{$\scriptstyle (11,1)$} (z);		
	\end{tikzpicture}  	
	\caption{An example temporal graph where $(s,a,b,c,z)$ is a traversal- and starting-delay-robust temporal route for $x = 1$ and $\delta = 3$.
		No matter which time arc is delayed, there is a temporal path through these vertices in that order.
		On the other hand, the route ceases to be delay-robust for $\delta \geq 5$,
		as delaying the first arc demonstrates.
	}
	\label{fig:example_drp}
\end{figure}

We now have all the ingredients for the formal definition of our main problem, \drp{}, as given in \cref{sec:intro}.
In this definition, we did not specify whether traversal- or starting-delay is used.
The reason for that is that we will show in \cref{ch:struct_inv} that the distinction is meaningless because both problem variants are equivalent.
In the meantime, however, we will refer to them as \tddrp{} and \sddrp{}.

\section{Structural Results and Recognizing Robust Routes} \label{ch:struct_inv}
In this section, we derive some important properties of delay-robust routes.
\subsection{Structural Results}
We begin by investigating the distinction between walks and paths.
Clearly, from any temporal walk one can obtain a temporal path by eliminating all circular subwalks.
This leads to the following lemma,
which holds for traversal as well as starting delays,
and for all delay sizes~$x$ and delay times~$\delta$
and will come in handy later.
\begin{lemma}
\label{lemma:vwalk_then_vpath}
	Let $s$ and $z$ be two vertices. 
	If there is a delay-robust $(s,z)$-route, then there is a delay-robust $(s,z)$-route without repeated vertices.
\end{lemma}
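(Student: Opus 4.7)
The plan is to start from an arbitrary delay-robust $(s,z)$-route and iteratively eliminate repeated vertices by excising circular subsequences, arguing at each step that delay-robustness is preserved.

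Formally, suppose $R=(v_0,\dots,v_k)$ with $v_0=s$, $v_k=z$ is a delay-robust $(s,z)$-route in which some vertex occurs twice, say $v_i=v_j$ with $i<j$. I define the shortcut route
\[
R' \;=\; (v_0,\dots,v_i,v_{j+1},\dots,v_k).
\]
To verify that $R'$ is still $x$-delay-robust, I fix an arbitrary delay set $D\subseteq E$ with $|D|\le x$ and exhibit a $D$-delayed temporal walk that follows $R'$. By delay-robustness of $R$, there is a $D$-delayed temporal walk $W=(e_1,\dots,e_k)$ following $R$, where $e_m$ goes from $v_{m-1}$ to $v_m$. The stitched candidate is $W' := (e_1,\dots,e_i,\,e_{j+1},\dots,e_k)$; the endpoints match because $v_i=v_j$.

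The only thing to check is that $W'$ still obeys the time-consistency constraint at the splicing point: the (possibly delayed) arrival time of $e_i$ at $v_i$ must be at most the (possibly delayed) departure time of $e_{j+1}$. Since $W$ itself is a valid $D$-delayed temporal walk, its delayed arrival times are non-decreasing along the walk, so the delayed arrival at $v_j$ via $e_j$ is at least the delayed arrival at $v_i$ via $e_i$; and the delayed departure of $e_{j+1}$ is at least the delayed arrival at $v_j$ via $e_j$. Chaining these inequalities gives exactly what is needed. This argument is oblivious to whether $D$ represents traversal delays or starting delays, since in either case the relevant monotonicity on arrival times along $W$ still holds, which is what lets us cover both problem variants with a single proof.

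With this step in hand, I iterate: each excision strictly decreases the length of the route, so after finitely many steps I obtain a delay-robust $(s,z)$-route in which every vertex appears at most once, as required. I expect no serious obstacle here; the only point requiring a sentence of care is the time-consistency check at the splice, which reduces to the already-noted monotonicity of arrival times in a (delayed) temporal walk.
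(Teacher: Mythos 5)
Your proof is correct and follows essentially the same approach as the paper: both eliminate circular subwalks from the delayed temporal walks witnessing robustness, using the monotonicity of (delayed) arrival times to justify the splice. Your version is somewhat more explicit about the time-consistency check at the splicing point (and handles one repetition at a time rather than all at once), but the underlying argument is the same.
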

\begin{proof}
	If there is a delay-robust $(s,z)$-route $R = (v_i)_{i=1}^k$,
	then for each delay of size at most~$x$ there is a delayed temporal walk traversing $v_1, v_2,\ldots, v_k$ in that order.
	Each of these delayed temporal walks can be turned into a delayed temporal walk by eliminating circular subwalks.
	All the delayed temporal paths obtained in this way follow the same sequence of vertices,
	making this sequence a delay-robust $(s,z)$-route without repeated vertices.
\end{proof}

By virtue of \cref{lemma:vwalk_then_vpath},
we will subsequently assume routes to not contain repeated vertices.

Next, we turn towards proving the equivalence of \sddrp{} and \tddrp{}.
We start with some important observations.
The first one is that every traversal-delayed temporal walk is also a starting-delayed temporal walk.
\begin{lemma}
\label{lemma:td_implies_sd}
	Let $P$ be a traversal-delayed temporal walk for some delay set~$D$ and $\delta \in \NN$.
	Then $P$ is also a starting-delayed temporal walk for~$D$ and~$\delta$.
\end{lemma}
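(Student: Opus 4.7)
The plan is to compare the walk condition in the two delayed temporal graphs edge-by-edge and observe that the traversal-delayed version is the strictly more demanding one. Let $P = (v_i, w_i, t_i, \lambda_i)_{i=1}^\ell$ be the given traversal-delayed temporal walk for the delay set $D$ and delay size $\delta$. Applying a traversal delay to $e_i \in D$ turns it into $(v_i, w_i, t_i, \lambda_i + \delta)$, while applying a starting delay turns it into $(v_i, w_i, t_i + \delta, \lambda_i)$; in both cases the arrival time at $w_i$ equals $t_i + \lambda_i + \delta$, and the only difference is that the starting-delayed version departs $\delta$ later.

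I would then unfold the chronological walk condition for each consecutive pair $(e_i, e_{i+1})$. Writing $\mathbf{1}_{A} \in \{0,1\}$ for the indicator of an event $A$, the traversal-delayed walk condition reads
\[
    t_{i+1} \;\geq\; t_i + \lambda_i + \delta \cdot \mathbf{1}_{e_i \in D},
\]
whereas the corresponding starting-delayed walk condition reads
\[
    t_{i+1} + \delta \cdot \mathbf{1}_{e_{i+1} \in D} \;\geq\; t_i + \delta \cdot \mathbf{1}_{e_i \in D} + \lambda_i.
\]
Adding the nonnegative quantity $\delta \cdot \mathbf{1}_{e_{i+1} \in D}$ to the left side of the first inequality immediately yields the second. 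Hence every pair $(e_i, e_{i+1})$ satisfying the traversal-delayed walk condition also satisfies the starting-delayed one, so $P$ is a starting-delayed temporal walk for $D$ and $\delta$.

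There is essentially no obstacle here, as the argument reduces to a one-line inequality manipulation together with a case split on whether $e_i$ and $e_{i+1}$ lie in $D$ (which the Iverson-bracket formulation above handles uniformly). The intuition is that postponing an edge's departure (starting delay) cannot invalidate a walk any more than lengthening its traversal time by the same amount (traversal delay), because the two produce the same arrival time at the edge's endpoint while the starting-delayed version additionally pushes the departure back---which can only relax the constraint imposed on the preceding edge of the walk.
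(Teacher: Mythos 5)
Your proof is correct and is essentially identical to the paper's: both unfold the traversal-delayed walk condition $t_i + \lambda_i + [e_i \in D]\cdot\delta \le t_{i+1}$ and obtain the starting-delayed condition by adding the nonnegative term $[e_{i+1} \in D]\cdot\delta$ to the right-hand side. Nothing is missing.
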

\begin{proof}
	Let $P = (v_i, w_i, t_i, \lambda_i)_{i=1}^\ell$.
	This means that
	\[
		t_i + \lambda_i + [e_i \in D] \cdot \delta \le t_{i+1}
	\]
	for all $i \le \ell-1$,
	where $[e_i \in D] = \begin{cases} 1 & \text{if } e_i \in D \\ 0 & \text{otherwise} \end{cases}$ denotes the Iverson bracket.
	Thus
	\[
		t_i + \lambda_i + [e_i \in D] \cdot \delta \le t_{i+1} + [e_{i+1} \in D] \cdot \delta
	\]
	which shows that $P$ is a starting-delayed temporal walk.
\end{proof}

While the converse of \cref{lemma:td_implies_sd} is generally not true, the following weaker statement holds.

\begin{lemma}\label{lemma:minimal_delay_breaks_starting}
	Let $R$ be a route, $\delta \in \NN$ and $D$ a minimal delay set such that $R$~is not a $D$-traversal-delayed route.
	Then $R$ is not a $D$-starting-delayed route either.
\end{lemma}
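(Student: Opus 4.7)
The plan is to argue by contradiction: assume $R$ is a $D$-starting-delayed route, witnessed by a walk $W = (e_1, \ldots, e_r)$ following $R$. By \cref{lemma:vwalk_then_vpath}, I may assume $R$ visits each vertex at most once, so at each position $i \in \{1, \ldots, r\}$ the arc of any walk following $R$ is pinned down by the two consecutive vertices of $R$. Since $R$ is a route, a regular walk following $R$ exists, so $\emptyset$ is not a ``breaking'' set and minimality forces $D \neq \emptyset$. The goal is to exhibit a $D$-traversal-delayed walk following $R$, directly contradicting the hypothesis on $D$.

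The degenerate case in which no arc of $W$ lies in $D$ is immediate: all starting-delay constraints collapse to ordinary walk constraints, and since $W$ traverses no arc of $D$, the walk $W$ is already $D$-traversal-delayed.

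For the main case, let $j$ be the largest position with $e_j \in D$, and set $e^* := e_j$. By minimality of $D$, there is a $(D \setminus \{e^*\})$-traversal-delayed walk $W^*$ following $R$. A short observation shows $e^*$ must occur in $W^*$, because otherwise $W^*$ would be $D$-traversal-delayed as well; by the no-repeated-vertices property, $e^*$ appears in $W^*$ precisely at position $j$. I then stitch the two walks: let $W'$ use $W^*$'s arcs at positions $1, \ldots, j$ and $W$'s arcs at positions $j+1, \ldots, r$, which is well-defined since both walks use $e^*$ at position $j$.

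The main obstacle is verifying that $W'$ satisfies the $D$-traversal-delay constraint at every position; this splits into three pieces. At positions $i < j$ the walk agrees with $W^*$, and since $e^*$ appears in $W^*$ only at position $j$, the $D$- and $(D \setminus \{e^*\})$-traversal-delay constraints coincide there, so $W^*$'s guarantees carry over. At positions $i > j$ all arcs of $W$ lie outside $D$ by the maximality of $j$, so the traversal-delay constraint reduces to the ordinary walk constraint that $W$ satisfies. The critical position is $i = j$, where one needs $t_{j+1} \geq t_j + \lambda_j + \delta$; this is exactly the slack that $W$'s starting-delay condition at $j$ provides, because $e_j = e^* \in D$ while $e_{j+1} \notin D$ by the maximality of $j$. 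Putting the pieces together yields a $D$-traversal-delayed walk following $R$, the desired contradiction.
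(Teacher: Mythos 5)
Your proof is correct, and it reaches the contradiction by a genuinely different route than the paper. The paper also starts from a $D$-starting-delayed walk $P$ following $R$, but it picks $j$ to be the last position at which the \emph{traversal}-delay inequality fails along $P$; the starting-delay inequality then forces $e_{j+1}\in D$, and the paper shows that $R$ is already broken by $D\setminus\{e_{j+1}\}$ (every traversal-delayed prefix up to $v_{j+1}$ must arrive after $t_{j+1}$, or concatenating it with the valid suffix of $P$ would yield a $D$-traversal-delayed walk following $R$), thereby contradicting the \emph{minimality} of $D$. You instead pick $j$ as the last position whose arc lies in $D$, use minimality to obtain a $(D\setminus\{e_j\})$-traversal-delayed walk $W^*$, and splice its prefix with the suffix of the starting-delayed walk $W$, contradicting the assumption that $R$ is not a $D$-traversal-delayed route; the only nontrivial inequality, at the splice point, is exactly the starting-delay condition combined with $e_{j+1}\notin D$. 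Your version is the more constructive of the two, since it exhibits the forbidden walk explicitly, while the paper's is slightly leaner because it never has to assemble a second walk. Two small points: the no-repeated-vertices assumption is really licensed by the paper's standing convention adopted right after \cref{lemma:vwalk_then_vpath} (that lemma itself speaks only of delay-robust routes, not of a single delay set $D$), and you should note the boundary case $j=r$, where the splice degenerates to $W'=W^*$ and the critical constraint at position $j$ is vacuous --- the argument still goes through, but as written your ``one needs $t_{j+1}\ge t_j+\lambda_j+\delta$'' presupposes $j<r$.
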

\begin{proof}
Suppose for contradiction that $R$ was a $D$-starting-delayed route.
Then there is a $D$-starting-delayed temporal walk $P = (e_i)_{i=1}^\ell = (v_i, w_i, t_i, \lambda_i)_{i=1}^\ell$ that follows~$R$, i.e.,
\[
	t_i + \lambda_i + [e_i \in D] \cdot \delta \le t_{i+1} + [e_{i+1} \in D] \cdot \delta
\]
for all $i \leq \ell-1$.
Since $R$ is not a traversal-delayed route,
$P$ is not a traversal-delayed temporal path.
Thus there exists an index $j \leq \ell-1$ with
\[
t_j + \lambda_j + [e_j \in D] \cdot \delta > t_{j+1}
\]
and we may assume~$j$~to be chosen maximally.
This implies that 
\[
		t_{j+1} 
	< 	t_j + \lambda_j + [e_j \in D] \cdot \delta
	\le t_{j+1} + [e_{j+1} \in D] \cdot \delta,
\]
which in turn implies that $e_{j+1} \in D$.
By maximality of~$j$, $P' = (e_i)_{i=j+1}^\ell$ is a traversal-delayed temporal path.
Thus, for any traversal-delayed temporal path $Q = (v_i, w_i, t'_i, \lambda'_i)_{i=1}^j$
following $(v_i)_{i=1}^{j+1}$,
we must have $t'_j + \lambda'_j + [(v_j, w_j, t'_j, \lambda'_j) \in D]\cdot \delta > t_{j+1}$,
for otherwise its concatenation with~$P'$ would contradict the fact that $R$ is not a traversal-delayed route.
Therefore, $R$~is also not a $D'$-traversal-delayed temporal vertex walk,
where $D' = D \setminus \{e_{j+1}\}$.
This contradicts the minimality of~$D$.
\end{proof}

Using \cref{lemma:td_implies_sd,lemma:minimal_delay_breaks_starting}, we can now prove the following.
\begin{theorem}\label{theorem:drp_deltype_equiv}
	\tddrp{} $=$ \sddrp{}.
\end{theorem}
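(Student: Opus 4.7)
The plan is to prove the theorem by establishing the two directions of equivalence separately, each one following almost immediately from one of the preceding lemmas.

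For the direction \tddrp{} $\Rightarrow$ \sddrp{}, I would argue as follows. Suppose an instance $(\GG, s, z, x, \delta)$ is a yes-instance of \tddrp{}, so there exists a $(s,z)$-route $R$ that is traversal-delay-robust for $x$ and $\delta$. Let $D$ be any delay set with $\abs{D} \le x$. By definition, $R$ is a $D$-traversal-delayed route, which means there is a $D$-traversal-delayed temporal walk $P$ following $R$. By \cref{lemma:td_implies_sd}, $P$ is then also a $D$-starting-delayed temporal walk, so $R$ is a $D$-starting-delayed route. Since this holds for every $D$ of size at most $x$, the route $R$ is starting-delay-robust and the instance is a yes-instance of \sddrp{}.

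For the converse direction \sddrp{} $\Rightarrow$ \tddrp{}, I would argue by contrapositive using \cref{lemma:minimal_delay_breaks_starting}. Assume the instance is a no-instance of \tddrp{}, i.e., every $(s,z)$-route $R$ fails to be traversal-delay-robust. Fix any candidate route $R$. Then there is some delay set $D$ with $\abs{D} \le x$ such that $R$ is not a $D$-traversal-delayed route. Pick a subset $D' \subseteq D$ that is inclusion-minimal with the property that $R$ is not a $D'$-traversal-delayed route; clearly $\abs{D'} \le \abs{D} \le x$. By \cref{lemma:minimal_delay_breaks_starting}, $R$ is then also not a $D'$-starting-delayed route, witnessing that $R$ is not starting-delay-robust. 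Since this applies to every candidate $R$, the instance is a no-instance of \sddrp{} as well.

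The only subtle point is making sure the minimality hypothesis of \cref{lemma:minimal_delay_breaks_starting} is actually available to us and that the obtained counterexample $D'$ still satisfies $\abs{D'} \le x$; both are immediate, since subsets only shrink and the property of \emph{not} being a traversal-delayed route is preserved along the chain down to some minimal witness. Apart from that, there is no real obstacle: the structural work has been done by \cref{lemma:td_implies_sd,lemma:minimal_delay_breaks_starting}, and \cref{theorem:drp_deltype_equiv} is essentially a direct bookkeeping consequence of their conjunction. I would end the proof with a single sentence noting that these two directions show every instance has the same answer under both variants, hence the two decision problems coincide.
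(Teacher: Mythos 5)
Your proof is correct and follows essentially the same route as the paper: the first direction is the contrapositive of the paper's first paragraph (both rest on \cref{lemma:td_implies_sd}), and the second direction matches the paper's argument, including the step of passing to an inclusion-minimal delay set before invoking \cref{lemma:minimal_delay_breaks_starting}. Your explicit remark that the minimal witness $D'$ still satisfies $\abs{D'}\le x$ is a point the paper handles only implicitly, but there is no substantive difference.
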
 
\begin{proof}
	Let $\mathcal{G} = (V,E)$ be a temporal graph, $s,z \in V$ be a start and an end vertex, and~$x,\delta \in \NN$.
	If $(\GG, s, z, x, \delta)$~is a no-instance of \sddrp{}, then for every $(s,z)$-route~$R$,
	there exists a set~$D$ of $\abs{D} \leq x$~time arcs
	such that there is no $D$-starting-delayed temporal path following~$R$.
	By \cref{lemma:td_implies_sd}, there is then also no $D$-traversal-delayed temporal path following~$R$,
	thus $(\GG, s, z, x, \delta)$ is a no-instance of \tddrp{}.
	
	Conversely, if $(\GG, s, z, x, \delta)$~is a no-instance of \tddrp{},
	then for every $(s,z)$-route~$R$
	there exists a set~$D$ of $\abs{D} \leq x$~time arcs
	such that $R$~is no $D$-traversal-delayed route.
	We may assume~$D$~to be minimal to that respect.
	Then \cref{lemma:minimal_delay_breaks_starting} gives us that $R$~is no $D$-starting-delayed route,
	making $(\GG, s, z, x, \delta)$ a no-instance of \sddrp{}.
\end{proof}

\Cref{theorem:drp_deltype_equiv} allows us now to drop the distinction between the two delay types
and speak simply of \drp{}.
For the remainder of this paper we will mostly work with traversal delays,
for they are slightly easier to handle.

\subsection{Recognizing Robust Routes}
\label{subsec:drp_verification}
Since a route can be followed by an exponential number of different temporal walks,
it is not immediately clear whether delay-robustness can be efficiently checked.
The following theorem says that this is the case,
and \drp{} is thus contained in \CNP{}.

\begin{theorem}\label{thm:polyforest}
	For any given $x, \delta \in \NN$,
	one can determine in $\bigO(nmx^2 + m \log m)$~time whether a given route~$R$ in a temporal graph~$\GG$ is $x$-delay-robust,
	where
	$n$~is the number of vertices of~$R$
	and $m$~is the number of time arcs connecting consecutive vertices of~$R$.
\end{theorem}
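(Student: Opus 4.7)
The plan is a dynamic program tracking, for each prefix of the route $R = (v_1, \ldots, v_n)$ and each possible adversary budget $k \le x$, the worst-case (over delay sets of size at most $k$) earliest arrival time at the end of the prefix. By \cref{theorem:drp_deltype_equiv} I may work with traversal delays throughout. Let $E_i$ denote the set of time arcs from $v_i$ to $v_{i+1}$, so that $\sum_{i=1}^{n-1} |E_i| = m$, and define
\[
T(i,k) \;=\; \max\bigl\{\arrival_i(D) : D \subseteq E_1 \cup \cdots \cup E_i,\ |D| \le k\bigr\},
\]
where $\arrival_i(D)$ is the earliest time at which a $D$-traversal-delayed temporal walk following $(v_1, \ldots, v_{i+1})$ reaches $v_{i+1}$ (or $\infty$ if no such walk exists), with base case $T(0,k) = -\infty$. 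The route $R$ is then $x$-delay-robust if and only if $T(n-1, x) < \infty$.

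The crux is the recurrence
\[
T(i,k) \;=\; \max_{0 \le k_i \le k} h_i\bigl(T(i-1,\, k - k_i),\; k_i\bigr),
\]
where $h_i(a, k_i) = \min\bigl(V_{k_i + 1}(a),\; V_1(a) + \delta\bigr)$ and $V_j(a)$ is the $j$-th smallest value of $\tatime(e) + \tatrav(e)$ among arcs $e \in E_i$ with $\tatime(e) \ge a$ (taken as $\infty$ if fewer than $j$ such arcs exist). I would justify this in two steps. First, a monotonicity lemma: the defender's earliest arrival at $v_{i+1}$ starting from $v_i$ under any fixed delay set is nondecreasing in the starting time at $v_i$; hence, once the adversary commits $k - k_i$ delays to segments $1, \ldots, i-1$, its best option is precisely to push the arrival at $v_i$ up to $T(i-1, k - k_i)$. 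Second, a short exchange argument on a single segment: given arrival $a$ at $v_i$, the adversary optimally delays the $k_i$ arcs of $E_i$ with smallest $\tatime(e) + \tatrav(e)$ (among those with $\tatime(e) \ge a$), because leaving any such small-value arc undelayed lets the defender take it at its original value; the defender then chooses the minimum of the smallest undelayed value $V_{k_i+1}(a)$ and the smallest delayed value $V_1(a) + \delta$.

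For the running time, I would first sort each $E_i$ by $\tatime$ in a total of $O(m \log m)$ time. The DP table has $O(nx)$ cells; each cell is computed by iterating over $O(x)$ choices of $k_i$ and evaluating $h_i(a, k_i)$ in $O(m_i)$ time via a single scan through $E_i$ that extracts the two required values. Summing yields $O\bigl(\sum_i x^2 m_i\bigr) = O(mx^2)$ for the main loop, which fits within the claimed $O(nmx^2 + m \log m)$ bound. I expect the main obstacle to lie in justifying the recurrence: one must rigorously show that the adversary's optimal strategy decomposes per segment beyond the simple budget split encoded by $k_i$, which is exactly what the monotonicity lemma above delivers.
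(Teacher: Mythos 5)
Your proposal is correct and matches the paper's proof essentially verbatim: the same prefix-and-budget dynamic program, the same worst-case single-segment arrival function (your $h_i(a,k_i) = \min(V_{k_i+1}(a), V_1(a)+\delta)$ is exactly the paper's $\arrival(v_{i-1},v_i,a,k_i)$), and the same budget-splitting recurrence, justified by the same monotonicity and exchange reasoning. Your per-segment accounting even yields the slightly sharper bound $\bigO(mx^2 + m\log m)$ for the main loop, which of course still fits the claimed running time.
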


\Cref{thm:polyforest} also gives us a polynomial-time algorithm to solve \drp{} on temporal graphs with underlying forest:
As any vertex pair~$(s, z)$ is connected by at most one route,
we only need to test the delay robustness of that route.

In the remainder of this section, we will prove \cref{thm:polyforest}.
The basic idea is that a route is delay-robust for a worst-case delay if and only if it is delay-robust for all delays.
This worst-case delay can be computed in polynomial time using a dynamic program.

First, we introduce the term \emph{earliest arrival time} for a given route.
A route $R = (v_i)_{i=1}^k$ requires that there is at least one temporal path following~$R$.
The \emph{earliest arrival time} is then the arrival time of the temporal path that arrives earliest.
Formally, we define the earliest arrival time as follows.
Let $\mathcal{P} = \{P_i\}_{i=1}^\ell$ be the set of temporal paths following~$R$
with $P_i = (e^{(i)}_1, e^{(i)}_2, \ldots, e^{(i)}_{k-1})$.
The \emph{earliest arrival time} of~$R$ then is defined as the earliest arrival time of any temporal path in~$\mathcal{P}$,
i.e., as
\( \min \left\{ t(e^{(i)}_{k-1}) + \lambda(e^{(i)}_{k-1}) \mvert i \leq \ell \right\} \).
Analogously, if $R = (v_i)_{i=1}^k$ is a delayed route for the delay set~$D\subseteq E$ and delay time~$\delta \in \NN$,
and if $\mathcal{P}$ as above is the set of delayed temporal paths following~$R$,
then the \emph{earliest delayed arrival time} of~$R$ is 
	\(\min \left\{  t(e^{(i)}_{k-1}) + \lambda(e^{(i)}_{k-1}) + [e^{(i)}_{k-1} \in D] \cdot \delta \mvert i \leq \ell \right\} \).

We then define the \emph{worst-case arrival time} of a route~$R = (v_i)_{i=1}^k$
for a given delay size~$x$ and delay time~$\delta$
as the maximum earliest delayed arrival time of~$R$,
taken over all delay sets~$D$ with~$\abs{D} \leq x$.
(If $R$ is not $x$-delay-robust, then we define the worst-case arrival time to be~$\infty$.)

Now that we defined the worst-case arrival time, we show how to compute it.
Let $R_j = (v_i)_{i=1}^j$ denote the prefix routes of~$R$.
The dynamic program computes table entries $A_{R_j}[y]$ iteratively for all $j \leq k$ and $y \leq x$,
where $A_{R_j}[y]$ stores the worst-case arrival time of~$R_j$ for $y$~delays.

We begin with the single-vertex route $R_1 = (v_1)$, setting
\(
	A_{R_1}[y] = 0
\)
for all~$y$
since the empty temporal path is always available to go from~$v_1$ to~$v_1$.
Our goal is then to inductively compute~$A_{R_j}$ from $A_{R_{j-1}}$.

Consider the situation where we want to get from~$v$ to~$w$ in a single step, starting at time~$t$.
Then the set of available time arcs is
\(
	E(v,w,t) = \{ (v,w,t',\lambda) \in E \mid t' \ge t \}
\).
Suppose $E(v, w, t) = \{a_i\}_{i=1}^\ell$ where $\tatime(a_i) + \tatrav(a_i) \leq \tatime(a_{i+1}) + \tatrav(a_{i+1})$ for all~$i$.
Now if up to~$y$~delays occur,
then the latest time at which we will reach~$w$ is
\[
	\arrival(v, w, t, y) := \min\{\tatime(a_1) + \tatrav(a_1) + \delta, \tatime(a_{y+1}) + \tatrav(a_{y+1})\}.
\]
Here, the worst case occurs if $a_1$ through $a_y$ are all delayed.

Using this fact, we can now compute the table entries $A_{R_i}$ from $A_{R_{i-1}}$ as follows.
\begin{equation*}
\label{eq:table-recursion}
	A_{R_{i}}[y] = \max_{0\le y' \le y} \{ \arrival(v_{i-1},v_i,A_{R_{i-1}}[y'],y-y') \}.
\end{equation*}
The idea here is that some number~$y' \leq y$ of delays will occur between $v_{i-1}$ and~$v_i$,
while the other~$y-y'$ delays can occur somewhere along~$R_{i-1}$.

It remains to formally prove that $A_{R}[x]$ contains the solution to the \drp{} instance (\cref{lemma:drw-table-correct})
and that it can be computed in the specified time (\cref{lemma:drp_verif_running_time}).

\begin{lemma}
\label{lemma:drw-table-correct}
	Let $\GG= (V,E)$ be a temporal graph, $R = (v_i)_{i=1}^k$ a route,
	and $j, y, \delta \in \NN$.
	Then $A_{R_j}[y]$ as defined above is the worst-case arrival time of $R_j = (v_i)_{i=1}^j$
	for up to~$y$~delays.
	In particular, $R_j$ is delay-robust for $y$~delays if and only if $A_{R_j}[y] < \infty$.
\end{lemma}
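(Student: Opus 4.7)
The plan is to proceed by induction on $j$, with the base case $j = 1$ being immediate: the one-vertex route $R_1 = (v_1)$ contains no time arc to delay, and the stipulated value $A_{R_1}[y] = 0$ matches the arrival time of the trivial temporal path at $v_1$.

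Before tackling the inductive step I would first verify the auxiliary claim that $\arrival(v, w, t, y')$ really equals the worst-case earliest arrival time at $w$ when departing $v$ at time $\ge t$, given that up to $y'$ arcs in $E(v, w, t) = \{a_1, \dots, a_\ell\}$ (sorted by arrival time) are delayed. Any worst-case adversary delays $a_1, \dots, a_{y'}$: if some earlier arc $a_i$ were left undelayed, the traveller could arrive no later than $\tatime(a_{y'+1}) + \tatrav(a_{y'+1})$, so the adversary gains nothing by a different choice. Against this adversary the traveller's best two options collapse to $a_1$ delayed or $a_{y'+1}$ undelayed, matching the displayed formula. I would also record the monotonicity $\arrival(v, w, t', y') \le \arrival(v, w, t, y')$ whenever $t' \le t$, which follows immediately from $E(v, w, t') \supseteq E(v, w, t)$.

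For the inductive step, writing $W(R_j, y)$ for the worst-case arrival time, I would prove both inequalities. For $A_{R_j}[y] \ge W(R_j, y)$, I take an arbitrary delay set $D$ with $|D| \le y$ and split it as $D_2 := D \cap \{(v_{j-1}, v_j, \cdot, \cdot) \in E\}$ and $D_1 := D \setminus D_2$; by \cref{lemma:vwalk_then_vpath} I may assume the vertices of $R$ to be pairwise distinct, so $D_1$ only affects steps of $R_{j-1}$. Setting $y' := |D_1|$, induction gives a $D_1$-delayed path reaching $v_{j-1}$ at some time $t^* \le A_{R_{j-1}}[y']$; extending it via a suitable arc of $E(v_{j-1}, v_j, t^*)$ and applying the auxiliary lemma together with the monotonicity in $t$ yields a $D$-delayed arrival at $v_j$ of at most $\arrival(v_{j-1}, v_j, A_{R_{j-1}}[y'], y - y') \le A_{R_j}[y]$. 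For the matching $A_{R_j}[y] \le W(R_j, y)$, I pick $y^*$ attaining the maximum in the recurrence, use induction to obtain a delay set $D_1$ of size $y^*$ for which no $D_1$-delayed path reaches $v_{j-1}$ before $A_{R_{j-1}}[y^*]$, and augment $D_1$ with the $y - y^*$ earliest-arriving arcs of $E(v_{j-1}, v_j, A_{R_{j-1}}[y^*])$ (disjoint from $D_1$ by vertex-distinctness); the auxiliary lemma then forces every $D$-delayed path following $R_j$ to arrive at $v_j$ no earlier than $A_{R_j}[y]$. The ``in particular'' clause about delay-robustness follows by tracking finiteness through the two directions. I expect the main delicate point to be the first direction, where the monotonicity of $\arrival$ has to absorb the slack between the actual arrival $t^*$ at $v_{j-1}$ and the worst-case value $A_{R_{j-1}}[y']$, and where the clean split of $D$ relies crucially on the vertex-distinctness of $R$.
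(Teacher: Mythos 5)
Your proof is correct and follows essentially the same route as the paper's: induction on the prefix length, splitting each delay set between the prefix $R_{j-1}$ and the final hop from $v_{j-1}$ to $v_j$, and invoking the recurrence through $\arrival$. You are in fact somewhat more thorough than the paper, which asserts the worst-case formula for $\arrival$ without proof and compresses your two inequalities into a single exchange argument on a worst-case delay set~$D$.
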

\begin{proof}
	We will prove this by induction. 
	The base case~$i=1$ is clear as mentioned above.

	For the induction step, suppose that the statement holds for~$j-1$.
	Let $t$ be the worst-case arrival time of~$R_j$ and $D$~a delay set of size $\abs{D} \leq y$ for which this delayed arrival time is attained.
	Define $y'$ to be the number of delays in~$D$ that occur along~$R_{j-1}$
	and $t'$~as the delayed arrival time of~$R_{j-1}$ for~$D$.
	
	We may assume $D'$ to be a delay set causing the worst-case arrival time of~$R_{j-1}$ for up to~$y'$~delays,
	because otherwise we could replace~$D'$ with such a set~$\tilde{D}$:
	Clearly, the earliest delayed arrival time of $R_j$ for the delay set $\tilde{D} \cup (D \setminus D')$
	must be at least that for the delay set~$D$.
	
	Due to this, we know by induction hypothesis that
	$t'= A_{R_{j-1}}[y']$.
	Therefore $t = \arrival(v_{j-1}, v_j, t', y - y') = \arrival(v_{j-1}, v_j, A_{R_{j-1}}[y'], y - y') \leq A_{R_j}[y]$
	by definition of $\arrival$.
	But we also have $\arrival(v_{j-1}, v_j, A_{R_{j-1}}[y'], y - y') \geq A_{R_j}[y]$
	or else $D$ would not cause the worst-case arrival time.
	This proves the desired equality.
\end{proof}

The running time of the dynamic program is as follows.
\begin{lemma}
 \label{lemma:drp_verif_running_time}
	Let $\GG$ be a temporal graph, $R = (v_i)_{i=1}^n$ be a route,
	and $x, \delta \in \NN$.
	The dynamic program to compute $A_{R}[x]$ can be executed in
	$\bigO(nmx^2 + m \log m)$~time
	where $m$~is the number of time arcs connecting consecutive vertices of~$R$.
\end{lemma}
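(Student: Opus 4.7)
My plan is to bound the running time of the dynamic program by pre-sorting the arcs between consecutive vertices of $R$ and then analyzing the cost of computing each table entry.

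First, I would preprocess: for each consecutive pair $(v_{i-1}, v_i)$ in $R$, sort the $m_i$ time arcs from $v_{i-1}$ to $v_i$ in increasing order of arrival time $\tatime(e) + \tatrav(e)$. The sort for one pair takes $\bigO(m_i \log m_i)$ time, so since $\sum_{i} m_i = m$, the total preprocessing cost is $\bigO(m \log m)$.

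Next, I would observe that the table has $n(x+1)$ entries and that the row $A_{R_1}$ can trivially be initialized in $\bigO(x)$ time. For $i \geq 2$ and $y \in \{0, \ldots, x\}$, computing $A_{R_i}[y]$ according to the recurrence requires taking the maximum of $y+1 \leq x+1$ candidate values, each being a single $\arrival(v_{i-1}, v_i, A_{R_{i-1}}[y'], y-y')$ query. Thus at most $\sum_{y=0}^{x}(y+1) = \bigO(x^2)$ $\arrival$-queries are performed for row $i$, for a total of $\bigO(nx^2)$ queries overall. It then suffices to show that each such query can be answered in $\bigO(m_i)$ time once the arcs of the $i$th edge have been pre-sorted.

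The main (but small) technical obstacle is implementing the $\arrival$-query in linear time. Given the pre-sorted list of arcs from $v_{i-1}$ to $v_i$, I would walk through the arcs in order of increasing arrival time, skipping those with starting time strictly less than the query time~$t$. The first surviving arc is $a_1$, and the $(y'+1)$-st surviving arc is $a_{y'+1}$; if fewer than $y'+1$ arcs survive, the query returns~$\infty$. Either way, a single pass suffices, so the cost is $\bigO(m_i)$.

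Summing up, each of the $n-1$ relevant steps takes $\bigO(x^2 m_i) \leq \bigO(x^2 m)$ time (using the crude bound $m_i \leq m$), for a total of $\bigO(n m x^2)$. Adding the preprocessing cost of $\bigO(m \log m)$ yields the claimed bound of $\bigO(n m x^2 + m \log m)$. The correctness of the overall procedure follows from \cref{lemma:drw-table-correct}, so no additional argument is needed beyond the runtime analysis sketched here.
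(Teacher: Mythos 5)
Your proposal is correct and follows essentially the same approach as the paper: sort the arcs by arrival time once in $\bigO(m \log m)$ time, answer each $\arrival$-query by a linear scan over the (pre-sorted, filtered) arcs in $\bigO(m)$ time, and observe that $\bigO(x^2)$ queries per row over $n$ rows yields $\bigO(nmx^2)$ total. One minor slip that does not affect the runtime claim: when at least one but fewer than $y+1$ arcs survive the filter, the query should return $\tatime(a_1)+\tatrav(a_1)+\delta$ rather than $\infty$, since an adversary delaying every available arc still leaves the earliest-arriving one usable.
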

\begin{proof}
	We may assume that $\GG=(V,E)$ contains no vertices outside of~$R$,
	thus $n = \abs{V}$ and $m = \abs{E}$.
	First, the time arcs are sorted with respect to the arrival time in $\bigO(m \log m)$ time. 
	
	For the computation of~$a(v,w,t,y)$ we need to filter all time arcs from $v$ to $w$
	that start at time~$t$ or later, obtaining $E(v,w,t)$ (in already sorted order).
	This can be done in $\bigO(m)$~time.
	Afterwards, we can compute $\arrival(v,w,t,y)$ in $\bigO(y)$~time.
	Thus, $\bigO(m)$~time is needed overall to compute $a(v,w,t,y)$.
	
	For the computation of a single table entry $A_{R_i}[y]$,
	the value of $a(v_{i-1},v_i,A_{R_{i-1}}[y'],y-y')$ is computed for all $y' \leq y$.
	This is done for all $i \leq n$ and $y \leq x$.

	Hence, the overall running time is 
	$
	\bigO(n m x^2 + m \log m)
	$.
\end{proof}

Now \cref{thm:polyforest} follows directly from \cref{lemma:drw-table-correct} and \cref{lemma:drp_verif_running_time}.

\section{A Reduction Framework for \drp{}} \label{subsec:reduction_framework}
In this section, we investigate the computational hardness of \drp{} with a particular attention to parameterized hardness with respect to ``distance to forest'' parameters.
The goal is to lay out the ground for potential generalization of the algorithm presented in \cref{subsec:drp_verification}.
We introduce a new problem \mcpsat{} in \cref{sec:mpcsat} and design a polynomial-time reduction to \drp{}.
We will use this as an intermediate problem for reductions from \textsc{3-SAT} and \mcc{} in \cref{subsec:applications} to show \CNP{}-hardness and parameterized hardness results.

\subsection{\mcpsat{}}\label{sec:mpcsat}

The problem \mcpsat{} is a \textsc{Satisfiability} variant where the variables are partitioned into ``color classes'' and only one variable from each color may be set to true. Furthermore, we do not make any assumptions on the Boolean formula other than that all variables appear non-negated. Formally, the we define the problem as follows.

\problemdef{\mcpsat{} (\mcpsatshrt{})}
{Disjoint sets of variables $X_1, X_2, \ldots, X_n$ and a boolean formula $\Phi$ only consisting of positive literals and the operators $\wedge$ and $\vee$.}
{Is there a satisfying truth assignment for $\Phi$ where exactly one variable from each $X_i$ for $i \in [n]$ is true?}

We have the following theorem.

\begin{theorem} \label{cor:poly_red_dj_drp}
	\mcpsatshrt{} $\lepoly$ \drp{}.
\end{theorem}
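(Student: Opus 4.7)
The plan is to build a polynomial-time reduction sending an \mcpsatshrt{} instance $(X_1, \dots, X_n, \Phi)$ to a \drp{} instance $(\GG, s, z, x, \delta)$ such that $\Phi$ is satisfied by a choice of exactly one variable from each $X_i$ if and only if $\GG$ admits an $x$-delay-robust $(s,z)$-route. I would construct $\GG$ out of two layered parts. The first part is a chain of $n$ \emph{selector gadgets} $G_1, \dots, G_n$, one per color class, with entry $s = a_1$ and the exit $b_i$ of $G_i$ identified with the entry $a_{i+1}$ of $G_{i+1}$. Inside $G_i$, each variable $v \in X_i$ is represented by an internal ``variable path'' from $a_i$ to $b_i$ through a vertex labeled $v$. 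The timing within $G_i$ is slack enough that any single variable path can be chosen in the absence of delays, so every $(s,z)$-route through the whole chain picks exactly one variable per color class, thereby encoding an admissible truth assignment.

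The second part, attached after $G_n$, encodes the monotone formula $\Phi$ via gadgets built from its parse tree. An \emph{and}-gadget is essentially a series connection of its sub-gadgets, so that its success requires each sub-gadget to defeat the adversary; an \emph{or}-gadget is realized by parallel time arcs that the adversary can delay only with a small per-gate budget, so that its success requires at least one sub-gadget to defeat the adversary. At the leaves of the formula circuit, variable-input gadgets must check whether the selector layer chose the corresponding variable. This can be arranged by sprouting ``evidence'' time arcs from each variable path, so that only a traversed variable path produces the timing needed to survive its attached leaf gadget. The overall delay budget $x$ is then set to the sum of per-gate budgets over all internal gates, so that the adversary can launch every intended attack simultaneously, while $\delta$ is chosen large enough that a single delay on a targeted arc truly removes it from consideration for the route.

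The main obstacle I expect is confining the adversary to exactly these intended ``formula attacks.'' I would address this by scheduling different gadgets in disjoint time-label windows (so delays in one gadget cannot corrupt the timing inside another) and by making every arc not designated as an attack target redundant enough that more than $x$ delays would be needed to disable it. With these safeguards in place, correctness follows from two direct arguments. For the forward direction, given a satisfying assignment, the corresponding route defuses every adversary response, because at every \emph{or}-gate at least one surviving sub-gadget is provided by an \emph{or}-chain that bottoms out in a selected variable. For the backward direction, any delay-robust route induces a selection of one variable per color class, and if some sub-formula of $\Phi$ were unsatisfied by this selection, the adversary could cascade the attack upwards through the gadget tree to break the route, contradicting delay-robustness.
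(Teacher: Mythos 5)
Your skeleton matches the paper's: a chain of selection gadgets (one per color class, one internal path per variable) followed by a validation structure built from the parse tree of $\Phi$ with conjunction realized in series and disjunction in parallel. But the proposal leaves unspecified exactly the step that makes the reduction work, namely \emph{how a leaf gadget detects whether its variable was chosen in the selection layer}. Your answer --- ``sprouting evidence time arcs from each variable path'' so that ``only a traversed variable path produces the timing needed'' --- is not a construction, and the surrounding design choices make it hard to see how it could become one. The delay budget in \drp{} is global, not per-gate: the adversary may concentrate all $x$ delays on any single bottleneck. If, as you propose, $\delta$ is large enough that one delay ``removes an arc from consideration'' and every non-target arc is made redundant with more than $x$ parallel copies, then a target arc in a leaf gadget is either killable by the adversary (in which case it is killable also on the route corresponding to a satisfying assignment, breaking the forward direction) or redundant (in which case it never fails, breaking the backward direction). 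The only way out is for the distinguishing information to be carried by the \emph{arrival time} at the leaf gadget, and your ``disjoint time-label windows'' isolation actively cuts off that information flow. The paper resolves this with $\delta=1$, $x=2n-1$, and offsets $o_i=(2m+1)(i-1)$ so that selecting $x_{i,a}$ yields the Pareto profile ``arrive at $o_i+a$ under $2(i-1)$ delays, at $o_{i+1}-a$ under $2(i-1)+1$ delays''; the literal gadget for $x_{i,a}$ has arcs at exactly these two times, so it is harmless iff the matching variable was selected, and a mismatch lets the adversary force arrival at time $o_{i+1}$ with one extra delay (Lemmas~\ref{lemma:selection_gadgets_arr_time} and~\ref{lemma:arrival_time_unchanged_sat_lit}).

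A second, related omission: even with the correct timing mechanism, a mismatched literal does not immediately disconnect the route --- it only worsens the arrival profile by a bounded amount, since dummy arcs (your ``redundant'' arcs) still provide a connection. The paper needs a third stage of \emph{finalization gadgets} after the validation part, through which the adversary spends its remaining budget to cascade the one-step lateness forward until the connection genuinely breaks at the last gadget, where no dummy arc at time $o_{n+1}$ exists. Your backward-direction argument (``the adversary could cascade the attack upwards through the gadget tree'') asserts this amplification but provides no structure in which it could take place. Without the arrival-time encoding and the finalization stage, neither direction of the correctness claim goes through.
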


We describe the reduction behind \cref{cor:poly_red_dj_drp} first, and subsequently prove its correctness in a sequence of lemmas.

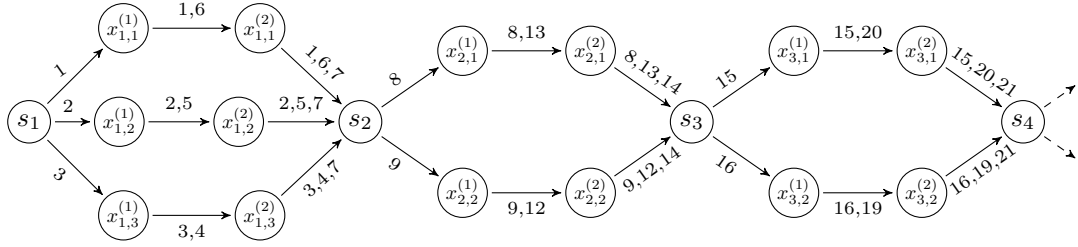
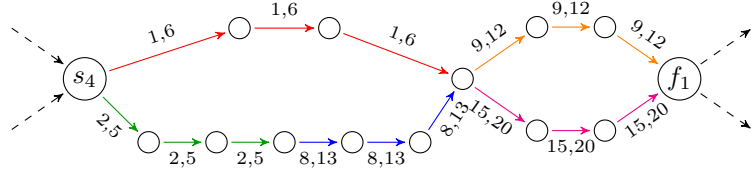
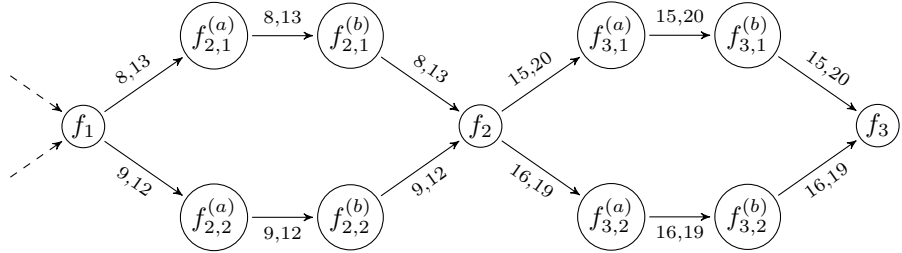
\begin{figure} [t]
	\centering
	\tikzstyle{alter}=[circle, minimum size=16pt, draw, inner sep=1pt] 
	\tikzstyle{majarr}=[draw=black]

	\begin{subfigure}[b]{\textwidth}
		\centering
		
		\begin{tikzpicture}[auto, >=stealth',shorten <=1pt, shorten >=1pt, state/.style={alter, scale=0.75, draw, minimum size=8cm}]
			\tikzstyle{majarr}=[draw=black,->,shorten <=1.5pt, shorten >=1.5pt]
			
			\node[alter] at (0,0) (s1) {$s_1$};
			
			\node[alter, scale=0.75, position=45:10ex from s1] (x1) {$x_{1,1}^{(1)}$};
			\node[alter, scale=0.75, position=0:10ex from x1] (x2) {$x_{1,1}^{(2)}$};
			
			\node[alter, scale=0.75, position=0:5ex from s1] (xxx1) {$x_{1,2}^{(1)}$};
			\node[alter, scale=0.75, position=0:8ex from xxx1] (xxx2) {$x_{1,2}^{(2)}$};

			\node[alter, scale=0.75, position=-45:10ex from s1] (xx1) {$x_{1,3}^{(1)}$};
			\node[alter, scale=0.75, position=0:10ex from xx1] (xx2) {$x_{1,3}^{(2)}$};		
			\node[alter, position=0:25ex from s1] (s2) {$s_2$};

			\draw[majarr] (s1) edge node[pos=0.4, anchor=south, rotate=35]{$\scriptstyle 1$} (x1);
			\draw[majarr] (s1) edge node[pos=0.4, anchor=south]{$\scriptstyle 2$} (xxx1);
			\draw[majarr] (s1) edge node[pos=0.4, anchor=north, rotate=-35]{$\scriptstyle 3$} (xx1);
			
			\draw[majarr] (x1) edge node[anchor=south]{$\scriptstyle 1, 6$} (x2);
			\draw[majarr] (xxx1) edge node[anchor=south]{$\scriptstyle 2, 5$} (xxx2);
			\draw[majarr] (xx1) edge node[anchor=north]{$\scriptstyle 3, 4$} (xx2);
			
			\draw[majarr] (x2) edge node[anchor=south, sloped]{$\scriptstyle 1, 6, 7$} (s2);				
			\draw[majarr] (xxx2) edge node[anchor=south, ]{$\scriptstyle 2, 5, 7$} (s2);
			\draw[majarr] (xx2) edge node[anchor=north, sloped]{$\scriptstyle 3, 4, 7$} (s2);

			\node[alter, scale=0.75, position=35:9ex from s2] (y1) {$x_{2,1}^{(1)}$};
			\node[alter, scale=0.75, position=0:9ex from y1] (y2) {$x_{2,1}^{(2)}$};
			
			\node[alter, scale=0.75, position=-35:9ex from s2] (yy1) {$x_{2,2}^{(1)}$};
			\node[alter, scale=0.75, position=0:9ex from yy1] (yy2) {$x_{2,2}^{(2)}$};		
			\node[alter, position=0:25ex from s2] (s3) {$s_3$};
			
			\draw[majarr] (s2) edge node[pos=0.4, anchor=south, sloped]{$\scriptstyle 8$} (y1);
			\draw[majarr] (s2) edge node[pos=0.4, anchor=north, sloped]{$\scriptstyle 9$} (yy1);
			
			\draw[majarr] (y1) edge node[anchor=south]{$\scriptstyle 8, 13$} (y2);
			\draw[majarr] (yy1) edge node[anchor=north]{$\scriptstyle 9, 12$} (yy2);
			
			\draw[majarr] (y2) edge node[anchor=south, sloped]{$\scriptstyle 8, 13, 14$} (s3);	
			\draw[majarr] (yy2) edge node[anchor=north, sloped]{$\scriptstyle 9, 12, 14$} (s3);

			\node[alter, scale=0.75, position=35:9ex from s3] (z1) {$x_{3,1}^{(1)}$};
			\node[alter, scale=0.75, position=0:9ex from z1] (z2) {$x_{3,1}^{(2)}$};
			
			\node[alter, scale=0.75, position=-35:9ex from s3] (zz1) {$x_{3,2}^{(1)}$};
			\node[alter, scale=0.75, position=0:9ex from zz1] (zz2) {$x_{3,2}^{(2)}$};		
			\node[alter, position=0:25ex from s3] (s4) {$s_4$};
			
			\draw[majarr] (s3) edge node[pos=0.4, anchor=south, sloped]{$\scriptstyle 15$} (z1);
			\draw[majarr] (s3) edge node[pos=0.4, anchor=north, sloped]{$\scriptstyle 16$} (zz1);
			
			\draw[majarr] (z1) edge node[anchor=south]{$\scriptstyle 15, 20$} (z2);
			\draw[majarr] (zz1) edge node[anchor=north]{$\scriptstyle 16, 19$} (zz2);
			
			\draw[majarr] (z2) edge node[anchor=south, sloped]{$\scriptstyle 15, 20, 21$} (s4);	
			\draw[majarr] (zz2) edge node[anchor=north, sloped]{$\scriptstyle 16, 19, 21$} (s4);

			\node[position=35:4ex from s4] (xnext) {};
			\node[position=-35:4ex from s4] (xxnext) {};
			\draw[majarr, dashed] (s4) edge (xnext);
			\draw[majarr, dashed] (s4) edge (xxnext);
		\end{tikzpicture}	
		
		\caption{Selection gadgets for the variable sets with $\abs{X_1} = 3$ and $\abs{X_2} = \abs{X_3} = 2$.}
		\label{fig:mcp_drp_sel_gadget}
	\end{subfigure}
	
	\begin{tikzpicture}
		\node[] at (0,0) (a) {};
	\end{tikzpicture}

	\begin{subfigure}[b]{\textwidth}
		\centering
		
		\begin{tikzpicture}[auto, >=stealth',shorten <=1pt, shorten >=1pt, state/.style={alter, scale=0.75, draw, minimum size=8cm}]
			\tikzstyle{majarr}=[draw=black,->,shorten <=1.5pt, shorten >=1.5pt]
			
			\node[alter] at (0,0) (z0) {$s_4$};
			
			\node[alter, scale=.5, position=-45:10ex from z0] (y11) {};			
			\node[alter, scale=.5, position=0:8ex from y11] (y12) {};
			\node[alter, scale=.5, position=0:8ex from y12] (z11) {};
			\node[alter, scale=.5, position=0:8ex from z11] (z12) {};
			\node[alter, scale=.5, position=0:8ex from z12] (z13) {};

			\node[alter, scale=.5] (x11) at ([shift=({2ex,10ex})]y12) {};
			\node[alter, scale=.5] (x12) at ([shift=({-2ex,10ex})]z12) {};

			\node[alter, scale=.5, position=0:60ex from z0] (i1) {};	

			\node[alter, scale=.5, position=35:12ex from i1] (a11) {};
			\node[alter, scale=.5, position=0:8ex from a11] (a12) {};

			\node[alter, scale=.5, position=-35:12ex from i1] (b11) {};
			\node[alter, scale=.5, position=0:8ex from b11] (b12) {};

			\node[alter, position=0:16ex from i1] (z1) {$f_1$};

			\draw[majarr, color=red] (z0) edge 
				node[anchor=south, color=black, sloped]{$\scriptstyle 1, 6$} (x11);
			\draw[majarr, color=red] (x11) edge 
				node[anchor=south, color=black]{$\scriptstyle 1, 6$} (x12);
			\draw[majarr, color=red] (x12) 
				edge node[anchor=south, color=black, sloped]{$\scriptstyle 1, 6$} (i1);

			\draw[majarr, color=green!60!black] (z0) edge 
				node[anchor=north, color=black, sloped]{$\scriptstyle 2, 5$} (y11);
			\draw[majarr, color=green!60!black] (y11) edge 
				node[anchor=north, color=black]{$\scriptstyle 2, 5$} (y12);
			\draw[majarr, color=green!60!black] (y12) edge 
				node[anchor=north, color=black]{$\scriptstyle 2, 5$} (z11);

			\draw[majarr, color=blue] (z11) edge 
				node[anchor=north, color=black]{$\scriptstyle 8, 13$} (z12);
			\draw[majarr, color=blue] (z12) edge 
				node[anchor=north, color=black]{$\scriptstyle 8, 13$} (z13);
			\draw[majarr, color=blue] (z13) edge 
				node[anchor=north, color=black, sloped]{$\scriptstyle 8, 13$} (i1);

			\draw[majarr, color=orange] (i1) edge 
				node[anchor=south, color=black, sloped]{$\scriptstyle 9, 12$} (a11);
			\draw[majarr, color=orange] (a11) edge 
				node[anchor=south, color=black]{$\scriptstyle 9, 12$} (a12);
			\draw[majarr, color=orange] (a12) edge 
				node[anchor=south, color=black, sloped]{$\scriptstyle 9, 12$} (z1);

			\draw[majarr, color=magenta] (i1) edge 
				node[anchor=north, color=black, sloped]{$\scriptstyle 15, 20$}(b11);
			\draw[majarr, color=magenta] (b11) edge 
				node[anchor=north, color=black]{$\scriptstyle 15, 20$}(b12);
			\draw[majarr, color=magenta] (b12) edge 
				node[anchor=north, color=black, sloped]{$\scriptstyle 15, 20$}(z1);

			\node[position=35:-12ex from z0] (xprev) {};
			\node[position=-35:-12ex from z0] (xxprev) {};
			
			\draw[majarr, dashed] (xprev) edge (z0);
			\draw[majarr, dashed] (xxprev) edge (z0);

			\node[position=35:6ex from z1] (xnext) {};
			\node[position=-35:6ex from z1] (xxnext) {};
			
			\draw[majarr, dashed] (z1) edge (xnext);
			\draw[majarr, dashed] (z1) edge (xxnext);
		\end{tikzpicture}	
		
		\caption{Validation gadgets for $\Phi = ({\color{red}x_{1,1}} \vee ({\color{green!60!black}x_{1,2}} \wedge {\color{blue}x_{2,1}})) \wedge ({\color{orange}x_{2,2}} \vee {\color{magenta}x_{3,1}})$.
		The time arcs belonging to a literal are highlighted in the corresponding color.} 
		\label{fig:mcp_drp_val_gadget}
	\end{subfigure}
	
	\begin{tikzpicture}
		\node[] at (0,0) (a) {};
	\end{tikzpicture}

	\begin{subfigure}[b]{\textwidth}
		\centering
		
		\begin{tikzpicture}[auto, >=stealth',shorten <=1pt, shorten >=1pt, state/.style={alter, scale=0.75, draw, minimum size=8cm}]
			\tikzstyle{majarr}=[draw=black,->,shorten <=1.5pt, shorten >=1.5pt]
			
			\node[alter] at (0,0) (zm) {$f_1$};
			
			\node[alter, position=35:9ex from zm] (x2a) {$f_{2,1}^{(a)}$};		
			\node[alter, position=-35:9ex from zm] (xx2a) {$f_{2,2}^{(a)}$};
			\node[alter, position=0:6ex from x2a] (x2b) {$f_{2,1}^{(b)}$};		
			\node[alter, position=0:6ex from xx2a] (xx2b) {$f_{2,2}^{(b)}$};

			\node[alter, position=-35:9ex from x2b] (f2) {$f_2$};
			
			\node[alter, position=35:9ex from f2] (x3a) {$f_{3,1}^{(a)}$};		
			\node[alter, position=-35:9ex from f2] (xx3a) {$f_{3,2}^{(a)}$};
			\node[alter, position=0:6ex from x3a] (x3b) {$f_{3,1}^{(b)}$};		
			\node[alter, position=0:6ex from xx3a] (xx3b) {$f_{3,2}^{(b)}$};
			
			\node[alter, position=-35:9ex from x3b] (f3) {$f_3$};

			\draw[majarr] (zm) edge node[anchor=south, sloped]{$\scriptstyle 8, 13$} (x2a);
			\draw[majarr] (zm) edge node[anchor=north, sloped]{$\scriptstyle 9, 12$} (xx2a);		
			\draw[majarr] (x2a) edge node[anchor=south]{$\scriptstyle 8, 13$} (x2b);
			\draw[majarr] (xx2a) edge node[anchor=north]{$\scriptstyle 9, 12$} (xx2b);	
			\draw[majarr] (x2b) edge node[anchor=south, sloped]{$\scriptstyle 8, 13$} (f2);		
			\draw[majarr] (xx2b) edge node[anchor=north, sloped]{$\scriptstyle 9, 12$} (f2);
			
			\draw[majarr] (f2) edge node[anchor=south, sloped]{$\scriptstyle 15, 20$} (x3a);
			\draw[majarr] (f2) edge node[anchor=north, sloped]{$\scriptstyle 16, 19$} (xx3a);		
			\draw[majarr] (x3a) edge node[anchor=south]{$\scriptstyle 15, 20$} (x3b);
			\draw[majarr] (xx3a) edge node[anchor=north]{$\scriptstyle 16, 19$} (xx3b);
			\draw[majarr] (x3b) edge node[anchor=south, sloped]{$\scriptstyle 15, 20$} (f3);		
			\draw[majarr] (xx3b) edge node[anchor=north, sloped]{$\scriptstyle 16, 19$} (f3);

			\node[position=35:-12ex from zm] (xprev) {};
			\node[position=-35:-12ex from zm] (xxprev) {};

			\draw[majarr, dashed] (xprev) edge (zm);
			\draw[majarr, dashed] (xxprev) edge (zm);
		\end{tikzpicture}		
		
		\caption{Finalization gadgets for $n=3$.}
		\label{fig:mcp_drp_final_gadget}
	\end{subfigure}
	
	\caption{An example temporal graph resulting from a \mcpsat{} reduction. Dummy time arcs are omitted. The instance has $n=3$ disjoint variable sets.}
	\label{fig:mcp_drp_all_gadgets}
\end{figure}

Let $I = ((X_1, X_2, \ldots, X_n), \Phi)$ be an instance of \mcpsatshrt{}.
We will construct a temporal graph $\GG = (V, E)$ and an instance $I' = (\GG, s,z, \delta, x)$ of \drp{} so that $I$ is a yes-instance of \mcpsatshrt{} if and only if $I'$ is a yes-instance of $\drp{}$. 
All time arcs in $\GG$ have a traversal time of~0.
Thus, we abbreviate time arcs as $3$-tuples $(v,w,t)$. 
In figures we omit the traversal time and label arcs only with their time step.

Let $X_i = \{x_{i,1}, x_{i,2}, \ldots, x_{i, \abs{X_i}}\}$ for all $i\in [n]$. 
Furthermore, let~$m := \max_i \abs{X_i}$ denote the largest cardinality of a variable set $X_i$.
The temporal graph consists of chained \emph{selection gadgets} for each variable set $X_i$, a recursively constructed \emph{validation gadget} and chained \emph{finalization gadgets} for each variable set $X_i$.
The selection gadgets are used to select the variable from $X_i$ that is assigned to true, for each $i \in [n]$.
Then the validation gadgets check whether the formula is satisfied under the selected truth assignment. 
If this is not the case, then a connection breaks at latest in the finalization gadgets and the target vertex can not be reached.
The gadgets use an offset $o_i := (2m+1) \cdot (i-1)$. 
We set the delay time to~$\delta = 1$ and the number of delays to~$ x =2\cdot n-1$. 
\cref{fig:mcp_drp_all_gadgets} shows examples for all gadget types.

\smallskip

\subparagraph*{Selection Gadgets.} 
The selection gadgets are used to select one variable $x_{i,a}$ from each set~$X_i$.
For each set $X_i$, we add a vertex~$s_i$ to~$V$ and one additional vertex~$s_{n+1}$. 
For each set~$X_i$ and each variable $x_{i,a} \in X_i$, we add the vertices $x_{i,a}^{(1)}$ and $x_{i,a}^{(2)}$ to $V$.
Moreover, we add the following time arcs to~$E$ so that there is one route from $s_i$ to $s_{i+1}$ for this variable $x_{i,a}$:
\[
s_i \xrightarrow{o_i+a} x_{i,a}^{(1)} \xrightarrow{o_i+a, o_{i+1}-a} x_{i,a}^{(2)} \xrightarrow{o_i+a, o_{i+1}-a, o_{i+1}} s_{i+1}
\]
Taking this sub-route corresponds to setting the variable $x_{i,a}$ to true.
Additionally, for each of the three underlying arcs we add a \emph{dummy time arc} for each time step $t \in [o_{i+1}-1]$.
If the sub-route $s_i \rightarrow x_{i,a}^{(1)} \rightarrow x_{i,a}^{(2)} \rightarrow s_{i+1}$ is chosen,
then the worst-case arrival time from~$s_1$ to~$s_{i+1}$ is $o_i+a$ for $2\cdot i$ delays and $o_{i+1}-a = o_i + 2\cdot m +1 - a$ for $2\cdot i + 1$ delays.
Any sub-route is a Pareto optimum: while one arrives earlier for $2\cdot i$ delays, another arrives earlier for $2\cdot i + 1$ delays.
We will give a formal proof in \cref{lemma:selection_gadgets_arr_time}.
An example for chained selection
gadgets can be seen in \cref{fig:mcp_drp_sel_gadget}.

\smallskip

\subparagraph*{Validation Gadgets.}
The validation gadgets are used to check whether the formula~$\Phi$ is satisfied under the selected truth assignment. 
We will add a fresh vertex~$f_1$ to~$V$ which is the start of the validation gadgets.
The validation gadget for $\Phi$ will be constructed with~$s_{n+1}$ as a start vertex and $f_1$ as an end vertex.
Given a start vertex $v$ and an end vertex $w$, we can recursively construct the validation gadget for a formula $\Phi$ in the following way: 
\begin{enumerate}
\item
$\Phi = x_{i,a}$ is a single positive literal. 

We add two fresh vertices $\ell_{i,a}^{(1)}$ and $\ell_{i,a}^{(2)}$ to $V$.
We add the following time arcs, so that there is a connection from $v$ to $w$:
\[
	v \xrightarrow{o_i + a, o_{i+1} - a} \ell_{i,a}^{(1)} \xrightarrow{o_i + a, o_{i+1} - a} \ell_{i,a}^{(2)} \xrightarrow{o_i + a, o_{i+1} - a} w
\]
Additionally for all three underlying arcs we add a \emph{dummy time arc} for each time step~$t \in [o_{n+1}-1] \setminus [o_i, o_{i+1}-1]$.
We call this constructed part of the validation gadget a \emph{literal gadget}.
If the variable $x_{i,a}$ has been selected in the selection gadgets, then traversing this literal gadget does not affect the worst-case arrival time with respect to the number of delays.
However, if $x_{i,a}$ has not been selected there is a delay that breaks
the connection at latest in the finalization gadgets.

\item
$\Phi = \Phi_1 \wedge \Phi_2 \wedge \ldots \wedge \Phi_k$ is a conjunction of $k$ sub-formulae.

We add a fresh vertex $c_i$ to $V$ for all $i \in [k-1]$.
Then the validation gadgets for all sub-formulae~$\Phi_i$ are constructed, with $c_{i-1}$ as the start and $c_{i}$ as the end vertex, where~$c_{0} = v$ and $c_{k} = w$.
Thus, the gadgets for the sub-formulae are connected in a row, and to traverse the temporal graph from $v$ to $w$ all gadgets for the sub-formulae have to be traversed.

\item
$\Phi = \Phi_1 \vee \Phi_2 \vee \ldots \vee \Phi_k$ is a disjunction of $k$ sub-formulae.

We construct the validation gadgets for all sub-formulae $\Phi_i$ with $v$ as the start and $w$ as the end vertex.
Thus, the gadgets for the sub-formulae are connected in parallel, and to traverse the temporal graph from $v$ to $w$ one gadget for a sub-formulae has to be traversed.
\end{enumerate}
An example for a valid gadget can be seen in \cref{fig:mcp_drp_val_gadget}.

\smallskip

\subparagraph*{Finalization Gadgets.}
The finalization gadgets are similar to the selection gadgets for all sets $X_2$ to $X_n$.
For each variable set $X_i$ for $i \in [2,n]$ we add a vertex $f_i$ to $V$.
For each variable $x_{i,a} \in X_i$ we add the vertices $f_{i,a}^{(1)}$ and $f_{i,a}^{(2)}$ to $V$ and add the following time arcs:
\[
f_{i-1} \xrightarrow{o_i + a, o_{i+1}-a} f_{i,a}^{(1)} \xrightarrow{o_i + a, o_{i+1}-a} f_{i,a}^{(2)} \xrightarrow{o_i + a, o_{i+1}-a} f_i
\]
Again for all three underlying arcs and each time step $t \in [o_{n+1}-1] \setminus [o_i, o_{i+1}-1]$ we add a dummy time arc.
An example for finalization gadgets can be seen in \cref{fig:mcp_drp_final_gadget}.

The start and end vertices for our \drp{}-instance are $s_1$ and $f_n$, respectively.

It remains to prove that the constructed \drp{} instance is equivalent to the given \mcpsat{} instance.
To this end we will first show that the worst-case arrival time in $s_{n+1}$ (after traversing all selection gadgets) for $2 \cdot (i - 1)$ delays and $2 \cdot (i - 1) + 1$ delays is only affected by the path taken in the $i$-th selection gadget.

\begin{lemma}
 \label{lemma:selection_gadgets_arr_time}
	Let $P$ be a route from $s_1$ to $s_{n+1}$. 
	If the vertices $x_{i,a}^{(1)}$ and $x_{i,a}^{(2)}$ are traversed in $P$, then the worst-case arrival time in $s_{n+1}$ is $o_i + a$ for $2\cdot (i-1)$ delays and~$o_{i+1} - a$ for $2\cdot (i-1) + 1$ delays.
\end{lemma}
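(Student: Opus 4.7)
The plan is to prove the lemma by induction on $i$, with a strengthened inductive hypothesis that also asserts the worst-case arrival at $s_i$ (not just at $s_{n+1}$) using $2(i-1)$ delays is exactly $o_i$. The intuition is that the adversary's optimal strategy places exactly two delays in each of the gadgets $1, \ldots, i-1$ (on the first real arc and on the middle real arc timed at $o_j - a_{j-1}$), which successively pushes the arrival at $s_j$ from $o_{j-1}$ to $o_j$; once this prefix arrival is established, the traversal of gadget $i$ with the remaining delays yields the claimed arrival at $s_{n+1}$.

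For the base case $i = 1$, with zero delays the real arcs of gadget~$1$ give direct arrival $o_1 + a = a$ at $s_2$, and since this time lies strictly below $o_j$ for every $j \geq 2$, the dummy arcs of the later gadgets propagate the same arrival all the way to $s_{n+1}$. With one delay, the adversary can only gain by delaying the single real arc of the first hop $s_1 \to x_{1,a}^{(1)}$, which forces the route to continue via the middle and last real arcs at time $o_2 - a$, yielding arrival $o_2 - a$ at $s_2$ and, via subsequent dummies, at $s_{n+1}$.

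In the inductive step, I combine the strengthened hypothesis at index $i-1$ with two delays placed inside gadget $i-1$ to force entry at $s_i$ at time exactly $o_i$, then analyze gadget $i$ in isolation. Within its window $[o_i, o_{i+1}]$ only the three real-arc times $o_i + a$, $o_{i+1} - a$ and $o_{i+1}$ are available, so a short case analysis on the number $k$ of delays placed inside gadget $i$ gives worst-case arrival $o_i + a$ at $s_{i+1}$ for $k = 0$, $o_{i+1} - a$ for $k = 1$ (the adversary delays the first real arc, forcing the middle and last real arcs at $o_{i+1} - a$), and $o_{i+1}$ for $k = 2$ (the adversary additionally delays the middle real arc, forcing the last real arc at $o_{i+1}$). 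In the first two cases the arrival at $s_{i+1}$ is strictly below $o_{i+1}$, so the dummies of gadgets $i+1, \ldots, n$ carry the route onward without further delay, giving the two claimed values for $2(i-1)$ and $2(i-1)+1$ delays.

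The main obstacle is the optimality direction, namely showing that no alternative distribution of the same number of delays yields a later arrival at $s_{n+1}$. I would establish this with a potential argument. Giving fewer than two delays to some early gadget $j$ causes entry into gadget $j+1$ at a time strictly below $o_{j+1}$; in that early regime the dummy arcs densely fill every time step below $o_{j+1}$, so each subsequent delay can push the arrival by at most one, far below the jump of $2m+1$ that a full delay-pair produces in the canonical schedule. Conversely, placing three or more delays on a single gadget is wasteful, because once the route has exited via the last real arc at $o_{j+1}$, this time coincides with the next gadget's dummy range and cannot be pushed further without additional waste. Together these observations pin the adversary's optimum to the two-per-gadget schedule and complete the lower bound, yielding the exact arrival times in the lemma.
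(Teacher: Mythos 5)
Your proof follows essentially the same route as the paper's: induction on the gadget index with the strengthened hypothesis that $2(i-1)$ worst-case delays force arrival at $s_i$ at exactly $o_i$, a case analysis on the number of delays spent inside gadget $i$ (yielding $o_i+a$, $o_{i+1}-a$, $o_{i+1}$ for $0$, $1$, $2$ extra delays), and a domination argument showing that delays spent on dummy time arcs (each gaining at most $\delta=1$) or a third delay inside one gadget are never strictly better than the canonical two-per-gadget schedule. The only slight imprecision is the phrase ``far below the jump of $2m+1$'': the relevant comparison is $o_j+a+1 \le o_{j+1}-a$, which can hold with equality when $a=m$, but ``at most equal'' still suffices for the exchange argument exactly as in the paper's Cases 1 and 2.
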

\begin{proof}
	We prove this by induction over the prefix-path from $s_1$ to $s_i$.
	
	\proofsubparagraph{Base case ($i = 1$):} 
	Since the path starts in $s_1$ no delays could have occurred. 
	The starting time step is $o_1 = (2m + 1) \cdot(1-1) = 0$.

	\proofsubparagraph{Induction Hypothesis:} 
	A path $P$ from $s_1$ to $s_{i}$ can be reached at time step $o_{i}$ with $2 \cdot (i-1)$ delays and $o_{i} + 1$ with $2 \cdot (i-1) + 1$ delays. 
	For any $j < i$ the arrival times for $2 \cdot (j-1)$ and $2 \cdot (j-1)+1$ delays are $o_j + a$ and $o_{j+1}-a$ respectively depending on which vertices $x_{j,a}^{(1)}$ and $x_{j,a}^{(2)}$ are taken on the path $P$ in the $j$-th variable gadget.
	
	\proofsubparagraph{Induction Step:} 
	First we show, that delaying a dummy time arc is never a strict worst-case. 
	Considering the $i$-th selection gadget there are dummy time arcs for all time steps $t \in [o_{i}-1]$.
	Thus, they are only available if less than $2 \cdot (i-1)$ delays occurred in the previous selection gadgets since this leads to an arrival of at most $o_{j+1}-a \in [o_{i}-1]$ for any $j < i$ and $a \in [m]$. 
	
	\emph{Case 1.} Assume there was a worst-case delay of size $2 \cdot (j-1)$ for some $j<i$.
	The worst-case arrival time in the $i$-th variable gadget is $o_j + a$ for some $a \in [m]$.
	Delaying the dummy time arc at $o_j + a$ will result in an arrival time at $o_j + a + 1$, however, the worst-case arrival time in the $i$-th selection gadget was already $o_{j+1}-a$ for $2 \cdot (j-1)+1$ delays which is at least as late since $o_{j+1}-a = o_j + 2m + 1 - a \ge o_j + a + 1$ for any $a \in [m]$. 
	
	\emph{Case 2.} Assume there was a worst-case delay of size $2 \cdot (j-1) + 1$ for some $j<i$.
	The worst-case arrival time in the $i$-th variable gadget is $o_{j+1} - a$ for some $a \in [m]$.
	Delaying the dummy time arc at $o_{j+1} - a$ will result in an arrival time at $o_{j+1} - a + 1$, however, the worst-case arrival time in the $i$-th selection gadget was already at least $o_{j+1}$ for $2 \cdot (j-1)+1$ delays which is at least as late for any $a \in [m]$.
	
	Thus, in order to have an affect on the arrival time with delays in the i-th variable gadget $2 \cdot (i-1)$ delays need to have occurred yielding an arrival time of $o_{i}$ so that no dummy time arcs are available.
	Note, that although a worst-case delay of size $2 \cdot (i-1) + 1$ leads to an arrival time of $o_i + 1$, the outgoing time arcs at $s_i$ are not before time~$o_i+1$, and thus lead to an arrival at time~$o_i+1$ also for $2 \cdot (i-1)$ delays in later vertices. 
	Thus, we only need to consider a delay of size $2 \cdot (i-1)$.
	The following table shows how the arrival times for a worst-case delay when traversing the vertices that correspond to the variable $x_{i,a}$.
	Hence, the time arcs 
	$$s_i \xrightarrow{o_i+a} x_{i,a}^{(1)} \xrightarrow{o_i+a,o_{i+1}-a} x_{i,a}^{(2)} \xrightarrow{o_i+a,o_{i+1}-a, o_{i+1}} s_{i+1}$$
	are used.
	It can be easily seen that there are no strictly worse ones.
	Delayed arcs are marked in red (delays are already applied) and non delayed time arcs that are taken are marked green.
	\begin{center}
	\begingroup
	\setlength{\tabcolsep}{5pt} 
	\renewcommand{\arraystretch}{1.5} 
		\begin{tabular}{|c|c|c|}
			\hline
			\# Delays & Worst-Case Delay & {\small Arrival Time} \\
			\hline
			{\small$2 \cdot (i-1)$} & $s_i \xrightarrow{{\color{green!50!black}o_i+a}} x_{i,a}^{(1)} \xrightarrow{{\color{green!50!black}o_i+a},o_{i+1}-a} x_{i,a}^{(2)} \xrightarrow{{\color{green!50!black}o_i+a},o_{i+1}-a,o_{i+1}} s_{i+1}$ & $o_i + a$ \\
			\hline
			{\small$2 \cdot (i-1)$ + 1} & $s_i \xrightarrow{{\color{red}o_i+a+1}} x_{i,a}^{(1)} \xrightarrow{o_i+a,{\color{green!50!black}o_{i+1}-a}} x_{i,a}^{(2)} \xrightarrow{o_i+a,{\color{green!50!black}o_{i+1}-a},o_{i+1}} s_{i+1}$ & $o_{i+1} - a$ \\
			\hline
			{\small$2 \cdot i$} & $s_i \xrightarrow{{\color{red}o_i+a+1}} x_{i,a}^{(1)} \xrightarrow{o_i+a,{\color{red}o_{i+1}-a+1}} x_{i,a}^{(2)} \xrightarrow{o_i+a,o_{i+1}-a,{\color{green!50!black}o_{i+1}}} s_{i+1}$ & $o_{i+1}$ \\
			\hline
			{\small$2 \cdot i + 1$} & $s_i \xrightarrow{{\color{red}o_i+a+1}} x_{i,a}^{(1)} \xrightarrow{o_i+a,{\color{red}o_{i+1}-a+1}} x_{i,a}^{(2)} \xrightarrow{o_i+a,o_{i+1}-a,{\color{red}o_{i+1}+1}} s_{i+1}$ & $o_{i+1} + 1$ \\
			\hline		
		\end{tabular}
	\endgroup
	\end{center}
	
	For the last $n$-th variable gadget if $2 \cdot (n-1)$ delays occurred there is only one delay remaining (since the number of allowed delays $x=2\cdot n-1$).
	Thus, if $x_{n,a}^{(1)}$ and $x_{n,a}^{(2)}$ are traversed, an arrival in $z_0$ for $2 \cdot (n-1)$ and $2 \cdot (n-1)+1$ delays is possible at time steps $o_n + a$ and $o_{n+1} - a$ respectively.
\end{proof}

Additionally, we observe that if there is a satisfying assignment for $\Phi$, then there is a path through the validation gadgets that only traverses sub-gadgets corresponding to variables from that satisfying assignment:

\begin{observation} \label{obs:sat_ass_true_lits}
	Let there be a truth assignment where exactly one variable from each $X_i$ for all $i \in [n]$ is set to true.
	This assignment satisfies $\Phi$ if and only if there is a temporal path from $s_n$ to $f_1$ (through the validation gadgets) that only traverses validation sub-gadgets that belong to variables chosen in that assignment.
\end{observation}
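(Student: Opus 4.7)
The plan is a structural induction on~$\Phi$ that mirrors the recursive construction of the validation gadget. Each of the three construction rules (literal, conjunction, disjunction) gives rise to one case of the induction.

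\emph{Base case.} If $\Phi = x_{i,a}$, the validation gadget is a single literal gadget, and the only way to travel from its start vertex~$v$ to its end vertex~$w$ passes through the sub-gadget of~$x_{i,a}$ itself. Because the literal gadget carries dummy time arcs at all time steps $t \in [o_{n+1}-1] \setminus [o_i, o_{i+1}-1]$ on top of the two ``real'' time labels, a temporal $(v,w)$-path always exists. Hence the only sub-gadget used belongs to~$x_{i,a}$, and this is a chosen variable iff $x_{i,a}$ is set to true iff $\Phi$ is satisfied by the assignment.

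\emph{Conjunctive step.} If $\Phi = \Phi_1 \wedge \cdots \wedge \Phi_k$, the sub-formula gadgets are chained at the intermediate vertices $c_1,\ldots,c_{k-1}$, so any $(v,w)$-temporal path decomposes uniquely into a $(c_{j-1},c_j)$-temporal path through each~$\Phi_j$'s gadget. A path that only traverses chosen-variable sub-gadgets therefore exists in the full gadget iff such a restricted path exists in every~$\Phi_j$'s gadget, which by the induction hypothesis is equivalent to every~$\Phi_j$ being satisfied, i.e.\ to $\Phi$ being satisfied.

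\emph{Disjunctive step.} If $\Phi = \Phi_1 \vee \cdots \vee \Phi_k$, the sub-gadgets share~$v$ and~$w$ and are connected in parallel, so a $(v,w)$-temporal path uses exactly one of them. A restricted such path exists iff at least one~$\Phi_j$'s gadget admits a restricted $(v,w)$-temporal path, which by the induction hypothesis is equivalent to some~$\Phi_j$ being satisfied, i.e.\ to $\Phi$ being satisfied.

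The one thing to verify carefully is that temporal feasibility never obstructs the induction at the conjunctive step: the arrival time at each intermediate vertex~$c_j$ must be compatible with the outgoing time arcs of the next sub-gadget. This is where the dummy time arcs do the essential work --- they provide a usable arc at every time step in $[o_{n+1}-1]$ outside each literal's native window, so time-respecting concatenation is always possible regardless of the order in which the sub-gadgets are chained together. With this observation in hand, no further global argument about arrival times is needed, and the structural induction closes.
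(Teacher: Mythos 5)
Your proof is correct and follows essentially the same route as the paper, which justifies the observation by exactly this case analysis on the recursive construction of the validation gadget (a single literal gadget, serial chaining for conjunctions, parallel branches for disjunctions). The only addition is your explicit check that the dummy time arcs make serial concatenation time-feasible --- a detail the paper leaves implicit here and treats more carefully in its subsequent arrival-time lemmas, but which does no harm.
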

This directly follows from the construction of the validation gadgets and the semantics of a boolean formula:
If $\Phi$ is a single literal, then the validation gadget is a single literal gadget. 
If $\Phi$ is a conjunction, then the sub-gadgets for the sub-formulae are connected in a row, so that all need to be traversed. 
If $\Phi$ is a disjunction, then the sub-gadgets for the sub-formulae are connected in parallel, so that one of them needs to be traversed.

Now we will look at the literal gadgets that are traversed after the selection gadgets.
If only literal gadgets are traversed that correspond to variables selected in the selection gadgets, then it will not affect the worst-case arrival time with respect to the number of delays.

\begin{lemma}
 \label{lemma:arrival_time_unchanged_sat_lit}
 	Let $P$ be a route from $s_1$ to any vertex~$z$ of any literal gadget.
 	Suppose for every literal gadget
	\[
	v \xrightarrow{o_i + a, o_{i+1} - a} \ell_{i,a}^{(1)} \xrightarrow{o_i + a, o_{i+1} - a} \ell_{i,a}^{(2)} \xrightarrow{o_i + a, o_{i+1} - a} w
	\]
	traversed by~$P$
	that $P$~also traverses the vertex $x_{i,a}$ corresponding to that literal.
	Then, for any number of delays, the worst-case arrival times of~$P$ at~$z$ and at~$s_{n+1}$ are equal.
\end{lemma}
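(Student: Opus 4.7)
The plan is to proceed by induction on the number of literal-gadget vertices that lie on $P$ strictly between $s_{n+1}$ and $z$; the base case $z = s_{n+1}$ is immediate, and in the inductive step we peel off one more arc of a literal gadget by showing that traversing it preserves the worst-case arrival time for every delay count.

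The key structural observation is that in the literal gadget for some selected $x_{i,a}$, real time arcs occur only at the two time steps $o_i + a$ and $o_{i+1} - a$ within the window $[o_i, o_{i+1}-1]$, while dummy arcs fill every other time step in $[1, o_{n+1}-1]$. By \cref{lemma:selection_gadgets_arr_time}, these two special time steps coincide exactly with the worst-case arrival times that the adversary can enforce at $s_{n+1}$ by spending $2(i-1)$ or $2(i-1)+1$ delays, given that $x_{i,a}$ is selected. This alignment is the reason the literal gadget acts transparently.

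For the inductive step, let $u \to u'$ be the next arc of $P$ inside a literal gadget for $x_{i,a}$, and assume by induction that the worst-case arrival at $u$ for $y$ delays equals $T(y)$, the worst-case arrival at $s_{n+1}$. For the lower bound on the worst-case arrival at $u'$, the adversary may simply place all $y$ delays before $u$ so that arrival at $u$ is $T(y)$; then either a real arc at $o_i+a$ or $o_{i+1}-a$ (matching the pattern of $T(y)$) or a dummy arc carries the route to $u'$ with no further wait, giving arrival exactly $T(y)$. For the upper bound, I would split any delay set of size $y$ into $y_1$ delays before $u$ and $y_2 = y - y_1$ delays on the arc $u \to u'$; by induction, arrival at $u$ is at most $T(y_1)$, and a short case analysis on the position of $T(y_1)$ relative to $o_i+a$ and $o_{i+1}-a$ should show that the resulting arrival at $u'$ is bounded by $T(y_1 + y_2) = T(y)$.

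The main obstacle I expect is the bookkeeping in the upper-bound case analysis: one must verify that shifting delays out of the earlier selection gadgets and into the current literal-gadget arc cannot produce a strictly later arrival at $u'$ than the worst case obtained by confining all delays to the selection gadgets. This synchronization hinges on the hypothesis that $P$ only enters literal gadgets for already-selected variables, so that the real-arc times $o_i+a$ and $o_{i+1}-a$ of the literal gadget match precisely the two worst-case arrival times at $s_{n+1}$ granted by \cref{lemma:selection_gadgets_arr_time}; if any literal gadget for an \emph{unselected} $x_{i,a}$ were traversed, its real arcs at $o_i + a$ would fall on dummy-free time slots for unrelated arrival patterns, breaking the alignment and invalidating the upper bound.
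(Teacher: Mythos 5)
Your proposal is correct and follows essentially the same route as the paper's proof: an induction along the route (the paper peels off one whole literal gadget per step where you peel off one arc, but the decomposition is the same), combined with the observation that the real arcs at times $o_i+a$ and $o_{i+1}-a$ align exactly with the worst-case arrival times at $s_{n+1}$ for $2(i-1)$ and $2(i-1)+1$ delays guaranteed by \cref{lemma:selection_gadgets_arr_time}, and a case analysis showing that shifting delays into the literal gadget never beats concentrating them in the selection gadgets. The upper-bound bookkeeping you flag as the main obstacle is precisely the case distinction the paper carries out (delay budget $2(i-1)$, $2(i-1)+1$, or neither, with dummy arcs absorbing the latter case).
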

\begin{proof}
	Assume $v, \ell_{i,a}^{(1)}, \ell_{i,a}^{(2)}, w$ as above are the vertices of the last literal gadget traversed by~$P$.
	Without loss of generality we assume $z = w$.
	By induction on the number of literal gadgets,
	we may assume that $P$ reaches~$v$~and~$s_{n+1}$ at the same time (for any number of delays).
	Thus, due to \cref{lemma:selection_gadgets_arr_time} the arrival time in $v$ is $o_j + a_j$ for $2 \cdot (j-1)$ delays and $o_{j+1} - a_j$ for $2 \cdot (j-1) + 1$ delays for all $j \in [n]$ where $x_{j, a_j}$ is the selected variable from $X_j$, and $a_i = a$ due to our assumption. 
	
	For all delays not of size $2 \cdot (i-1)$ or $2 \cdot (i-1) + 1$ there is a dummy time arc in the literal gadget for the arrival time in $v$, and thus a delay is not a strict worst-case. (see proof of \cref{lemma:selection_gadgets_arr_time}).
	
	For a delay of size $2 \cdot (i-1)$ the arrival time in $v$ is $o_i + a$.
	By applying one additional delay in the literal gadget the arrival in $\ell_{i,a}^{(1)}$ is delayed to $o_i + a + 1$ enforcing to take the time arc at $o_{i+1} - a$ to reach $\ell_{i,a}^{(2)}$.
	However, this is exactly the worst-case arrival time in $v$ with $2 \cdot (i-1) + 1$ delays. 
	Thus, if the additional delay occurred in the selection gadget, then the arrival time would be as late.
	By applying a second additional delay when taking the time arc at $o_{i+1} - a$ to reach $\ell_{i,a}^{(2)}$ one is enforced to take the next dummy time arc $\ell_{i,a}^{(2)} \xrightarrow{o_{i+1}} w$ to reach $w$.
	However, this is even a better arriving time compared to the case that $2 \cdot i$ delays have occurred in the selection gadgets.
	
	For a delay of size $2 \cdot (i-1) + 1$ the arrival time in $v$ is $o_{i+1} - a$.
	By applying one additional delay in the literal gadget the arrival in $\ell_{i,a}^{(1)}$ is delayed to $o_{i+1} - a + 1$ enforcing to take the time arc at $o_{i+1}$ to reach $\ell_{i,a}^{(2)}$.
	However, this is even a better arriving time compared to the case that $2 \cdot i$ have occurred in the selection gadgets. 
\end{proof}

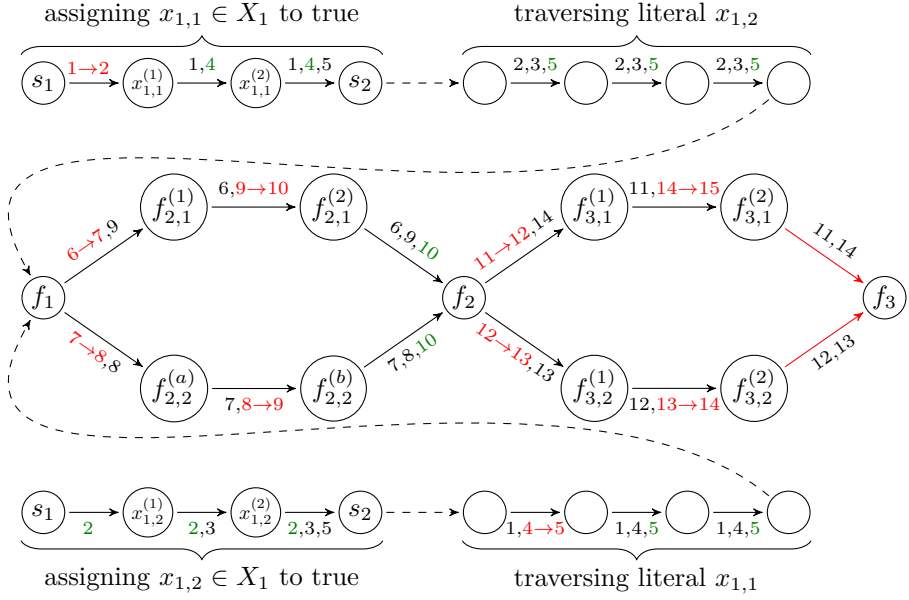
\begin{figure} [t]
	\centering
	\tikzstyle{alter}=[circle, minimum size=16pt, draw, inner sep=1pt] 
	\tikzstyle{majarr}=[draw=black]
	
	\hspace*{-4em}{
	\begin{tikzpicture}[auto, >=stealth',shorten <=1pt, shorten >=1pt, state/.style={alter, draw}, decoration={brace,amplitude=8, raise=4pt}]
		\tikzstyle{majarr}=[draw=black,->,shorten <=1.5pt, shorten >=1.5pt]

		\node[alter] (s1) {$s_1$};
		
		\node[alter, scale=0.75, right = 5ex of s1] (x11) {$x_{1,1}^{(1)}$};
		\node[alter, scale=0.75, right = 5ex of x11] (x12) {$x_{1,1}^{(2)}$};
		\node[alter, right = 5ex of x12] (s2) {$s_2$};

		\node[alter, right = 7ex of s2] (z0) {}; %
		
		\draw[majarr] (s1) edge node[pos=0.4, anchor=south]{$\scriptstyle \color{red}1 \rightarrow 2$} (x11);
		
		\draw[majarr] (x11) edge node[anchor=south]{$\scriptstyle 1, {\color{green!50!black}4}$} (x12);
		
		\draw[majarr] (x12) edge node[anchor=south]{$\scriptstyle 1, {\color{green!50!black}4}, 5$} (s2);	

		\draw [decorate] ([xshift=-3pt]s1.north west) -- node[anchor=south, yshift=0.4cm]{assigning $x_{1,1} \in X_1$ to true} ([xshift=3pt]s2.north east);

		\draw[majarr, dashed] (s2) edge (z0);		
		
		\node[alter, right = 5ex of z0] (c1) {}; %
		\draw[majarr] (z0) edge node[anchor=south]{$\scriptstyle 2,3, {\color{green!50!black}5}$} (c1);

		\node[alter, right = 5ex of c1] (c12) {}; %
		\draw[majarr] (c1) edge node[anchor=south]{$\scriptstyle 2,3, {\color{green!50!black}5}$} (c12);

		\node[alter, right = 5ex of c12] (z1) {}; %
		\draw[majarr] (c12) edge node[anchor=south]{$\scriptstyle 2,3, {\color{green!50!black}5}$} (z1);		

		\node[alter, below = 15ex of s1] (zm) {$f_1$};
		\draw[majarr, dashed] (z1) edge[out=220, in=120] (zm);		
		
		\draw [decorate] ([xshift=-3pt]z0.north west) -- node[anchor=south, yshift=0.4cm]{traversing literal $x_{1,2}$} ([xshift=3pt]z1.north east);
		
		\node[alter, position=35:9ex from zm] (x2a) {$f_{2,1}^{(1)}$};		
		\node[alter, position=0:8ex from x2a] (x2b) {$f_{2,1}^{(2)}$};
		\node[alter, position=-35:9ex from zm] (xx2a) {$f_{2,2}^{(a)}$};
		\node[alter, position=0:8ex from xx2a] (xx2b) {$f_{2,2}^{(b)}$};
		
		\node[alter, position=-35:9ex from x2b] (f2) {$f_2$};
		
		\node[alter, position=35:9ex from f2] (x3a) {$f_{3,1}^{(1)}$};
		\node[alter, position=0:8ex from x3a] (x3b) {$f_{3,1}^{(2)}$};
		\node[alter, position=-35:9ex from f2] (xx3a) {$f_{3,2}^{(1)}$};
		\node[alter, position=-0:8ex from xx3a] (xx3b) {$f_{3,2}^{(2)}$};
		
		\node[alter, position=-35:9ex from x3b] (f3) {$f_3$};
						
		\draw[majarr] (zm) edge node[anchor=south, rotate=35]{$\scriptstyle {\color{red} 6 \rightarrow 7},9$} (x2a);
		\draw[majarr] (zm) edge node[anchor=north, rotate=-35]{$\scriptstyle {{\color{red} 7 \rightarrow 8}, 8}$} (xx2a);		
		\draw[majarr] (x2a) edge node[anchor=south]{$\scriptstyle 6,{\color{red} 9 \rightarrow 10}$} (x2b);		
		\draw[majarr] (x2b) edge node[anchor=south, rotate=-35]{$\scriptstyle 6, 9, {\color{green!50!black} 10}$} (f2);
		\draw[majarr] (xx2a) edge node[anchor=north]{$\scriptstyle 7, {\color{red}8 \rightarrow 9}$} (xx2b);
		\draw[majarr] (xx2b) edge node[anchor=north, rotate=35]{$\scriptstyle 7, 8, {\color{green!50!black}10}$} (f2);
		
		\draw[majarr] (f2) edge node[anchor=south, rotate=35]{$\scriptstyle {\color{red} 11 \rightarrow 12},14$} (x3a);
		\draw[majarr] (f2) edge node[anchor=north, rotate=-35]{$\scriptstyle {{\color{red} 12 \rightarrow 13}}, 13$} (xx3a);
		\draw[majarr] (x3a) edge node[anchor=south]{$\scriptstyle 11,{\color{red} 14 \rightarrow 15}$} (x3b);		
		\draw[majarr, color=red] (x3b) edge node[anchor=south, rotate=-35]{$\scriptstyle \color{black} 11, 14$} (f3);
		\draw[majarr] (xx3a) edge node[anchor=north]{$\scriptstyle 12, {\color{red} 13 \rightarrow 14}$} (xx3b);
		\draw[majarr, color=red] (xx3b) edge node[anchor=north, rotate=35]{$\scriptstyle \color{black} 12, 13$} (f3);

		\node[alter, below=15ex of zm] (bs1) {$s_1$};
		
		\node[alter, scale=0.75, right = 5ex of bs1] (bx11) {$x_{1,2}^{(1)}$};
		\node[alter, scale=0.75, right = 5ex of bx11] (bx12) {$x_{1,2}^{(2)}$};
		\node[alter, right = 5ex of bx12] (bs2) {$s_2$};
		
		\node[alter, right = 7ex of bs2] (bz0) {}; %
		
		\draw[majarr] (bs1) edge node[pos=0.4, anchor=north]{$\scriptstyle \color{green!50!black}2$} (bx11);
		
		\draw[majarr] (bx11) edge node[anchor=north]{$\scriptstyle {\color{green!50!black}2}, 3$} (bx12);
		
		\draw[majarr] (bx12) edge node[anchor=north]{$\scriptstyle {\color{green!50!black}2}, 3, 5$} (bs2);	
		
		\draw [decorate] ([xshift=3pt]bs2.south east) -- node[anchor=north, yshift=-0.4cm]{assigning $x_{1,2} \in X_1$ to true} ([xshift=-3pt]bs1.south west);
		
		\draw[majarr, dashed] (bs2) edge (bz0);		
		
		\node[alter, right = 5ex of bz0] (bc1) {}; %
		\draw[majarr] (bz0) edge node[anchor=north]{$\scriptstyle 1,{\color{red} 4 \rightarrow 5}$} (bc1);
		
		\node[alter, right = 5ex of bc1] (bc12) {}; %
		\draw[majarr] (bc1) edge node[anchor=north]{$\scriptstyle 1,4, {\color{green!50!black}5}$} (bc12);

		\node[alter, right = 5ex of bc12] (bz1) {}; %
		\draw[majarr] (bc12) edge node[anchor=north]{$\scriptstyle 1,4, {\color{green!50!black}5}$} (bz1);		
		
		\draw[majarr, dashed] (bz1) edge[out=140, in=240] (zm);
		
		\draw [decorate] ([xshift=3pt]bz1.south east) -- node[anchor=north, yshift=-0.4cm]{traversing literal $x_{1,1}$} ([xshift=-3pt]bz0.south west);

	\end{tikzpicture}	
	}
	\caption{Example of a connection-breaking delay when a non-satisfied literal sub-path is traversed in the validation gadget. 
	The example instance of \mcpsatshrt{} has three variable sets $X_1, X_2, X_3$, each with at most two variables. (Thus, the offset $o_i = 5 \cdot (i-1)$.)
	The path on the top selects $x_{1,1}$ to true but traverses a sub-path corresponding to the literal $x_{1,2}$, while the path on the bottom selects $x_{1,2}$ to true but traverses a sub-path corresponding to the literal $x_{1,1}$. With only one delay an arrival at time step $5$ can be enforced in $f_1$ and four remaining delays are enough to break the connection in the finalization gadgets. Delays are highlighted in red, non-delayed traversed time arcs are highlighted green. 
	If a connection is broken, then the whole arc is red.}
	\label{fig:unsat_lit_red_demo}
\end{figure}

This is enough to show that the \drp{}-instance and the \mcpsatshrt{}-instance are equivalent:

\begin{lemma} \label{lemma:mcpsat-reduction}
	$((X_1, X_2, \ldots, X_n), \Phi)$ is a yes-instance of \mcpsatshrt{} if and only if the instance $(\GG, s_1,f_n, \delta = 1, x=2\cdot n-1)$ is a yes-instance for \drp{}.
\end{lemma}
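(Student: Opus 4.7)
The proof naturally splits into the two directions of the biconditional.

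For the forward direction, given a satisfying assignment in which exactly one variable $x_{i,a_i}\in X_i$ is true for each $i$, I would build a candidate route $R$ by concatenating three kinds of sub-routes: in the $i$-th selection gadget, the path $s_i \to x_{i,a_i}^{(1)} \to x_{i,a_i}^{(2)} \to s_{i+1}$; through the validation gadgets, a sub-route from $s_{n+1}$ to $f_1$ that enters only literal gadgets corresponding to chosen variables (which exists by Observation~\ref{obs:sat_ass_true_lits}); and in the $i$-th finalization gadget, the analogous path $f_{i-1} \to f_{i,a_i}^{(1)} \to f_{i,a_i}^{(2)} \to f_i$. To prove delay-robustness, I would apply Lemma~\ref{lemma:selection_gadgets_arr_time} to control the worst-case arrival time after the selection prefix, then Lemma~\ref{lemma:arrival_time_unchanged_sat_lit} to argue that only matching literal gadgets are traversed in the validation middle, so that the worst-case arrival time at $f_1$ equals the one at $s_{n+1}$. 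The finalization suffix is structurally identical to the selection prefix, so the same analysis applies and shows that the total budget of $x = 2n-1$ delays is exactly one short of being able to break the connection anywhere along $R$.

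For the backward direction, let $R$ be a delay-robust $(s_1,f_n)$-route; by Lemma~\ref{lemma:vwalk_then_vpath} I may assume $R$ is vertex-simple. The global structure of $\GG$ forces $R$ into three consecutive segments: a \emph{selection prefix} through $s_1,\dots,s_{n+1}$ which picks, for each color class $X_i$, a unique selection $x_{i,a_i}$; a \emph{validation middle} from $s_{n+1}$ to $f_1$ inside the validation gadget; and a \emph{finalization suffix} through $f_1,\dots,f_n$ which again designates one variable per~$X_i$. The selection prefix defines a truth assignment $\mathcal{A}$ with exactly one true variable per color class. By Observation~\ref{obs:sat_ass_true_lits}, to conclude that $\mathcal{A}$ satisfies $\Phi$ it suffices to verify that every literal gadget used in the validation middle corresponds to a variable in~$\mathcal{A}$, and similarly that the finalization suffix matches $\mathcal{A}$.

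The main obstacle is therefore to show that any mismatch between the selection and a later traversal admits a delay set of size at most $2n-1$ that breaks $R$, following the scheme illustrated in Figure~\ref{fig:unsat_lit_red_demo}. Suppose the mismatched gadget corresponds to $x_{i,a'}$ with $a'\ne a_i$. Using Lemma~\ref{lemma:selection_gadgets_arr_time}, I would first schedule $2(i-1)$ delays in the earlier selection gadgets to force the worst-case arrival at $s_i$, then spend one additional delay in the $i$-th selection gadget so that the route enters the mismatched gadget at a time which, because $a'\ne a_i$, is incompatible with its available time labels $o_i+a'$ and $o_{i+1}-a'$. A small case distinction (on whether $a'<a_i$ or $a'>a_i$, and on whether the mismatch occurs in the validation or the finalization segment) shows that a constant number of further delays forces arrival at the next hub vertex strictly after its last outgoing arc. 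The arithmetic of the offsets $o_i$, together with the fact that the leftover budget always suffices to fill in the two delays needed inside one subsequent finalization gadget, ensures that the total number of delays used remains within the budget $2n-1$, completing the contradiction. Formalizing this case distinction is the bulk of the remaining work.
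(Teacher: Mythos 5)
Your overall architecture matches the paper's proof: the forward direction (concatenate the matching selection, validation, and finalization sub-routes, then invoke Observation~\ref{obs:sat_ass_true_lits}, Lemma~\ref{lemma:selection_gadgets_arr_time}, and Lemma~\ref{lemma:arrival_time_unchanged_sat_lit}) is essentially identical, and the backward direction uses the same case distinction on $a'<a_i$ versus $a'>a_i$ at the first mismatched literal gadget.

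There is, however, a concrete gap in your backward direction's delay accounting. You claim that after the mismatch ``a constant number of further delays forces arrival at the next hub vertex strictly after its last outgoing arc,'' and that the leftover budget covers ``the two delays needed inside \emph{one} subsequent finalization gadget.'' Neither is accurate for a mismatch at position $i<n$: the mismatch only forces arrival at time $o_{i+1}$, and every intermediate hub still has outgoing dummy time arcs at all time steps up to $o_{n+1}-1$, so the connection does \emph{not} break there. The paper's argument instead cascades the lateness forward, spending exactly two additional delays in \emph{each} of the remaining $n-i$ finalization gadgets to push the arrival from $o_{i+1}$ at $f_i$ to $o_{i+2}$ at $f_{i+1}$, and so on; the connection only breaks at $f_n$ because the $n$-th finalization gadget has no dummy arc at time $o_{n+1}$. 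The budget then comes out to $2(i-1)+1+2(n-i)=2n-1$ \emph{exactly}, which is the whole reason $x=2n-1$ is the right threshold (and also why the forward direction is tight). Without this full cascade your argument does not produce a contradiction for mismatches occurring before the last gadget, so you need to replace the ``constant number of further delays'' claim with the per-gadget propagation step and verify the arithmetic $2(i-1)+1+2(n-i)=2n-1$.
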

\begin{proof}
	~\vspace{-\baselineskip} %
	\proofsubparagraph{($\Rightarrow$):} Let there be a truth assignment that satisfies $\Phi$, where exactly one variable from each $X_i$ for all $i \in [n]$ is set to true. 
	Let $x_{i,a_i}$ be the variable chosen from $X_i$ in that satisfying assignment.
	We choose the route $P_s$ through the selection gadgets from $s_1$ to $s_{n+1}$ that goes through all $x_{i,a_i}^{(1)}$ and $x_{i,a_i}^{(2)}$.
	Now due to \cref{obs:sat_ass_true_lits} there is a temporal path $P_v$ that traverses the validation gadgets from $s_{n+1}$ to $f_1$ that only traverses literal gadgets of variables selected in the selection gadgets.
	Hence, $P_s \circ P_v$ is delay-robust and the arrival times with respect to the number of delays in $s_{n+1}$ are the same as in $f_{1}$ due to \cref{lemma:selection_gadgets_arr_time} and \cref{lemma:arrival_time_unchanged_sat_lit}.
	Now for the finalization gadgets we choose the route~$P_f$ going through the vertices $f_{i,a_i}^{(1)}$ and $f_{i,a_i}^{(2)}$ for all $i \in [2,n]$.
	It can be seen that the sub-paths in the finalization gadgets
	\[
		f_{i-1} \xrightarrow{o_i + a_i, o_{i+1}-a_i} f_{i,a_i}^{(1)} \xrightarrow{o_i + a_i, o_{i+1}-a_i} f_{i,a_i}^{(2)} \xrightarrow{o_i + a_i, o_{i+1}-a_i} f_i
	\]
	are similar to the satisfied literal gadgets.
	Thus, the traversal of the finalization gadget does not worsen the arrival time with respect to the number of delays.
	Therefore, $P_s \circ P_v \circ P_f$ is a delay-robust route from $s_1$ to $f_n$ for any delay of size at most $2 \cdot n - 1$ and $\delta = 1$.
	
	\proofsubparagraph{($\Leftarrow$):} Assume there does not exist a truth assignment where exactly one variable from each $X_i$ for all $i \in [n]$ is set to true that satisfies $\Phi$. 
	Thus, due to \cref{obs:sat_ass_true_lits} for any assignment all paths through the validation gadgets traverse at least one literal gadget corresponding to an unsatisfied literal.
	Therefore, for any path $P$ from $s_1$ to $f_1$ traversing the selection and validation gadgets there is an $i \in [n]$ so that $P$ traverses a literal gadget 
	\[
	v \xrightarrow{o_i + a, o_{i+1} - a} \ell_{i,a}^{(1)} \xrightarrow{o_i + a, o_{i+1} - a} \ell_{i,a}^{(2)} \xrightarrow{o_i + a, o_{i+1} - a} w
	\]
	while in the $i$-th selection gadget the vertices $s_i \rightarrow x_{i,b}^{(1)} \rightarrow x_{i,b}^{(2)} \rightarrow s_{i+1}$ are traversed with $a \ne b$.
	Without loss of generality let this be the first unsatisfied literal gadget.
	By doing a case distinction we can show, that with only $2 \cdot (i-1) + 1$ delays an arrival of at least $o_{i+1}$ in $w$ can be enforced.
	
	If $a < b$, then
	due to \cref{lemma:selection_gadgets_arr_time} and \cref{lemma:arrival_time_unchanged_sat_lit}, there is a delay of size $2 \cdot (i-1)$ yielding an arrival time of $o_i + b$ in $v$.
	Since $a < b$ implies $o_i + a < o_i + b$ the time arc $v \xrightarrow{o_i + a} \ell_{i,a}^{(1)}$ cannot be taken.
	The next possible time arc is $v \xrightarrow{o_{i+1} - a} \ell_{i,a}^{(1)}$.
	By delaying that arc the arrival in $\ell_{i,a}^{(1)}$ is at $o_{i+1} - a$ and to reach the next vertices $\ell_{i,a}^{(2)}$ and $w$ the dummy time arc at $o_{i+1}$ needs to be taken. 
	Note that the total number of delays used is $2 \cdot (i-1) + 1$.
	
	If $a > b$, then
	due to \cref{lemma:selection_gadgets_arr_time} and \cref{lemma:arrival_time_unchanged_sat_lit}, there is a delay of size $2 \cdot (i-1) + 1$ yielding an arrival time of $o_{i+1} - b$ in $v$.
	Since $a > b$ implies $o_{i+1} - a < o_{i+1} - b$ none of the time arcs $v \xrightarrow{o_i + a, o_{i+1} - a} \ell_{i,a}^{(1)}$ can be taken.
	The next possible time arc is the dummy time arc at $o_{i+1}$ which leads to an arrival time of $o_{i+1}$ in $w$ using $v \xrightarrow{o_{i+1}} \ell_{i,a}^{(1)} \xrightarrow{o_{i+1}} \ell_{i,a}^{(2)} \xrightarrow{o_{i+1}} w$.
	
	Note that there are no dummy time arcs for $o_{n+1}$.
	Thus, the connection breaks in the literal gadget if $i = n$.
	Since the arrival time in $w$ is $o_{i+1}$ for $2 \cdot (i-1) + 1$ delays, the arrival time in $f_{i}$ is at least $o_{i+1}$ for $2 \cdot (i-1) + 1$ delays.
	Now by applying 2 additional delays in the $(i+1)$-st finalization gadget
	\[
		f_{i} \xrightarrow{{\color{red}o_{i+1} + a \rightarrow o_{i+1} + a+1}} f_{i+1,a}^{(1)} \xrightarrow{{\color{gray}o_{i+1} + a}, {\color{red}o_{i+2}-a \rightarrow o_{i+2}-a+1}} f_{i+1,a}^{(2)} \xrightarrow{{\color{gray}o_{i+1} + a, o_{i+2}-a}, o_{i+2}} f_{i+1}
	\]
	the arrival time in $f_{i+1}$ is $o_{i+2}$ with a total of $2 \cdot i + 1$ delays. (Red labels indicate a delay, gray labels cannot be taken due to a previous delay.)
	This can be repeated until the last finalization gadget where one arrives earliest at $o_n$ with $2\cdot(n-2)+1$ delays used, and thus two delays left. 
	Similar as before two delays can be applied, but for the $n$-th finalization gadget there is no dummy time arc of time step $o_{n+1}$, and thus the connection breaks with a total of $2\cdot (n-1) + 1 = 2 \cdot n - 1$ delays.
	Hence, there is no delay-robust path from $s_1$ to $f_n$ with $2 \cdot n - 1$ delays and $\delta = 1$, and thus the instance is a no-instance.
\end{proof}

An example how the delay breaks when an unsatisfied literal-gadget is traversed can be seen in \cref{fig:unsat_lit_red_demo}.
We now show that the reduction can be performed in polynomial time.

\begin{lemma}
\label{lem:redptime}
	Given an instance $((X_1, X_2, \ldots, X_n), \Phi)$ of \mcpsatshrt{} with $m = \max_i \abs{X_i}$ and $\ell$ the number of literals in $\Phi$,
	the temporal graph $\GG$ can be constructed in polynomial time with respect to $n, m$ and $\ell$.
\end{lemma}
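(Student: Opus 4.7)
The plan is to count the number of vertices, the number of time arcs, and the maximum time label in each type of gadget, verify that each quantity is bounded polynomially in $n$, $m$, and $\ell$, and then conclude that the construction can be written down in polynomial time. Since every vertex and every time arc can be emitted in time polynomial in the bit-length of the labels (which themselves are at most $o_{n+1} = (2m+1)n = O(nm)$, hence of encoding length $O(\log(nm))$), counting will suffice.

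First, I would analyze the selection gadgets. There are $n$ of them, and each contributes $O(m)$ vertices, giving $O(nm)$ vertices in total. For each variable, 3 underlying arcs are created, and for each such underlying arc we add the real time arc plus one dummy time arc for every $t \in [o_{n+1}-1]$; since $o_{n+1} = O(nm)$, this is $O(nm)$ time arcs per underlying arc, hence $O(n^2 m^2)$ time arcs in total for the selection gadgets. The finalization gadgets have the same asymptotic structure, contributing again $O(nm)$ vertices and $O(n^2 m^2)$ time arcs.

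Next, I would analyze the validation gadget by structural induction on $\Phi$. Each conjunction or disjunction node introduces $O(1)$ auxiliary vertices per subformula, so the recursive skeleton contributes $O(\ell)$ vertices. Each of the $\ell$ literals yields a literal gadget with $2$ new vertices and $3$ underlying arcs; as above, each underlying arc carries $O(nm)$ time arcs (the three real labels plus dummies over a range of length $O(nm)$). This gives $O(\ell)$ vertices and $O(\ell \cdot nm)$ time arcs in the validation part.

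Putting the three parts together yields $O(nm + \ell)$ vertices and $O(n^2 m^2 + \ell\,nm)$ time arcs, each labelled by an integer in $[0, o_{n+1}]$ of bit-length $O(\log(nm))$. Hence the entire temporal graph $\GG$ has polynomial size in $n, m, \ell$ and can be produced in polynomial time. There is no conceptual obstacle here; the only point requiring care is that dummy time arcs in the validation and finalization gadgets are added only for $t \in [o_{n+1}-1] \setminus [o_i, o_{i+1}-1]$, which is still $O(nm)$ many per underlying arc, so the above counting remains valid.
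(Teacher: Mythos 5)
Your proposal is correct and follows essentially the same counting argument as the paper's proof: both tally $\bigO(nm+\ell)$ vertices and $\bigO(n^2m^2 + \ell\, nm)$ time arcs gadget by gadget and conclude polynomial constructibility. Your additional remarks on the bit-length of the time labels and on the exact dummy-arc ranges are fine but not needed beyond what the paper states.
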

\begin{proof}
	The graph $\GG$ contains the $\bigO(n)$ vertices $s_1, s_2, \ldots, s_n$ and $f_2, f_3, \ldots, f_n$. 
	For each $i \in [n]$ and $x_{i,a} \in X_i$ there are two additional vertices in the selection and finalization gadgets each resulting in $\bigO(n \cdot m)$ vertices.
	Furthermore, each variable $x_{i,a} \in X_i$ for each $i \in [n]$ introduces at most $3 \cdot o_{n+1} \in \bigO(n \cdot m)$ time arcs giving a total of $\bigO(n^2 \cdot m^2)$ time arcs.
	Each literal in $\Phi$ introduces at most four new vertices in $\GG$ and at most $3 \cdot o_{n+1} \in \bigO(n \cdot m)$ time arcs.
	Thus, the validation gadget introduces $\bigO(\ell)$ vertices and $\bigO(\ell \cdot n \cdot m)$ time arcs.
	In total, $\GG$ has $\bigO(\ell + n \cdot m)$ vertices and $\bigO(\ell \cdot n \cdot m + n^2 \cdot m^2)$ time arcs. 
	The temporal graph can be constructed by iterating over the variables and clauses once.
\end{proof}

Hence, we have constructed a valid polynomial-time reduction and \cref{cor:poly_red_dj_drp} follows from \cref{lemma:mcpsat-reduction,lem:redptime}.

\subsection{Applications of the Framework}\label{subsec:applications}

Next, we use our previous result that \mcpsatshrt{}~$\lepoly$~\drp{} (\cref{cor:poly_red_dj_drp}) to show that \drp{} is \CNP{}-complete even if the underlying graph has bandwidth 3.
The \emph{bandwidth} $\textnormal{bw}(G)$ of a graph $G$ is the smallest number $b$ such that the vertices of $G$ can be placed at distinct integer points along a line
so that the length of the longest edge is $b$.
The bandwidth of a graph upper-bounds both the graph's pathwidth and treewidth~\cite{sorgeWellerGraphHierarchy}.
Formally, we show the following result by using an appropriate polynomial-time reduction from the \CNP{}-complete \tsat{} problem~\cite{Karp72NPComplete} to \mcpsatshrt{}.

\begin{theorem} \label{theorem:drp_paranp}
	\drp{} is NP-complete for all fixed $\delta\ge 1$, maximum traversal times $\lambda_{\max}\ge 0$, and bandwidths of the underlying graph $\textnormal{bw}(\under{\GG})\ge 3$ .
\end{theorem}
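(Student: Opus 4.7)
The plan is to establish containment in \CNP{} via the verification algorithm from \cref{thm:polyforest} and obtain \CNP{}-hardness by a polynomial-time reduction from \tsat{} to \mcpsatshrt{} composed with the reduction of \cref{cor:poly_red_dj_drp}. For containment, I would use \cref{lemma:vwalk_then_vpath} to restrict the guess to routes of length at most $|V|$ and then verify delay-robustness in polynomial time via \cref{thm:polyforest}. For the reduction $\tsat{} \lepoly \mcpsatshrt{}$, given a \tsat{}-formula $\varphi$ over variables $x_1, \ldots, x_n$, I would create color classes $X_i = \{x_i^+, x_i^-\}$ and replace each occurrence of $x_i$ by $x_i^+$ and each occurrence of $\neg x_i$ by $x_i^-$. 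The resulting positive CNF $\varphi'$ with at most $3$ literals per clause encodes $\varphi$ faithfully, because the ``exactly one variable per color class'' condition of \mcpsatshrt{} corresponds exactly to a Boolean truth assignment on $x_1, \dots, x_n$.

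The main obstacle will be verifying that the temporal graph $\GG$ produced by applying \cref{cor:poly_red_dj_drp} to $\varphi'$ has bandwidth at most $3$. I would argue as follows. With $|X_i| = 2$, each selection gadget consists of two parallel $(s_i, s_{i+1})$-paths of length $3$; laying out its $6$ vertices in the natural order yields bandwidth $2$, and finalization gadgets are analogous. Each clause gadget is a theta graph consisting of three internally vertex-disjoint length-$3$ paths sharing their two endpoints; placing its $8$ vertices in the order (start, three first-intermediates, three second-intermediates, end) gives bandwidth exactly $3$. Concatenating the individual layouts in the overall order ``selection gadgets $\to$ validation gadgets $\to$ finalization gadgets'' glues them at the junction vertices $s_i$, $c_j$, $f_i$, and in particular $s_{n+1}$ and $f_1$; each such vertex has degree at most $6$, and a careful choice of neighbor positions (e.g., at $f_1$, placing the three exit vertices of the last clause gadget immediately to its left and the two entry vertices of the first finalization gadget immediately to its right) ensures that every incident edge has length at most $3$. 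Hence $\textnormal{bw}(\under{\GG}) \le 3$, which is sufficient since any instance of bandwidth $3$ also witnesses hardness for any larger fixed bandwidth bound.

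Finally, to extend the hardness to arbitrary fixed $\delta \ge 1$ and $\lambda_{\max} \ge 0$, note that the reduction from \cref{subsec:reduction_framework} uses $\delta = 1$ and traversal time $0$, which already satisfies $\lambda \le \lambda_{\max}$ whenever $\lambda_{\max} \ge 0$. For any fixed $\delta > 1$, I would multiply every time label (including those of dummy time arcs) by $\delta$; since all time differences scale uniformly, a delay of size $\delta$ in the scaled instance has exactly the same structural effect as a delay of size $1$ in the original, so the scaled instance is an equivalent \drp{} instance with the desired parameter. Combining all three parts yields \CNP{}-completeness for every fixed $\delta \ge 1$, $\lambda_{\max} \ge 0$, and $\textnormal{bw}(\under{\GG}) \ge 3$.
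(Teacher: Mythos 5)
Your proposal is correct and follows essentially the same route as the paper: membership in \CNP{} via \cref{thm:polyforest}, and hardness by composing the variable-splitting reduction from \tsat{} to \mcpsatshrt{} with \cref{cor:poly_red_dj_drp} and then inspecting the gadget structure of the resulting temporal graph to bound the bandwidth by $3$. You additionally spell out two points the paper leaves implicit --- a concrete linear layout certifying $\textnormal{bw}(\under{\GG})\le 3$ and the time-scaling argument handling arbitrary fixed $\delta>1$ --- and both are correct.
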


Let $\Phi$ be an instance of \tsat{} with the variables $x_1, x_2, \ldots, x_n$.
We will construct an instance $((X_1, X_2, \ldots, X_n), \Phi')$ of \mcpsatshrt{} that is a yes-instance if and only if $\Phi$ is a yes-instance of \tsat{}.
The construction is straightforward.
Let $X_i = \{x_i, \bar{x}_i\}$ for all $i \in [n]$.
Note, that in this case $\bar{x}_i$ is a variable and not a negative literal.
Furthermore, $\Phi' = \Phi$, but any negative literal $\bar{x}_i$ from $\Phi$ corresponds to the variable $\bar{x}_i$, and thus is not negated in $\Phi'$.
The formula $\Phi'$ only contains the operators $\wedge$ and $\vee$ since $\Phi$ is in conjunctive normal form.
Hence, $\Phi'$ is a valid formula for the problem \mcpsatshrt{}.

\begin{lemma} \label{lemma:3sat_to_djpsat}
	$\Phi$ is a yes-instance of \tsat{} if and only if $((X_1, X_2, \ldots, X_n), \Phi')$ is a yes-instance of \mcpsatshrt.
\end{lemma}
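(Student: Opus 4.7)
The plan is to establish the equivalence via a natural bijection between truth assignments of $\Phi$ and the constrained truth assignments of $\Phi'$ required by \mcpsatshrt{}. Concretely, to a truth assignment $\sigma: \{x_1,\ldots,x_n\} \to \{\text{true},\text{false}\}$ of the \tsat{} variables I associate the \mcpsatshrt{} assignment $\tau$ that, for each $i \in [n]$, sets $x_i$ to true and $\bar x_i$ to false if $\sigma(x_i) = \text{true}$, and sets $\bar x_i$ to true and $x_i$ to false otherwise. This rule manifestly picks exactly one variable from each $X_i$, and conversely, any \mcpsatshrt{}-valid assignment $\tau$ selects exactly one of $x_i, \bar x_i$ from each $X_i$ and thus induces a unique $\sigma$ via the same rule. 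So the map $\sigma \leftrightarrow \tau$ is a bijection between ordinary truth assignments of $\Phi$ and \mcpsatshrt{}-valid assignments of $\Phi'$.

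For the forward direction, I would assume $\sigma$ satisfies $\Phi$ and check that the corresponding $\tau$ satisfies $\Phi'$. Since $\Phi'$ is obtained from $\Phi$ only by replacing each negated literal $\bar x_i$ with a positive occurrence of the fresh variable $\bar x_i$ (and leaving the tree structure of $\wedge$'s and $\vee$'s intact), it suffices to verify the literal-level correspondence: a positive literal $x_i$ is satisfied by $\sigma$ iff $\tau(x_i) = \text{true}$; a negative literal $\bar x_i$ is satisfied by $\sigma$ iff $\sigma(x_i) = \text{false}$ iff $\tau(\bar x_i) = \text{true}$. A straightforward structural induction on $\Phi$ then propagates satisfaction from the literal level to the full formula. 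The backward direction is symmetric, using the inverse of the bijection.

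Two small things warrant comment. First, $\Phi'$ is a legal \mcpsatshrt{} instance: as $\Phi$ is in conjunctive normal form, it is built using only $\wedge$ and $\vee$, and after the syntactic replacement $\bar x_i \mapsto \bar x_i$ (as a positive literal over the new variable) all literals are positive. Second, the \mcpsatshrt{} constraint ``exactly one variable from each $X_i$ is true'' is exactly matched by the dichotomy $\sigma(x_i) \in \{\text{true},\text{false}\}$, which is why the bijection is clean and no further bookkeeping is needed.

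The ``hard part'' is essentially trivial here: the only thing to guard against is a mismatch between the \emph{syntactic} positive literal $\bar x_i$ in $\Phi'$ and the \emph{semantic} negation in $\Phi$, which is resolved by the literal-level equivalence above. Once that is in place, the proof reduces to citing the bijection in both directions.
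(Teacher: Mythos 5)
Your proposal is correct and follows essentially the same route as the paper's own proof: both directions rest on the one-to-one correspondence between truth assignments of $\Phi$ and the ``exactly one per $X_i$'' assignments of $\Phi'$, together with the observation that each literal of $\Phi$ and its positive counterpart in $\Phi'$ receive the same truth value, which then propagates through the identical $\wedge$/$\vee$ structure. Your explicit framing as a bijection plus structural induction is just a slightly more formal packaging of the paper's argument.
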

\begin{proof}
	~\vspace{-\baselineskip} %
	\proofsubparagraph{($\Rightarrow$):} Assume $\Phi$ is a yes-instance of \tsat{}.
	Hence, there is a truth assignment
$\hat{x}_1, \hat{x}_2, \ldots, \hat{x}_n$ with $\hat{x}_i \in \{x_i, \bar{x}_i \}$ that satisfies $\Phi$.
	For \mcpsatshrt{} exactly one variable from each set $X_i$ with $i \in n$ has to be set to true.
	If $\hat{x}_i = x_i$, then we set $x_i \in X_i$ to true, and thus $\bar{x}_i \in X_i$ is false. 
	If $\hat{x}_i = \bar{x}_i$, then we set $\bar{x}_i \in X_i$ to true, and thus $x_i \in X_i$ is false. 
	Hence, for any literal in $\Phi$ the corresponding positive literal in $\Phi'$ is evaluated to the same truth-value.
	This also implies that $\Phi$ and $\Phi'$ are evaluated to the same truth-value since the literals are linked by conjunctions and disjunctions in the same way.
	Since $\Phi$ is satisfied, the variable selection from the sets $X_i$ with $i \in [n]$ also satisfies $\Phi'$.

	\proofsubparagraph{($\Leftarrow$):} Assume $((X_1, X_2, \ldots, X_n), \Phi')$ is a yes-instance of \mcpsatshrt.
	Thus, there is a truth assignment that sets exactly one variable from each set $X_i$ with $i \in [n]$ to true.
	Let $\hat{x}_i$ be the selected variable from $X_i$.
	We will also use this assignment for our \tsat{}-instance: If $\hat{x}_i = x_i$, then we assign $x_i$ to true, otherwise to false.
	Hence, a positive literal in $\Phi$ is evaluated to true if and only if the corresponding literal in $\Phi$ is evaluated to true.
	This also implies that $\Phi'$ and $\Phi$ are evaluated to the same truth-value since the literals are linked by conjunctions and disjunctions in the same way.
	Since $\Phi'$ is satisfied under the selected variables the truth assignment also satisfies $\Phi$.
\end{proof}

\begin{figure} [t]
	\centering
	\tikzstyle{alter}=[circle, minimum size=16pt, draw, inner sep=1pt] 
	\tikzstyle{majarr}=[draw=black]

	\begin{subfigure}[b]{\textwidth}
		\centering
		
		\begin{tikzpicture}[auto, >=stealth',shorten <=1pt, shorten >=1pt, state/.style={alter, scale=0.75, draw, minimum size=8cm}]
			\tikzstyle{majarr}=[draw=black,->,shorten <=1.5pt, shorten >=1.5pt]
			
			\node[alter, fill={rgb,255:red,150; green,255; blue,150}, scale=0.25] at (0.0, 0) (sel1s) {};
\node[alter, fill=black, scale=0.25] at (2.7, 0) (sel1z) {};
\node[alter, fill=black, scale=0.25] at (0.9, 0.5) (sel101) {};
\node[alter, fill=black, scale=0.25] at (1.8, 0.5) (sel102) {};
\draw[majarr] (sel1s) edge (sel101);
\draw[majarr] (sel101) edge (sel102);
\draw[majarr] (sel102) edge (sel1z);
\node[alter, fill=black, scale=0.25] at (0.9, -0.5) (sel111) {};
\node[alter, fill=black, scale=0.25] at (1.8, -0.5) (sel112) {};
\draw[majarr] (sel1s) edge (sel111);
\draw[majarr] (sel111) edge (sel112);
\draw[majarr] (sel112) edge (sel1z);
\node[alter, fill=black, scale=0.25] at (5.4, 0) (sel2z) {};
\node[alter, fill=black, scale=0.25] at (3.6, 0.5) (sel201) {};
\node[alter, fill=black, scale=0.25] at (4.5, 0.5) (sel202) {};
\draw[majarr] (sel1z) edge (sel201);
\draw[majarr] (sel201) edge (sel202);
\draw[majarr] (sel202) edge (sel2z);
\node[alter, fill=black, scale=0.25] at (3.6, -0.5) (sel211) {};
\node[alter, fill=black, scale=0.25] at (4.5, -0.5) (sel212) {};
\draw[majarr] (sel1z) edge (sel211);
\draw[majarr] (sel211) edge (sel212);
\draw[majarr] (sel212) edge (sel2z);
\node[alter, fill=black, scale=0.25] at (7.2, 0) (sel3s) {};
\node[alter, fill=black, scale=0.25] at (9.9, 0) (sel3z) {};
\node[alter, fill=black, scale=0.25] at (8.1, 0.5) (sel301) {};
\node[alter, fill=black, scale=0.25] at (9.0, 0.5) (sel302) {};
\draw[majarr] (sel3s) edge (sel301);
\draw[majarr] (sel301) edge (sel302);
\draw[majarr] (sel302) edge (sel3z);
\node[alter, fill=black, scale=0.25] at (8.1, -0.5) (sel311) {};
\node[alter, fill=black, scale=0.25] at (9.0, -0.5) (sel312) {};
\draw[majarr] (sel3s) edge (sel311);
\draw[majarr] (sel311) edge (sel312);
\draw[majarr] (sel312) edge (sel3z);
\path (sel2z) -- node[auto=false]{\textbf{\ldots}} (sel3s);
\node[scale=0.25] at (10.8, 0.5) (sel3zd0) {};
\draw[majarr, dashed] (sel3z) edge (sel3zd0);
\node[scale=0.25] at (10.8, 0.0) (sel3zd1) {};
\draw[majarr, dashed] (sel3z) edge (sel3zd1);
\node[scale=0.25] at (10.8, -0.5) (sel3zd2) {};
\draw[majarr, dashed] (sel3z) edge (sel3zd2);

		\end{tikzpicture}	
		
		\caption{Selection gadgets: $n$ chained selection gadgets to select the assignment for each variable $x_i$ for $i\in [n]$.}		
	\end{subfigure}
	
	\begin{tikzpicture}
		\node[] at (0,0) (a) {};
	\end{tikzpicture}

	\begin{subfigure}[b]{\textwidth}
		\centering
		
		\begin{tikzpicture}[auto, >=stealth',shorten <=1pt, shorten >=1pt, state/.style={alter, scale=0.75, draw, minimum size=8cm}]
			\tikzstyle{majarr}=[draw=black,->,shorten <=1.5pt, shorten >=1.5pt]
			
\node[alter, fill=black, scale=0.25] at (0.0, 0) (val1s) {};
\node[alter, fill=black, scale=0.25] at (2.7, 0) (val1z) {};
\node[alter, fill=black, scale=0.25] at (0.9, 0.75) (val101) {};
\node[alter, fill=black, scale=0.25] at (1.8, 0.75) (val102) {};
\draw[majarr] (val1s) edge (val101);
\draw[majarr] (val101) edge (val102);
\draw[majarr] (val102) edge (val1z);
\node[alter, fill=black, scale=0.25] at (0.9, 0.0) (val111) {};
\node[alter, fill=black, scale=0.25] at (1.8, 0.0) (val112) {};
\draw[majarr] (val1s) edge (val111);
\draw[majarr] (val111) edge (val112);
\draw[majarr] (val112) edge (val1z);
\node[alter, fill=black, scale=0.25] at (0.9, -0.75) (val121) {};
\node[alter, fill=black, scale=0.25] at (1.8, -0.75) (val122) {};
\draw[majarr] (val1s) edge (val121);
\draw[majarr] (val121) edge (val122);
\draw[majarr] (val122) edge (val1z);
\node[alter, fill=black, scale=0.25] at (5.4, 0) (val2z) {};
\node[alter, fill=black, scale=0.25] at (3.6, 0.75) (val201) {};
\node[alter, fill=black, scale=0.25] at (4.5, 0.75) (val202) {};
\draw[majarr] (val1z) edge (val201);
\draw[majarr] (val201) edge (val202);
\draw[majarr] (val202) edge (val2z);
\node[alter, fill=black, scale=0.25] at (3.6, 0.0) (val211) {};
\node[alter, fill=black, scale=0.25] at (4.5, 0.0) (val212) {};
\draw[majarr] (val1z) edge (val211);
\draw[majarr] (val211) edge (val212);
\draw[majarr] (val212) edge (val2z);
\node[alter, fill=black, scale=0.25] at (3.6, -0.75) (val221) {};
\node[alter, fill=black, scale=0.25] at (4.5, -0.75) (val222) {};
\draw[majarr] (val1z) edge (val221);
\draw[majarr] (val221) edge (val222);
\draw[majarr] (val222) edge (val2z);
\node[alter, fill=black, scale=0.25] at (7.2, 0) (val3s) {};
\node[alter, fill=black, scale=0.25] at (9.9, 0) (val3z) {};
\node[alter, fill=black, scale=0.25] at (8.1, 0.75) (val301) {};
\node[alter, fill=black, scale=0.25] at (9.0, 0.75) (val302) {};
\draw[majarr] (val3s) edge (val301);
\draw[majarr] (val301) edge (val302);
\draw[majarr] (val302) edge (val3z);
\node[alter, fill=black, scale=0.25] at (8.1, 0.0) (val311) {};
\node[alter, fill=black, scale=0.25] at (9.0, 0.0) (val312) {};
\draw[majarr] (val3s) edge (val311);
\draw[majarr] (val311) edge (val312);
\draw[majarr] (val312) edge (val3z);
\node[alter, fill=black, scale=0.25] at (8.1, -0.75) (val321) {};
\node[alter, fill=black, scale=0.25] at (9.0, -0.75) (val322) {};
\draw[majarr] (val3s) edge (val321);
\draw[majarr] (val321) edge (val322);
\draw[majarr] (val322) edge (val3z);
\path (val2z) -- node[auto=false]{\textbf{\ldots}} (val3s);
\node[scale=0.25] at (-0.9, 0.75) (val1sd0) {};
\draw[majarr, dashed] (val1sd0) edge (val1s);
\node[scale=0.25] at (-0.9, 0.0) (val1sd1) {};
\draw[majarr, dashed] (val1sd1) edge (val1s);
\node[scale=0.25] at (-0.9, -0.75) (val1sd2) {};
\draw[majarr, dashed] (val1sd2) edge (val1s);
\node[scale=0.25] at (10.8, 0.75) (val3zd0) {};
\draw[majarr, dashed] (val3z) edge (val3zd0);
\node[scale=0.25] at (10.8, 0.0) (val3zd1) {};
\draw[majarr, dashed] (val3z) edge (val3zd1);
\node[scale=0.25] at (10.8, -0.75) (val3zd2) {};
\draw[majarr, dashed] (val3z) edge (val3zd2);

		\end{tikzpicture}	
		
		\caption{Validation gadget: $m$ chained sub-gadgets to choose a literal for each clause.}		
	\end{subfigure}
	
	\begin{tikzpicture}
		\node[] at (0,0) (a) {};
	\end{tikzpicture}

	\begin{subfigure}[b]{\textwidth}
		\centering
		
		\begin{tikzpicture}[auto, >=stealth',shorten <=1pt, shorten >=1pt, state/.style={alter, scale=0.75, draw, minimum size=8cm}]
			\tikzstyle{majarr}=[draw=black,->,shorten <=1.5pt, shorten >=1.5pt]
			
\node[alter, fill=black, scale=0.25] at (0.0, 0) (fin1s) {};
\node[alter, fill=black, scale=0.25] at (2.7, 0) (fin1z) {};
\node[alter, fill=black, scale=0.25] at (0.9, 0.5) (fin101) {};
\node[alter, fill=black, scale=0.25] at (1.8, 0.5) (fin102) {};
\draw[majarr] (fin1s) edge (fin101);
\draw[majarr] (fin101) edge (fin102);
\draw[majarr] (fin102) edge (fin1z);
\node[alter, fill=black, scale=0.25] at (0.9, -0.5) (fin111) {};
\node[alter, fill=black, scale=0.25] at (1.8, -0.5) (fin112) {};
\draw[majarr] (fin1s) edge (fin111);
\draw[majarr] (fin111) edge (fin112);
\draw[majarr] (fin112) edge (fin1z);
\node[alter, fill=black, scale=0.25] at (5.4, 0) (fin2z) {};
\node[alter, fill=black, scale=0.25] at (3.6, 0.5) (fin201) {};
\node[alter, fill=black, scale=0.25] at (4.5, 0.5) (fin202) {};
\draw[majarr] (fin1z) edge (fin201);
\draw[majarr] (fin201) edge (fin202);
\draw[majarr] (fin202) edge (fin2z);
\node[alter, fill=black, scale=0.25] at (3.6, -0.5) (fin211) {};
\node[alter, fill=black, scale=0.25] at (4.5, -0.5) (fin212) {};
\draw[majarr] (fin1z) edge (fin211);
\draw[majarr] (fin211) edge (fin212);
\draw[majarr] (fin212) edge (fin2z);
\node[alter, fill=black, scale=0.25] at (7.2, 0) (fin3s) {};
\node[alter, fill=blue, scale=0.25] at (9.9, 0) (fin3z) {};
\node[alter, fill=black, scale=0.25] at (8.1, 0.5) (fin301) {};
\node[alter, fill=black, scale=0.25] at (9.0, 0.5) (fin302) {};
\draw[majarr] (fin3s) edge (fin301);
\draw[majarr] (fin301) edge (fin302);
\draw[majarr] (fin302) edge (fin3z);
\node[alter, fill=black, scale=0.25] at (8.1, -0.5) (fin311) {};
\node[alter, fill=black, scale=0.25] at (9.0, -0.5) (fin312) {};
\draw[majarr] (fin3s) edge (fin311);
\draw[majarr] (fin311) edge (fin312);
\draw[majarr] (fin312) edge (fin3z);
\path (fin2z) -- node[auto=false]{\textbf{\ldots}} (fin3s);
\node[scale=0.25] at (-0.9, 0.5) (fin1sd0) {};
\draw[majarr, dashed] (fin1sd0) edge (fin1s);
\node[scale=0.25] at (-0.9, 0.0) (fin1sd1) {};
\draw[majarr, dashed] (fin1sd1) edge (fin1s);
\node[scale=0.25] at (-0.9, -0.5) (fin1sd2) {};
\draw[majarr, dashed] (fin1sd2) edge (fin1s);

		\end{tikzpicture}		
		
		\caption{Finalization gadgets: $n-1$ chained finalization gadgets.}
	\end{subfigure}
	
	\caption{The temporal graph resulting from the \tsat{} $\lepoly$ \drp{} reduction. 
	The start vertex is highlighted light green, the end vertex is highlighted dark blue. 
	All time labels are ommited. The number of variables of the \tsat{}-instance is $n$, the number of clauses is $m$.}
	\label{fig:3sat_drp_red_graph}
\end{figure}
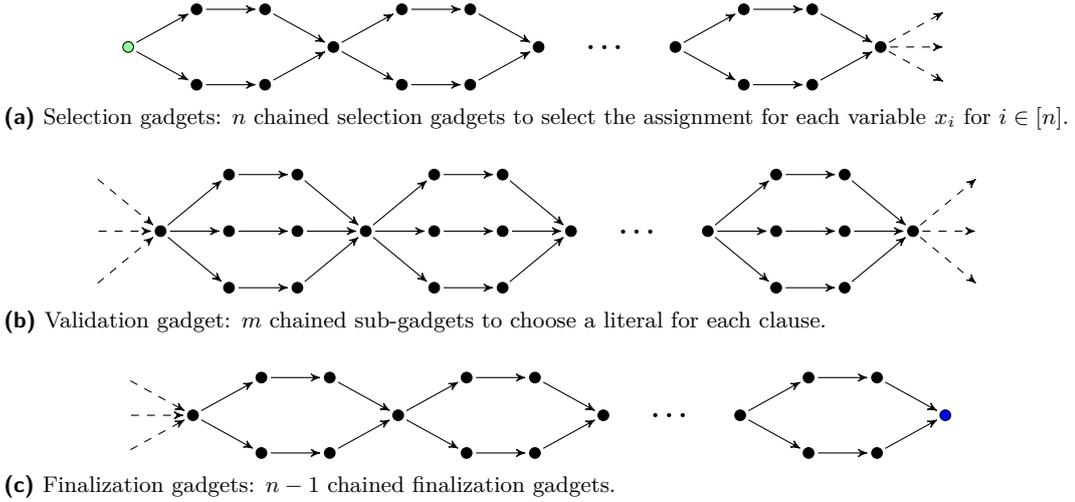

\begin{proof}[Proof of \cref{theorem:drp_paranp}]
We now obtain \cref{theorem:drp_paranp} from \cref{lemma:3sat_to_djpsat}, observing that the reduction can clearly be performed in polynomial time, and by taking a closer look at the \drp{}-instance resulting from the reductions.
A \tsat-instance $\Phi$ is reduced to a \drp{}-instance $(\GG, s_1, f_n, \delta = 1, 2\cdot n-1)$ (with \mcpsatshrt{} as an intermediate problem).
A visualization of the resulting temporal graph $\GG$ can be seen in \cref{fig:3sat_drp_red_graph}. 
In each selection and finalization gadget there are two parallel paths starting and ending in the same vertex with two intermediate vertices for both parallel paths.
The validation gadget consist of chained gadgets for each clause.
For each clause gadget there are three parallel paths (one for each literal) starting and ending in the same vertex with two intermediate vertices per parallel path.
Hence, the bandwidth of the underlying graph $\under{\GG}$ is $3$.
Furthermore, the delay $\delta = 1$ and the maximum traversal time $\lambda_{\max} = \max_\lambda \{(v,w,t, \lambda) \in E\} = 0$.
\end{proof}

Next, we show \CWONE{}-hardness of \drp{} for the feedback vertex set of the underlying graph, the length of a delay-robust temporal path, and the number of delays combined.
To this end, we give a parameterized polynomial-time reduction from \mcc{}~\cite{MCC_W1hardness} to \drp{}.
Again we use \mcpsatshrt{} as an intermediate problem and use \cref{cor:poly_red_dj_drp}. Formally, we show the following result.

\begin{theorem} \label{theorem:drp_w1hardness}
	\drp{} is \CWONE{}-hard with respect to $x + L + f$
	where $x$ is the number of delays,
	$L$ is the length of a longest $s$-$z$ path in $\under{\GG}$,
	and $f$ is the feedback vertex number of $\under{\GG}$.
\end{theorem}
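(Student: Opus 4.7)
The plan is to give a parameterized reduction from the \CWONE{}-hard \mcc{} problem (parameterized by the number~$k$ of color classes) to \drp{}, using \mcpsatshrt{} as an intermediate step via \cref{cor:poly_red_dj_drp}. Given an \mcc{} instance $(G, V_1, \ldots, V_k)$, I would introduce one variable set $X_i = \{x_{i,v} \mid v \in V_i\}$ per color class and construct the \mcpsatshrt{}-formula
\[
	\Phi \ = \ \bigwedge_{1 \le i < j \le k} \ \bigvee_{\{u,w\} \in E(G),\, u \in V_i,\, w \in V_j} \bigl(x_{i,u} \wedge x_{j,w}\bigr),
\]
which uses only positive literals and the operators $\wedge$ and $\vee$. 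Since the \mcpsatshrt{} semantics require exactly one true variable per $X_i$, the $(i,j)$-conjunct forces the unique true variables of $X_i$ and $X_j$ to correspond to adjacent vertices of~$G$; thus $\Phi$ has a satisfying assignment iff the selected vertices form a $k$-clique in~$G$.

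Feeding this instance through \cref{cor:poly_red_dj_drp} produces a \drp{} instance $(\GG, s_1, f_k, 1, 2k-1)$, and my task reduces to bounding $x$, $L$, and~$f$ each by a function of~$k$. Immediately $x = 2k-1 \in \bigO(k)$. For the length~$L$ of a longest simple $s_1$-$f_k$ path in $\under{\GG}$, I would first observe that $s_1, s_2, \ldots, s_{k+1}, f_1, \ldots, f_k$ forms a chain of cut vertices in $\under{\GG}$, because every gadget is attached to its neighbors exactly at these vertices. Every subpath between two consecutive cut vertices lives inside a single gadget, which is a bundle of internally-disjoint paths sharing both endpoints, so a simple subpath can traverse at most one of them: this yields at most $3$ edges per selection or finalization gadget and at most $6$ edges per outer disjunction $\Psi_{ij}$ (a literal chain, a conjunction junction, another literal chain) of the validation gadget. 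Summing over $k$ selection gadgets, $\binom{k}{2}$ disjunctions, and $k-1$ finalization gadgets gives $L \in \bigO(k^2)$.

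The central step is bounding the feedback vertex number $f \in \bigO(k^2)$. I would exhibit the explicit feedback vertex set
$F = \{s_2, \ldots, s_{k+1}\} \cup \{c_1, \ldots, c_{\binom{k}{2}-1}\} \cup \{f_1, \ldots, f_k\}$,
where the $c_j$ are the interior junction vertices of the outer conjunction of $\Phi$ inside the validation gadget. After removing $F$, every selection gadget $i \ge 2$ falls apart into disjoint edges $x_{i,a}^{(1)}\text{--}x_{i,a}^{(2)}$, while the first selection gadget becomes a star-like tree rooted at $s_1$; the finalization gadgets are handled analogously; and the validation gadget decomposes into the individual disjunctions $\Psi_{ij}$, each of which is a disjoint union of internally-disjoint linear chains because both of its endpoints (two consecutive $c_j$, or $s_{k+1}/f_1$) lie in $F$. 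Since all cut vertices separating the gadgets lie in $F$ as well, no cycle can cross gadget boundaries, so $F$ is indeed a feedback vertex set.

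Putting everything together, $x + L + f \in \bigO(k^2)$, and the reduction is polynomial by \cref{lem:redptime}, so the composition is an FPT-reduction that transfers \CWONE{}-hardness from \mcc{}. The principal obstacle is the gadget-by-gadget verification that no simple $s_1$-$f_k$ path can exploit two parallel sub-paths of the same gadget and that no cycle of $\under{\GG}$ survives the removal of $F$; both facts follow from the series-parallel nature of each gadget, whose two poles are consecutive cut vertices of $\under{\GG}$.
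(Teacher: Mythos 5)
Your proposal is correct and follows essentially the same route as the paper: a reduction from \mcc{} to \mcpsatshrt{} with the identical formula $\Phi$, composed with the \mcpsatshrt{}-to-\drp{} reduction, followed by bounding $x = 2k-1$, the path length by $3(2k-1)+6\binom{k}{2}$, and the feedback vertex number by exhibiting the chain of gadget-separating cut vertices (plus the validation-gadget junctions) as a feedback vertex set of size $\bigO(k^2)$. The paper's count of the feedback vertex set differs from yours by one, but both are $\bigO(k^2)$ and this is immaterial.
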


Given a $k$-partite graph (each partition of another color), \mcc{} asks whether there the graph contains clique of size $k$. \mcc{} is \CWONE{}-hard when parameterized by the number of partitions~$k$~\cite{MCC_W1hardness}. 
\problemdef{\mcc{}}
{A graph $G = (V,E)$ with $V = V_1 \uplus V_2 \uplus \ldots \uplus V_k$.}
{Is there a set of vertices $C \subseteq V$ so that $\forall v,w \in C: v\ne w \Rightarrow \{v,w\} \in E$ and $\abs{C} = k$?}

Let $V_i = \{v_{i,1}, v_{i,2}, \ldots, v_{i, \abs{V_i}}\}$.
Furthermore, let $E_{i,j}$ denote the set of edges between the partitions $V_i$ and $V_j$.
We construct a \mcpsatshrt{}-instance $((X_1, X_2, \ldots, X_k), \Phi)$ that is a yes-instance of \mcpsatshrt{} if and only if $G = (V_1 \uplus V_2 \uplus \ldots \uplus V_k, E)$ is a yes-instance of \mcc{}.
We define $X_i = \{x_{(i,j)} \mid v_{i,j} \in V_i\}$ to have a variable for each vertex in the partition $V_i$. 
Setting a variable $x_{(i,j)}$ to true corresponds to selecting this vertex for the clique. (Note that \mcpsatshrt{} asks for exactly one true variable in each set $X_i$ which ensures than only one vertex from each partition is selected.)
Now we construct $\Phi$ to ensure that in each pair of different partitions the selected vertices in those partitions are connected by an edge:
\[
	\Phi = \bigwedge_{1 \le i < j \le k} \left( \bigvee_{\{v_{i,a}, v_{j,b}\} \in E_{i,j}} x_{i,a} \wedge x_{j,b} \right)
\]

\begin{lemma} \label{lemma:mcc_to_djpsat}
	$(G = (V_1 \uplus V_2 \uplus \ldots \uplus V_k, E))$ is a yes-instance of \mcc{} if and only if $((X_1, X_2, \ldots, X_k), \Phi)$ is a yes-instance of \mcpsatshrt{}.
\end{lemma}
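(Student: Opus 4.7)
The plan is to prove the equivalence by showing each direction separately, exploiting the direct correspondence between vertices of the multi-partite graph and variables, and between edges and the conjuncts of~$\Phi$.

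For the forward direction ($\Rightarrow$), I would assume that $C = \{v_{1,a_1}, v_{2,a_2}, \ldots, v_{k,a_k}\}$ is a multi-colored clique in~$G$, with $v_{i,a_i} \in V_i$ for each~$i$. The natural candidate assignment is to set exactly the variables $x_{(i,a_i)}$ to true and all other variables to false. This respects the color-class constraint of \mcpsatshrt{} by construction, so it remains to verify that it satisfies~$\Phi$. For each conjunct indexed by $i < j$, since $v_{i,a_i}$ and $v_{j,a_j}$ belong to a clique they are connected by an edge $\{v_{i,a_i}, v_{j,a_j}\} \in E_{i,j}$, so the disjunct $x_{i,a_i} \wedge x_{j,a_j}$ appears in the big disjunction and is true, satisfying the conjunct.

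For the backward direction ($\Leftarrow$), I would start from a satisfying assignment selecting exactly one true variable $x_{(i,a_i)}$ from each~$X_i$, and define $C := \{v_{i,a_i} \mid i \in [k]\}$, which has exactly one vertex per color class. For each pair $i<j$, the corresponding conjunct of~$\Phi$ must be satisfied, so at least one disjunct $x_{i,a} \wedge x_{j,b}$ with $\{v_{i,a}, v_{j,b}\} \in E_{i,j}$ must evaluate to true. Since $x_{(i,a_i)}$ is the only true variable in~$X_i$ and $x_{(j,a_j)}$ is the only true variable in~$X_j$, we must have $a = a_i$ and $b = a_j$, so the edge $\{v_{i,a_i}, v_{j,a_j}\}$ exists in~$G$. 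As this holds for every pair, $C$~is a clique.

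There is no real obstacle here; the proof is essentially a direct unwinding of the construction. The only point that requires a moment's care is the backward direction, where one must explicitly use the ``exactly one true variable per~$X_i$'' constraint to pin down $a = a_i$ and $b = a_j$ from the disjunct $x_{i,a} \wedge x_{j,b}$; without this uniqueness, a satisfying assignment could in principle witness edges between vertices that are not in the chosen~$C$. Everything else reduces to observing that the conjunctive structure of~$\Phi$ encodes the $\binom{k}{2}$~pairwise adjacency requirements of a $k$-clique.
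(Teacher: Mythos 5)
Your proof is correct and follows essentially the same route as the paper's: in the forward direction, set to true exactly the variables corresponding to the clique vertices and check each conjunct via the clique edges; in the backward direction, take the uniquely selected vertices and use the satisfied disjunctions to recover the pairwise edges. Your explicit observation that the ``exactly one true variable per $X_i$'' constraint is needed to pin down $a = a_i$ and $b = a_j$ is a point the paper leaves implicit, but it is the same argument.
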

\begin{proof}
	~\vspace{-\baselineskip} %
	\proofsubparagraph{($\Rightarrow$):} Assume $(G = (V_1 \uplus V_2 \uplus \ldots \uplus V_k, E))$ is a yes-instance of \mcc{}.
	This means that there is a subset of vertices $C \subseteq V$ with $\abs{C} = k$ so that $\forall v,w \in C: v \ne w \implies \{v,w\} \in E$ holds.
	Since the partitions are pairwise disjoint and $\abs{C} = k$ there is exactly one vertex in $C$ from each partition. 
	Let this vertex be denoted as $v_{i,a_i}$ for each partition~$V_i$.
	For the \mcpsatshrt{}-instance we set the corresponding variable $x_{i,a_i}$ to true.
	It can be seen that $\Phi$ is satisfied.
	For all $1 \le i < j \le k,$ the disjunction 
	\[
		\bigvee_{\{v_{i,a}, v_{j,b}\} \in E_{i,j}} x_{i,a} \wedge x_{j,b}
	\]
	is satisfied, since there is an edge $\{v_{i,a}, v_{j,b}\} \in E_{i,j}$ (due to our assumption that we have a clique) and the corresponding variables $x_{i,a}$ and $x_{j,b}$ are both set to true. 

	\proofsubparagraph{($\Rightarrow$):} Assume $((X_1, X_2, \ldots, X_k), \Phi)$ is a yes-instance of \mcpsatshrt{}.
	Thus, there is a truth assignment that assigns exactly one variable from $X_i$ for $i \in [k]$ to true that satisfies $\Phi$.
	Let $x_{i,a}$ be the variable set to true in $X_i$. 
	We select the corresponding vertex $v_{i,a}$ to be in a set $C$.
	Since there are $k$ true variables, we have $\abs{C} = k$ and each vertex in $C$ from a different partition.
	Because $\Phi$ is satisfied for all $1 \le i < j \le k$ the disjunction 
	\[
		\bigvee_{\{v_{i,a}, v_{j,b}\} \in E_{i,j}} x_{i,a} \wedge x_{j,b}
	\]
	is satisfied.
	Hence, there exists one pair of variables $x_{i,a}$ and $x_{j,b}$ that are both true, and there is an edge $\{v_{i,a}, v_{j,b}\} \in E_{i,j}$ between the corresponding vertices.
	These vertices~$v_{i,a}$ and $v_{j,b}$ are both in $C$. 
	Thus, $C$ is a clique.
\end{proof}

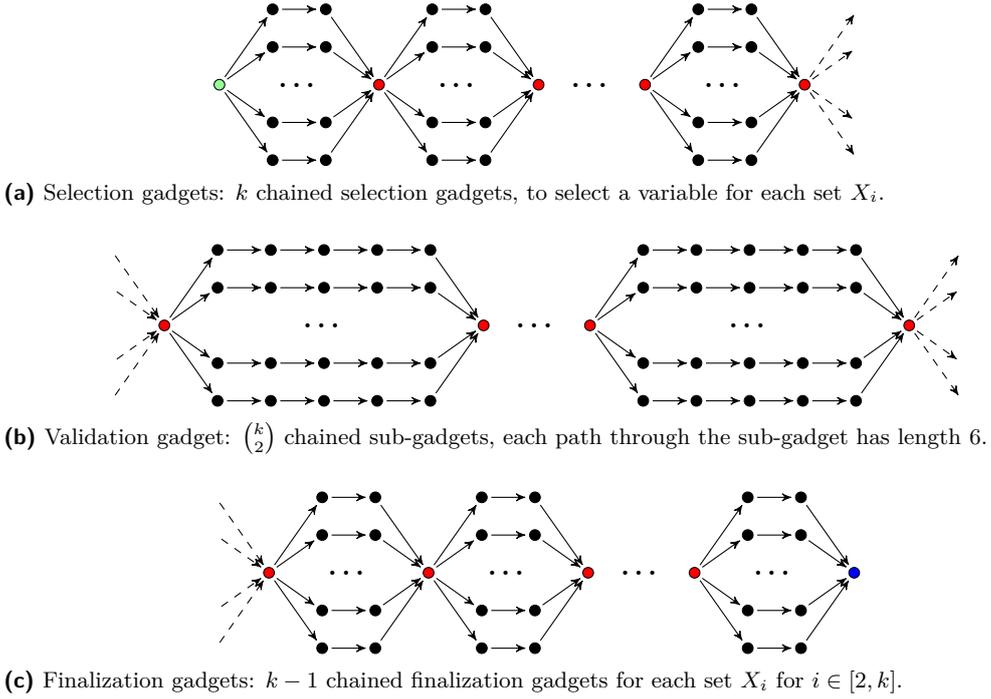
\begin{figure}[t]
	\centering
	\tikzstyle{alter}=[circle, minimum size=16pt, draw, inner sep=1pt] 
	\tikzstyle{majarr}=[draw=black]

	\begin{subfigure}[b]{\textwidth}
		\centering
		
		\begin{tikzpicture}[auto, >=stealth',shorten <=1pt, shorten >=1pt, state/.style={alter, scale=0.75, draw, minimum size=8cm}]
			\tikzstyle{majarr}=[draw=black,->,shorten <=1.5pt, shorten >=1.5pt]
			
\node[alter, fill={rgb,255:red,150; green,255; blue,150}, scale=0.25] at (0.0, 0) (sel1s) {};
\node[alter, fill=red, scale=0.25] at (2.0999999999999996, 0) (sel1z) {};
\node[alter, fill=black, scale=0.25] at (0.7, 1.0) (sel101) {};
\node[alter, fill=black, scale=0.25] at (1.4, 1.0) (sel102) {};
\draw[majarr] (sel1s) edge (sel101);
\draw[majarr] (sel101) edge (sel102);
\draw[majarr] (sel102) edge (sel1z);
\node[alter, fill=black, scale=0.25] at (0.7, 0.5) (sel111) {};
\node[alter, fill=black, scale=0.25] at (1.4, 0.5) (sel112) {};
\draw[majarr] (sel1s) edge (sel111);
\draw[majarr] (sel111) edge (sel112);
\draw[majarr] (sel112) edge (sel1z);
\node[alter, fill=black, scale=0.25] at (0.7, -0.5) (sel121) {};
\node[alter, fill=black, scale=0.25] at (1.4, -0.5) (sel122) {};
\draw[majarr] (sel1s) edge (sel121);
\draw[majarr] (sel121) edge (sel122);
\draw[majarr] (sel122) edge (sel1z);
\node[alter, fill=black, scale=0.25] at (0.7, -1.0) (sel131) {};
\node[alter, fill=black, scale=0.25] at (1.4, -1.0) (sel132) {};
\draw[majarr] (sel1s) edge (sel131);
\draw[majarr] (sel131) edge (sel132);
\draw[majarr] (sel132) edge (sel1z);
\path (sel1s) -- node[auto=false]{\textbf{\ldots}} (sel1z);
\node[alter, fill=red, scale=0.25] at (4.199999999999999, 0) (sel2z) {};
\node[alter, fill=black, scale=0.25] at (2.8, 1.0) (sel201) {};
\node[alter, fill=black, scale=0.25] at (3.5, 1.0) (sel202) {};
\draw[majarr] (sel1z) edge (sel201);
\draw[majarr] (sel201) edge (sel202);
\draw[majarr] (sel202) edge (sel2z);
\node[alter, fill=black, scale=0.25] at (2.8, 0.5) (sel211) {};
\node[alter, fill=black, scale=0.25] at (3.5, 0.5) (sel212) {};
\draw[majarr] (sel1z) edge (sel211);
\draw[majarr] (sel211) edge (sel212);
\draw[majarr] (sel212) edge (sel2z);
\node[alter, fill=black, scale=0.25] at (2.8, -0.5) (sel221) {};
\node[alter, fill=black, scale=0.25] at (3.5, -0.5) (sel222) {};
\draw[majarr] (sel1z) edge (sel221);
\draw[majarr] (sel221) edge (sel222);
\draw[majarr] (sel222) edge (sel2z);
\node[alter, fill=black, scale=0.25] at (2.8, -1.0) (sel231) {};
\node[alter, fill=black, scale=0.25] at (3.5, -1.0) (sel232) {};
\draw[majarr] (sel1z) edge (sel231);
\draw[majarr] (sel231) edge (sel232);
\draw[majarr] (sel232) edge (sel2z);
\path (sel1z) -- node[auto=false]{\textbf{\ldots}} (sel2z);
\node[alter, fill=red, scale=0.25] at (5.6, 0) (sel3s) {};
\node[alter, fill=red, scale=0.25] at (7.699999999999999, 0) (sel3z) {};
\node[alter, fill=black, scale=0.25] at (6.3, 1.0) (sel301) {};
\node[alter, fill=black, scale=0.25] at (7.0, 1.0) (sel302) {};
\draw[majarr] (sel3s) edge (sel301);
\draw[majarr] (sel301) edge (sel302);
\draw[majarr] (sel302) edge (sel3z);
\node[alter, fill=black, scale=0.25] at (6.3, 0.5) (sel311) {};
\node[alter, fill=black, scale=0.25] at (7.0, 0.5) (sel312) {};
\draw[majarr] (sel3s) edge (sel311);
\draw[majarr] (sel311) edge (sel312);
\draw[majarr] (sel312) edge (sel3z);
\node[alter, fill=black, scale=0.25] at (6.3, -0.5) (sel321) {};
\node[alter, fill=black, scale=0.25] at (7.0, -0.5) (sel322) {};
\draw[majarr] (sel3s) edge (sel321);
\draw[majarr] (sel321) edge (sel322);
\draw[majarr] (sel322) edge (sel3z);
\node[alter, fill=black, scale=0.25] at (6.3, -1.0) (sel331) {};
\node[alter, fill=black, scale=0.25] at (7.0, -1.0) (sel332) {};
\draw[majarr] (sel3s) edge (sel331);
\draw[majarr] (sel331) edge (sel332);
\draw[majarr] (sel332) edge (sel3z);
\path (sel3s) -- node[auto=false]{\textbf{\ldots}} (sel3z);
\path (sel2z) -- node[auto=false]{\textbf{\ldots}} (sel3s);
\node[scale=0.25] at (8.399999999999999, 1.0) (sel3zd0) {};
\draw[majarr, dashed] (sel3z) edge (sel3zd0);
\node[scale=0.25] at (8.399999999999999, 0.5) (sel3zd1) {};
\draw[majarr, dashed] (sel3z) edge (sel3zd1);
\node[scale=0.25] at (8.399999999999999, -0.5) (sel3zd2) {};
\draw[majarr, dashed] (sel3z) edge (sel3zd2);
\node[scale=0.25] at (8.399999999999999, -1.0) (sel3zd3) {};
\draw[majarr, dashed] (sel3z) edge (sel3zd3);

		\end{tikzpicture}	
		
		\caption{Selection gadgets: $k$ chained selection gadgets, to select a variable for each set $X_i$.}		
	\end{subfigure}
	
	\begin{tikzpicture}
		\node[] at (0,0) (a) {};
	\end{tikzpicture}

	\begin{subfigure}[b]{\textwidth}
		\centering
		
		\begin{tikzpicture}[auto, >=stealth',shorten <=1pt, shorten >=1pt, state/.style={alter, scale=0.75, draw, minimum size=8cm}]
			\tikzstyle{majarr}=[draw=black,->,shorten <=1.5pt, shorten >=1.5pt]
			
\node[alter, fill=red, scale=0.25] at (0.0, 0) (val1s) {};
\node[alter, fill=red, scale=0.25] at (4.199999999999999, 0) (val1z) {};
\node[alter, fill=black, scale=0.25] at (0.7, 1.0) (val101) {};
\node[alter, fill=black, scale=0.25] at (1.4, 1.0) (val102) {};
\node[alter, fill=black, scale=0.25] at (2.0999999999999996, 1.0) (val103) {};
\node[alter, fill=black, scale=0.25] at (2.8, 1.0) (val104) {};
\node[alter, fill=black, scale=0.25] at (3.5, 1.0) (val105) {};
\draw[majarr] (val1s) edge (val101);
\draw[majarr] (val101) edge (val102);
\draw[majarr] (val102) edge (val103);
\draw[majarr] (val103) edge (val104);
\draw[majarr] (val104) edge (val105);
\draw[majarr] (val105) edge (val1z);
\node[alter, fill=black, scale=0.25] at (0.7, 0.5) (val111) {};
\node[alter, fill=black, scale=0.25] at (1.4, 0.5) (val112) {};
\node[alter, fill=black, scale=0.25] at (2.0999999999999996, 0.5) (val113) {};
\node[alter, fill=black, scale=0.25] at (2.8, 0.5) (val114) {};
\node[alter, fill=black, scale=0.25] at (3.5, 0.5) (val115) {};
\draw[majarr] (val1s) edge (val111);
\draw[majarr] (val111) edge (val112);
\draw[majarr] (val112) edge (val113);
\draw[majarr] (val113) edge (val114);
\draw[majarr] (val114) edge (val115);
\draw[majarr] (val115) edge (val1z);
\node[alter, fill=black, scale=0.25] at (0.7, -0.5) (val121) {};
\node[alter, fill=black, scale=0.25] at (1.4, -0.5) (val122) {};
\node[alter, fill=black, scale=0.25] at (2.0999999999999996, -0.5) (val123) {};
\node[alter, fill=black, scale=0.25] at (2.8, -0.5) (val124) {};
\node[alter, fill=black, scale=0.25] at (3.5, -0.5) (val125) {};
\draw[majarr] (val1s) edge (val121);
\draw[majarr] (val121) edge (val122);
\draw[majarr] (val122) edge (val123);
\draw[majarr] (val123) edge (val124);
\draw[majarr] (val124) edge (val125);
\draw[majarr] (val125) edge (val1z);
\node[alter, fill=black, scale=0.25] at (0.7, -1.0) (val131) {};
\node[alter, fill=black, scale=0.25] at (1.4, -1.0) (val132) {};
\node[alter, fill=black, scale=0.25] at (2.0999999999999996, -1.0) (val133) {};
\node[alter, fill=black, scale=0.25] at (2.8, -1.0) (val134) {};
\node[alter, fill=black, scale=0.25] at (3.5, -1.0) (val135) {};
\draw[majarr] (val1s) edge (val131);
\draw[majarr] (val131) edge (val132);
\draw[majarr] (val132) edge (val133);
\draw[majarr] (val133) edge (val134);
\draw[majarr] (val134) edge (val135);
\draw[majarr] (val135) edge (val1z);
\path (val1s) -- node[auto=false]{\textbf{\ldots}} (val1z);
\node[alter, fill=red, scale=0.25] at (5.6, 0) (sel3s) {};
\node[alter, fill=red, scale=0.25] at (9.799999999999999, 0) (sel3z) {};
\node[alter, fill=black, scale=0.25] at (6.3, 1.0) (sel301) {};
\node[alter, fill=black, scale=0.25] at (7.0, 1.0) (sel302) {};
\node[alter, fill=black, scale=0.25] at (7.699999999999999, 1.0) (sel303) {};
\node[alter, fill=black, scale=0.25] at (8.399999999999999, 1.0) (sel304) {};
\node[alter, fill=black, scale=0.25] at (9.1, 1.0) (sel305) {};
\draw[majarr] (sel3s) edge (sel301);
\draw[majarr] (sel301) edge (sel302);
\draw[majarr] (sel302) edge (sel303);
\draw[majarr] (sel303) edge (sel304);
\draw[majarr] (sel304) edge (sel305);
\draw[majarr] (sel305) edge (sel3z);
\node[alter, fill=black, scale=0.25] at (6.3, 0.5) (sel311) {};
\node[alter, fill=black, scale=0.25] at (7.0, 0.5) (sel312) {};
\node[alter, fill=black, scale=0.25] at (7.699999999999999, 0.5) (sel313) {};
\node[alter, fill=black, scale=0.25] at (8.399999999999999, 0.5) (sel314) {};
\node[alter, fill=black, scale=0.25] at (9.1, 0.5) (sel315) {};
\draw[majarr] (sel3s) edge (sel311);
\draw[majarr] (sel311) edge (sel312);
\draw[majarr] (sel312) edge (sel313);
\draw[majarr] (sel313) edge (sel314);
\draw[majarr] (sel314) edge (sel315);
\draw[majarr] (sel315) edge (sel3z);
\node[alter, fill=black, scale=0.25] at (6.3, -0.5) (sel321) {};
\node[alter, fill=black, scale=0.25] at (7.0, -0.5) (sel322) {};
\node[alter, fill=black, scale=0.25] at (7.699999999999999, -0.5) (sel323) {};
\node[alter, fill=black, scale=0.25] at (8.399999999999999, -0.5) (sel324) {};
\node[alter, fill=black, scale=0.25] at (9.1, -0.5) (sel325) {};
\draw[majarr] (sel3s) edge (sel321);
\draw[majarr] (sel321) edge (sel322);
\draw[majarr] (sel322) edge (sel323);
\draw[majarr] (sel323) edge (sel324);
\draw[majarr] (sel324) edge (sel325);
\draw[majarr] (sel325) edge (sel3z);
\node[alter, fill=black, scale=0.25] at (6.3, -1.0) (sel331) {};
\node[alter, fill=black, scale=0.25] at (7.0, -1.0) (sel332) {};
\node[alter, fill=black, scale=0.25] at (7.699999999999999, -1.0) (sel333) {};
\node[alter, fill=black, scale=0.25] at (8.399999999999999, -1.0) (sel334) {};
\node[alter, fill=black, scale=0.25] at (9.1, -1.0) (sel335) {};
\draw[majarr] (sel3s) edge (sel331);
\draw[majarr] (sel331) edge (sel332);
\draw[majarr] (sel332) edge (sel333);
\draw[majarr] (sel333) edge (sel334);
\draw[majarr] (sel334) edge (sel335);
\draw[majarr] (sel335) edge (sel3z);
\path (sel3s) -- node[auto=false]{\textbf{\ldots}} (sel3z);
\path (val1z) -- node[auto=false]{\textbf{\ldots}} (sel3s);
\node[scale=0.25] at (-0.7, 1.0) (val1sd0) {};
\draw[majarr, dashed] (val1sd0) edge (val1s);
\node[scale=0.25] at (-0.7, 0.5) (val1sd1) {};
\draw[majarr, dashed] (val1sd1) edge (val1s);
\node[scale=0.25] at (-0.7, -0.5) (val1sd2) {};
\draw[majarr, dashed] (val1sd2) edge (val1s);
\node[scale=0.25] at (-0.7, -1.0) (val1sd3) {};
\draw[majarr, dashed] (val1sd3) edge (val1s);
\node[scale=0.25] at (10.5, 1.0) (sel3zd0) {};
\draw[majarr, dashed] (sel3z) edge (sel3zd0);
\node[scale=0.25] at (10.5, 0.5) (sel3zd1) {};
\draw[majarr, dashed] (sel3z) edge (sel3zd1);
\node[scale=0.25] at (10.5, -0.5) (sel3zd2) {};
\draw[majarr, dashed] (sel3z) edge (sel3zd2);
\node[scale=0.25] at (10.5, -1.0) (sel3zd3) {};
\draw[majarr, dashed] (sel3z) edge (sel3zd3);

		\end{tikzpicture}	
		
		\caption{Validation gadget: $\binom{k}{2}$ chained sub-gadgets, each path through the sub-gadget has length $6$.}		
	\end{subfigure}
	
	\begin{tikzpicture}
		\node[] at (0,0) (a) {};
	\end{tikzpicture}

	\begin{subfigure}[b]{\textwidth}
		\centering
		
		\begin{tikzpicture}[auto, >=stealth',shorten <=1pt, shorten >=1pt, state/.style={alter, scale=0.75, draw, minimum size=8cm}]
			\tikzstyle{majarr}=[draw=black,->,shorten <=1.5pt, shorten >=1.5pt]
			
\node[alter, fill=red, scale=0.25] at (0.0, 0) (fin1s) {};
\node[alter, fill=red, scale=0.25] at (2.0999999999999996, 0) (fin1z) {};
\node[alter, fill=black, scale=0.25] at (0.7, 1.0) (fin101) {};
\node[alter, fill=black, scale=0.25] at (1.4, 1.0) (fin102) {};
\draw[majarr] (fin1s) edge (fin101);
\draw[majarr] (fin101) edge (fin102);
\draw[majarr] (fin102) edge (fin1z);
\node[alter, fill=black, scale=0.25] at (0.7, 0.5) (fin111) {};
\node[alter, fill=black, scale=0.25] at (1.4, 0.5) (fin112) {};
\draw[majarr] (fin1s) edge (fin111);
\draw[majarr] (fin111) edge (fin112);
\draw[majarr] (fin112) edge (fin1z);
\node[alter, fill=black, scale=0.25] at (0.7, -0.5) (fin121) {};
\node[alter, fill=black, scale=0.25] at (1.4, -0.5) (fin122) {};
\draw[majarr] (fin1s) edge (fin121);
\draw[majarr] (fin121) edge (fin122);
\draw[majarr] (fin122) edge (fin1z);
\node[alter, fill=black, scale=0.25] at (0.7, -1.0) (fin131) {};
\node[alter, fill=black, scale=0.25] at (1.4, -1.0) (fin132) {};
\draw[majarr] (fin1s) edge (fin131);
\draw[majarr] (fin131) edge (fin132);
\draw[majarr] (fin132) edge (fin1z);
\path (fin1s) -- node[auto=false]{\textbf{\ldots}} (fin1z);
\node[alter, fill=red, scale=0.25] at (4.199999999999999, 0) (fin2z) {};
\node[alter, fill=black, scale=0.25] at (2.8, 1.0) (fin201) {};
\node[alter, fill=black, scale=0.25] at (3.5, 1.0) (fin202) {};
\draw[majarr] (fin1z) edge (fin201);
\draw[majarr] (fin201) edge (fin202);
\draw[majarr] (fin202) edge (fin2z);
\node[alter, fill=black, scale=0.25] at (2.8, 0.5) (fin211) {};
\node[alter, fill=black, scale=0.25] at (3.5, 0.5) (fin212) {};
\draw[majarr] (fin1z) edge (fin211);
\draw[majarr] (fin211) edge (fin212);
\draw[majarr] (fin212) edge (fin2z);
\node[alter, fill=black, scale=0.25] at (2.8, -0.5) (fin221) {};
\node[alter, fill=black, scale=0.25] at (3.5, -0.5) (fin222) {};
\draw[majarr] (fin1z) edge (fin221);
\draw[majarr] (fin221) edge (fin222);
\draw[majarr] (fin222) edge (fin2z);
\node[alter, fill=black, scale=0.25] at (2.8, -1.0) (fin231) {};
\node[alter, fill=black, scale=0.25] at (3.5, -1.0) (fin232) {};
\draw[majarr] (fin1z) edge (fin231);
\draw[majarr] (fin231) edge (fin232);
\draw[majarr] (fin232) edge (fin2z);
\path (fin1z) -- node[auto=false]{\textbf{\ldots}} (fin2z);
\node[alter, fill=red, scale=0.25] at (5.6, 0) (fin3s) {};
\node[alter, fill=blue, scale=0.25] at (7.699999999999999, 0) (fin3z) {};
\node[alter, fill=black, scale=0.25] at (6.3, 1.0) (fin301) {};
\node[alter, fill=black, scale=0.25] at (7.0, 1.0) (fin302) {};
\draw[majarr] (fin3s) edge (fin301);
\draw[majarr] (fin301) edge (fin302);
\draw[majarr] (fin302) edge (fin3z);
\node[alter, fill=black, scale=0.25] at (6.3, 0.5) (fin311) {};
\node[alter, fill=black, scale=0.25] at (7.0, 0.5) (fin312) {};
\draw[majarr] (fin3s) edge (fin311);
\draw[majarr] (fin311) edge (fin312);
\draw[majarr] (fin312) edge (fin3z);
\node[alter, fill=black, scale=0.25] at (6.3, -0.5) (fin321) {};
\node[alter, fill=black, scale=0.25] at (7.0, -0.5) (fin322) {};
\draw[majarr] (fin3s) edge (fin321);
\draw[majarr] (fin321) edge (fin322);
\draw[majarr] (fin322) edge (fin3z);
\node[alter, fill=black, scale=0.25] at (6.3, -1.0) (fin331) {};
\node[alter, fill=black, scale=0.25] at (7.0, -1.0) (fin332) {};
\draw[majarr] (fin3s) edge (fin331);
\draw[majarr] (fin331) edge (fin332);
\draw[majarr] (fin332) edge (fin3z);
\path (fin3s) -- node[auto=false]{\textbf{\ldots}} (fin3z);
\path (fin2z) -- node[auto=false]{\textbf{\ldots}} (fin3s);
\node[scale=0.25] at (-0.7, 1.0) (fin1sd0) {};
\draw[majarr, dashed] (fin1sd0) edge (fin1s);
\node[scale=0.25] at (-0.7, 0.5) (fin1sd1) {};
\draw[majarr, dashed] (fin1sd1) edge (fin1s);
\node[scale=0.25] at (-0.7, -0.5) (fin1sd2) {};
\draw[majarr, dashed] (fin1sd2) edge (fin1s);
\node[scale=0.25] at (-0.7, -1.0) (fin1sd3) {};
\draw[majarr, dashed] (fin1sd3) edge (fin1s);

		\end{tikzpicture}		
		
		\caption{Finalization gadgets: $k-1$ chained finalization gadgets for each set $X_i$ for $i\in [2,k]$.}
	\end{subfigure}
	
	\caption{The temporal graph resulting from the \mcc{} $\lepoly$ \drp{} reduction. 
	The start vertex is highlighted light green, the end vertex is highlighted dark blue. 
	All time labels are ommited. 
	Red vertices form a feedback vertex set of the underlying temporal graph.}
	\label{fig:mcc_drp_red_graph}
\end{figure}

\begin{proof}[Proof of \cref{theorem:drp_w1hardness}]
To obtain \cref{theorem:drp_w1hardness} from \cref{lemma:mcc_to_djpsat} we will look at the \drp{}-instance $(\GG, s_1, f_k, \delta=1, 2\cdot k - 1)$ that results from the reduction. First, observe that the reduction can be performed in polynomial time.
We can immediately see that the number of allowed delays $2\cdot k - 1$ is upper-bounded by a function depending only in $k$.
The temporal graph $\GG$ is visualized in \cref{fig:mcc_drp_red_graph}.
In this figure a set of feedback vertices is highlighted in cyan.
One can see that the number of cyan feedback vertices is $2\cdot k + \binom{k}{2} - 2$: 
For each of the $k$ selection gadgets and each of the $k-1$ finalization gadgets there is one cyan vertex.
For each of the $\binom{k}{2}$ chained sub-gadgets in the validation gadget there is one cyan vertex plus one additional cyan vertex.
Two vertices where double counted since the start and end of the validation gadget where already counted in the selection and finalization gadget respectively.
Hence, the feedback vertex number is upper-bounded by a function in $k$.
Furthermore, we can see that all possible paths from $s_1$ (green vertex) to $f_k$ (red vertex) have a fixed length of $3 \cdot (2k-1) + 6 \cdot \binom{k}{2}$ which is also upper-bounded by a function in~$k$.
\end{proof}

The presented hardness result show that we presumably cannot generalize \cref{thm:polyforest} to an FPT-result for parameters such as the treewidth of the underlying graph or the feedback vertex number of the underlying graph.

\section{Parameterized Algorithms}\label{sec:algs}

In \cref{subsec:reduction_framework}, we presented several hardness results. Here, we present our algorithmic results for general input graphs which can be seen as different ways to generalize \cref{thm:polyforest}. We start with an XP-algorithm for the number of delays as a parameter and then present two FPT-algorithms for ``distance to forest'' parameters.

\subsection{Number of Delays} \label{sec:drp_fpt_delay_tau}

In what follows, we present an algorithm similar to Dijkstra's algorithm~\cite{dijkstra1959note}. 
Starting at the source vertex $s$, it finds all optimal temporal $(s,v)$-routes by expanding each optimum by one step per iteration.
However, as we have seen in the polynomial-time reductions in \cref{subsec:reduction_framework}, there can be many $(s,z)$-routes that are Pareto-optimal with respect to the arrival time for a given number of delays.
We use the dynamic program from \cref{thm:polyforest} to extend the paths by a single time arc.
Our main result of this section is that \drp{} admits an \CXP{}-algorithm with respect to the number $x$ of delays. \cref{theorem:drp_w1hardness} implies that we presumably cannot improve this to an FPT result for this parameter.
Formally, we show the following.

\begin{theorem} \label{theorem:drp_xp_x}
	\drp{} can be solved in 
	\(
	\mathcal{O}(\abs{V}^3 \cdot \abs{E}^{2x} \cdot x^2)
	\)
	time, where $x$ is the number of allowed delays.
\end{theorem}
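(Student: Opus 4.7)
The plan is to generalize the dynamic program underlying \cref{thm:polyforest} from checking a single fixed route to exploring all potentially useful routes in a Dijkstra-like fashion. For every vertex $v$, the algorithm maintains a set $\mathcal{R}_v$ of Pareto-optimal $(s,v)$-routes, where a route $R_1$ Pareto-dominates a route $R_2$ (both ending at $v$) if $A_{R_1}[y] \leq A_{R_2}[y]$ for every $y \in \{0, 1, \dots, x\}$. Starting from $\mathcal{R}_s$ containing only the trivial route $(s)$ with profile $A[y] = 0$, the algorithm repeatedly picks a route $R$ in some $\mathcal{R}_v$, extends it via every candidate next vertex $w$, computes the profile $A_{R\cdot w}$ from $A_R$ using the recursion of \cref{subsec:drp_verification}, and inserts $R \cdot w$ into $\mathcal{R}_w$ whenever it is not already Pareto-dominated there (purging any entries of $\mathcal{R}_w$ that become dominated). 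The instance is a yes-instance precisely if some $R \in \mathcal{R}_z$ satisfies $A_R[x] < \infty$.

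Correctness rests on two monotonicity properties. First, $\arrival(v, w, \cdot, y')$ is non-decreasing in its time argument, so whenever $A_{R_1}[y'] \leq A_{R_2}[y']$ for all $y'$, also $A_{R_1 \cdot w}[y] \leq A_{R_2 \cdot w}[y]$ for all $y$; hence Pareto dominance is preserved by extensions, and keeping only Pareto-optimal prefixes never discards a solution. Second, by \cref{lemma:vwalk_then_vpath} we may restrict attention to routes without repeated vertices, bounding the number of expansion rounds by $\abs{V} - 1$. By induction on the round index one then verifies that, if an $x$-delay-robust $(s,z)$-route $R^*$ exists, every prefix of $R^*$ has a Pareto-dominating counterpart in the corresponding $\mathcal{R}_v$, so $\mathcal{R}_z$ eventually witnesses the yes-instance.

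For the running-time analysis, the key structural observation is that for every nontrivial route $R$, each entry $A_R[y]$ equals $\tatime(e) + \tatrav(e)$ or $\tatime(e) + \tatrav(e) + \delta$ for some time arc $e \in E$; this follows by unfolding the recursion, since both the inner $\min$ in the definition of $\arrival$ and the outer $\max$ over $y'$ simply select one of their operands, each of which already has this form. Hence each of the $x+1$ entries of a profile has at most $2\abs{E}$ possible values, giving $\abs{\mathcal{R}_v} \in \bigO(\abs{E}^{2x})$ for every $v$. Extending one profile through one next vertex takes $\bigO(x^2)$ time for the nested maximum-minimum evaluation (after preprocessing the time arcs by sorting), and there are at most $\abs{V}$ next-vertex candidates per profile, at most $\abs{V}$ expansion rounds, and at most $\abs{V}$ vertices whose frontiers are processed per round, yielding the claimed $\bigO(\abs{V}^3 \cdot \abs{E}^{2x} \cdot x^2)$ bound. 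The main obstacle I anticipate is making the profile-count bound fully rigorous --- in particular, verifying that Pareto comparison is sound even when some entries are $\infty$ (indicating an infeasible prefix), and that the inductive correctness argument correctly handles the interaction between the outer maximum in the recursion and the Pareto-dominance pruning.
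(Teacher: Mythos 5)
Your proposal matches the paper's proof: both run a Dijkstra-style search that maintains, for each vertex, the set of Pareto-minimal arrival-time vectors of $(s,v)$-routes, extends them one vertex at a time via the dynamic program of \cref{subsec:drp_verification}, and justifies pruning by the monotonicity of $\arrival$ in its time argument (the paper's \cref{obs:preceq_arrival_time}). The only difference is bookkeeping: you bound the number of profiles by observing each entry has the form $\tatime(e)+\tatrav(e)$ or $\tatime(e)+\tatrav(e)+\delta$, whereas the paper rounds entries up to the next outgoing time step of the current vertex --- both yield $\abs{E}^{\bigO(x)}$ profiles and the same final bound.
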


For each route, its \emph{arrival time vector} $\vec{t} = (t_0, t_1, \ldots, t_x)$ 
is a vector of $x+1$ time steps
where $t_y$ is the worst-case arrival time for $y$ delays.
We define a partial order~$\preceq$ to compare arrival time vectors.
	For $\vec{t} = (t_0, t_1, \ldots, t_x)$ and $\vec{t'} = (t'_0, t'_1, \ldots, t'_x)$, set
	$\vec{t} \preceq \vec{t'}$ if and only if  $t_y \le t'_y$ for all $y \leq x$.
This partial order can be used to decrease the set of prefix paths that need to be considered due to the following observation.

\begin{observation} \label{obs:preceq_arrival_time}
	Let $\GG = (V,E)$ be a temporal graph, let $s,v,z \in V$ be three vertices, and $P_1$ and $P_2$ be two delay-robust $(s,v)$-routes
	with the arrival time vectors $\vec{t}_1 \preceq \vec{t}_2$. 
	If there is a delay-robust $(s,z)$-route~$P$ so that $P = P_2 \circ P'$,
	then $P_1 \circ P'$ is also a delay-robust route.
	Additionally, if there is a delay-robust $(s,v)$-route,
	then there is one whose arrival time vector is minimal among all $(s,v)$-routes.
\end{observation}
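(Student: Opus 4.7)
The observation splits naturally into two parts. For the first, it suffices to exhibit a $D$-delayed temporal walk following $P_1 \circ P'$ for an arbitrary delay set $D \subseteq E$ with $\abs{D} \le x$. My plan is to partition the relevant portion of $D$ according to the two route segments, $D = D_a \sqcup D_b$, where $D_a$ consists of delayed time arcs between consecutive vertices of $P_1$ and $D_b$ of those between consecutive vertices of $P'$ (arcs not incident to any consecutive pair of either segment may be ignored, as they cannot influence a temporal walk along the route). Setting $y := \abs{D_a}$, \cref{lemma:drw-table-correct} applied to $P_1$ guarantees a temporal walk following $P_1$ that, under the delays $D_a$, arrives at $v$ at some time $t^* \le \vec{t}_1[y]$.

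The next step is to combine this with the hypothesis on $P_2 \circ P'$. I would pick some delay set $\tilde D_a$ of size at most $y$ that realizes the worst-case arrival time $\vec{t}_2[y]$ along $P_2$; then $\tilde D_a \cup D_b$ still has size at most $x$, so delay-robustness of $P_2 \circ P'$ yields a delayed temporal walk following $P_2 \circ P'$ under this set. This walk visits $v$ at time $\vec{t}_2[y]$ and then traverses $P'$ to $z$ while obeying the delays in $D_b$. Because $t^* \le \vec{t}_1[y] \le \vec{t}_2[y]$ by hypothesis, our $P_1$-prefix reaches $v$ no later than the time from which $P'$ is known to be traversable; the very same choice of time arcs along $P'$ is therefore still available (since each time arc admits every earlier start time). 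Concatenating the two pieces yields the required $D$-delayed temporal walk following $P_1 \circ P'$.

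For the second part, my plan is to argue by finiteness. Each coordinate $\vec{t}[y]$ of an arrival time vector of a delay-robust $(s,v)$-route is either $0$ or takes the form $\tatime(e) + \tatrav(e) + c \cdot \delta$ for some $e \in E$ and $c \in \{0, 1\}$ (this is visible in the definition of $\arrival(\cdot)$ used by the dynamic program). Consequently the set of arrival time vectors of delay-robust $(s,v)$-routes is a non-empty, finite subset of $\NN^{x+1}$, and a $\preceq$-minimal element exists by finiteness; any route realizing such a minimal vector witnesses the claim.

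The main obstacle I foresee lies in part~(i): carefully formalising the split $D = D_a \sqcup D_b$ so that a walk along $P_1$ coming from \cref{lemma:drw-table-correct} and a walk along $P'$ extracted from the delay-robust continuation of $P_2$ can actually be glued into a single $D$-delayed temporal walk. The key technical content is the observation that an earlier arrival at the junction vertex $v$ never obstructs the continuation along $P'$ under the fixed residual delays $D_b$; the rest of the proof is bookkeeping on the size-$y$ pieces of the delay set.
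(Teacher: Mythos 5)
Your argument is correct and is essentially a careful formalization of the paper's own (one-sentence) justification: split the delay budget at the junction vertex $v$, note that the $P_1$-prefix arrives no later than the worst case of the $P_2$-prefix for the same number of delays, and observe that an earlier arrival at $v$ never obstructs the continuation along $P'$. The finiteness argument for the existence of a $\preceq$-minimal arrival time vector is also fine (the paper does not even spell this part out).
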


Since for any delay one can arrive earlier in vertex~$v$ by using the route~$P_1$ compared to~$P_2$,
replacing the prefix~$P_2$ by~$P_1$ still guarantees delay-robustness.

We define a table~$A$ with entries for every vertex of $\GG$. The table entry $A[v]$ contains a set of arrival time vectors for $(s,v)$-routes. 
We will only store vectors that are minimal with respect to $\preceq$, since we do not need to consider others due to \cref{obs:preceq_arrival_time}.
Thus, the set $A[v]$ will represent the Pareto front of routes from~$s$ to~$v$.

Furthermore, we define a priority queue $Q$ that contains tuples~$(v,\vec{t})$ of vertices and arrival time vectors.
The queue is sorted according by the arrival time vectors according to $\preceq$. 
The queue elements $(v, \vec{t})$ contain the prefix routes from where a search should be expanded.

We initialize the table~$A$ as follows:
\[
	A[v] = 
	\begin{cases}
		\{ (0, \ldots, 0) \}, & \text{if } v = s\\
		\emptyset, & \text{otherwise.}
	\end{cases}
\]

The start vertex $s$ can always be reached through the empty path. 
For all other vertices there is initially no route stored.
Furthermore, we initialize the queue~$Q$ with the tuple $(s, (0, \ldots, 0))$. 

To compute the table entries we repeatedly pop the first element $(v, \vec{t})$ from $Q$ and propagate possible delay-robust routes from there.
If $(v, \vec{t})$ is in the queue, then this means that there is a delay-robust $(s,v)$-route~$P$ with the arrival time vector~$\vec{t}$. 

Let $\text{next}_v := \{ w \mid (v,w,t,\lambda) \in V \}$ denote the set of vertices reachable from $v$ by a single time arc.
For all $w \in \text{next}_v$, we compute the arrival time vector~$\vec{t'} = (t'_0, t'_1, \dots, t'_x)$ of~$P' = P \circ (w)$ using the dynamic program described in \cref{subsec:drp_verification}:
The arrival time vector of~$P'$ is simply the table row $A_{P'}$
and $P'$ is $y$-delay-robust if and only if $A_{P'}[y] < \infty$.

As an optimization, we can round up the arrival time entries to the next time step in $\tau_{w}^+$,
i.e. replace $t'_y$ by 
\[ \hat{t'_y} = \min_{t} \{ t \in \tau_{w}^+  \mid t \ge t'_y \}\].
This rounding does not change the delay-robustness of any route since no temporal walk can leave~$w$ between time~$t'_y$ and $\hat{t'_y}$.

If $P'$ is $x$-delay-robust, then we can add $\vec{t'}$ to the set $A[w]$, unless $A[w]$ already contains a smaller arrival time vector.
We then delete all $\vec{t''}$ with $\vec{t'} \preceq \vec{t''}$ from~$A[w]$ and also remove the corresponding elements $(w, \vec{t''})$ from the queue $Q$.
Finally, we insert $(w,\vec{t'})$ into~$Q$.

Once the queue~$Q$ is empty, we have investigated all $x$-robust prefix routes that might eventually lead to~$z$.
There is then a delay-robust $(s,v)$-route if and only if $A[z] \neq \emptyset$.

To analyze the running time we need to count the number of items of the queue being processed.

\begin{lemma}
\label{lemma:drp_fpt_x_time}
	Let $\abs{\tau_{\max}^+} = \max_{v \in V} \abs{\tau_{v}^+}$ be the maximum number of distinct time steps, where a single vertex has outgoing time arcs.
	The algorithm has a running time of 
	$$\mathcal{O}\left(\abs{V}^2 \cdot \abs{\tau_{\max}^+}^x \cdot \left(\abs{E} \cdot x^2 + \abs{V} \cdot \abs{\tau_{\max}^+}^x \cdot x\right) + \abs{E} \cdot \log \abs{E}\right).$$
\end{lemma}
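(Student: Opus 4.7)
The plan is to identify three sources of computation: initial sorting, the total number of pops from the priority queue, and the work done per pop. The runtime is then (preprocessing) $+$ (pops) $\times$ (work per pop).

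First, I would sort the time arcs by departure time once, at cost $O(|E| \log |E|)$, so that every subsequent call to $\arrival(v,w,t,y)$ can filter $E(v,w,t)$ in $O(|E|)$ time exactly as in the proof of \cref{lemma:drp_verif_running_time}. Second, to bound the total number of pops I would observe that at any moment $A[v]$ is a $\preceq$-antichain of vectors in $(\tau_v^+)^{x+1}$; projecting onto the first $x$ coordinates, each fiber contains at most one $\preceq$-minimal element, so $|A[v]| \le |\tau_{\max}^+|^x$. Crucially, the set of vectors \emph{ever popped} for a fixed $v$ is also a $\preceq$-antichain: suppose $\vec{t}_1 \prec \vec{t}_2$ were both popped and consider which of the two entered $A[v]$ later. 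If it was $\vec{t}_2$, then $\vec{t}_1$ (or some vector dominating $\vec{t}_1$, added after $\vec{t}_1$ was evicted) would have been in $A[v]$ and blocked $\vec{t}_2$'s insertion. If it was $\vec{t}_1$, then $\vec{t}_2$ would have been evicted from $A[v]$ and $Q$ at that point, contradicting that $(v,\vec{t}_2)$ is ever popped. Hence at most $|\tau_{\max}^+|^x$ pops occur per vertex, giving $|V| \cdot |\tau_{\max}^+|^x$ pops overall.

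Third, each pop $(v, \vec{t})$ triggers extension attempts to every $w \in \text{next}_v$, of which there are at most $|V|$. A single attempt runs one step of the dynamic program from \cref{subsec:drp_verification}, evaluating $\arrival(v,w,\cdot,\cdot)$ for $O(x^2)$ pairs $(y',y)$ at $O(|E|)$ each, which is $O(|E| \cdot x^2)$ in total. Inserting the resulting vector $\vec{t}'$ into $A[w]$ and updating $Q$ then requires a $\preceq$-comparison against the at most $O(|V| \cdot |\tau_{\max}^+|^x)$ current queue entries, each costing $O(x)$, for $O(|V| \cdot |\tau_{\max}^+|^x \cdot x)$ per attempt. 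Multiplying yields the claimed $O(|V|^2 \cdot |\tau_{\max}^+|^x \cdot (|E| \cdot x^2 + |V| \cdot |\tau_{\max}^+|^x \cdot x)) + O(|E| \log |E|)$.

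The subtlest step, which I expect to be the main obstacle, is the antichain argument that caps the total number of pops by $|V| \cdot |\tau_{\max}^+|^x$: it combines the rounding step (which restricts coordinates to $\tau_v^+$ without altering delay-robustness), the transitivity of $\preceq$, and the invariant that a vector, once dominated, can never be reinserted. Everything else is routine bookkeeping once these invariants are in hand.
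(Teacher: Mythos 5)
Your proposal is correct and follows essentially the same decomposition as the paper's proof: one initial sort in $\bigO(\abs{E}\log\abs{E})$ time, at most $\abs{V}\cdot\abs{\tau_{\max}^+}^x$ queue pops, and per pop at most $\abs{V}$ extension attempts, each costing $\bigO(\abs{E}\cdot x^2)$ for the dynamic-program step plus $\bigO(\abs{V}\cdot\abs{\tau_{\max}^+}^x\cdot x)$ for maintaining $A[w]$ and $Q$. Your antichain/projection argument merely makes explicit the bounds $\abs{A[v]}\le\abs{\tau_{\max}^+}^x$ and the cap on the number of processed queue elements, which the paper asserts more briefly via the Pareto-front invariant.
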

\begin{proof}
	Since we round the arrival time up to the next largest time step, where a vertex has outgoing time arcs, the earliest arrival time for $x$ delays can be one of at most $\abs{\tau_{v}^+} \le \abs{\tau_{\text{max}}^+}$ distinct arrival times.
	Hence, for any vertex $v$, there are at most $\abs{\tau_{v}^+}^x$ distinct arrival time vectors and therefore $\abs{A[v]} \in \mathcal{O}(\abs{\tau_{v}^+}^x)$.
	Furthermore, $\abs{Q} \in \mathcal{O}(\abs{V} \cdot \abs{\tau_{v}^+}^x)$.
	
	Given a delay-robust path $P = (s, \ldots, v)$ with arrival time vector $\vec{t}$, the single step computation whether for a successor vertex $w$ the path $P \circ (w)$ is delay-robust, is done the same way as in the verification algorithm in \cref{subsec:drp_verification}. 
	Hence, the time arcs are first sorted with respect to the arrival time in $\mathcal{O}(\abs{E} \cdot \log \abs{E})$ time.
	The single step computation itself takes $\mathcal{O}(\abs{E} \cdot x^2)$ time (see \cref{lemma:drp_verif_running_time}).
	This single step computation is done for all successor vertices of $v$, which are at most $\abs{V}$ many.
	Having computed a delay-robust path $P \circ (w)$ with arrival time vector $\vec{t'}$ we need to check whether we can add it to $A[w]$ and the priority queue.
	To do so, all (at most $\abs{\tau_{\text{max}}^+}^x$) entries in $A[w]$ are checked against $\vec{t'}$ which takes $\mathcal{O}(\abs{\tau_{\text{max}}^+}^x \cdot x)$ time.
	If the found path is a Pareto optimum, then it can be added to the priority queue and worse items can be removed in $\mathcal{O}(\abs{V} \cdot \abs{\tau_{\text{max}}^+}^x \cdot x)$ time (if naively implemented as a linked list).
	Thus, for any pair $(v,\vec{t})$ popped from the priority queue the algorithm consumes
	\begin{align*}
		& \mathcal{O}\left(\abs{V} \cdot \left(\abs{E} \cdot x^2 + \abs{\tau_{\text{max}}^+}^x \cdot x + \abs{V} \cdot \abs{\tau_{\text{max}}^+}^x \cdot x\right)\right) \\
		={}& \mathcal{O}\left(\abs{V} \cdot \left(\abs{E} \cdot x^2 + \abs{V} \cdot \abs{\tau_{\text{max}}^+}^x \cdot x\right)\right)
	\end{align*}
	time steps.
	
	For any path $P = (s, \ldots, v)$ with arrival time vector $\vec{t}$ and a computed $P \circ (w)$ with arrival time vector $\vec{t'}$ we have $\vec{t} \preceq \vec{t'}$.
	Hence, the front elements of the priority queue are always Pareto optima and a total of at most $\abs{V} \cdot \abs{\tau_{\text{max}}^+}^x$ elements are processed in the single step routine.
	
	This gives an overall running time of 
	\[
		\mathcal{O}\left(\abs{V}^2 \cdot \abs{\tau_{\text{max}}^+}^x \cdot \left(\abs{E} \cdot x^2 + \abs{V} \cdot \abs{\tau_{\text{max}}^+}^x \cdot x\right) + \abs{E} \cdot\log \abs{E}\right).
	\]	
\end{proof}

The correctness of the algorithm follows from \cref{lemma:drw-table-correct} and \cref{obs:preceq_arrival_time}.

\begin{lemma}
\label{lem:XPcorrectness}
	The described algorithm solves \drp{}.
\end{lemma}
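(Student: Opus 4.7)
The plan is to prove soundness and completeness separately; termination is immediate from the queue-size bound used in \Cref{lemma:drp_fpt_x_time}. For soundness I will show by induction on the order of insertion into $A$ that every $\vec{t} \in A[v]$ is the rounded worst-case arrival time vector of an actual $x$-delay-robust $(s,v)$-route constructed by the algorithm. The base case is $A[s] = \{(0, \ldots, 0)\}$, witnessed by the empty route. In the inductive step, an entry $\vec{t'}$ is added to $A[w]$ only after some $(v, \vec{t})$ is popped from $Q$ and the dynamic program of \Cref{thm:polyforest} is applied to the single-edge extension to $w$; by \Cref{lemma:drw-table-correct}, $\vec{t'}$ is exactly the arrival vector of this extended route, and the route is $x$-delay-robust because the entries computed by the dynamic program are finite. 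The rounding to the next time in $\tau_w^+$ does not affect subsequent walks because no temporal walk can leave $w$ strictly in between. In particular, $A[z] \neq \emptyset$ at termination implies the existence of a delay-robust $(s,z)$-route.

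For completeness I will prove by induction on the length of $P$ that, for every delay-robust $(s,v)$-route $P$ with arrival time vector $\vec{t}_P$, the set $A[v]$ at termination contains some $\vec{t'} \preceq \vec{t}_P$. The case $v = s$ is trivial. For the inductive step, decompose $P = P' \circ (v)$ with $P'$ an $(s,u)$-route; by the hypothesis there is some $\vec{t''} \in A[u]$ at termination with $\vec{t''} \preceq \vec{t}_{P'}$. I will argue that, since an entry is removed from $A[u]$ only when a strictly $\preceq$-smaller entry is inserted (and both $A$ and $Q$ are updated in lockstep), some $(u, \vec{t'''})$ with $\vec{t'''} \preceq \vec{t''}$ must have been popped from $Q$ during the run. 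When that pop was processed, the algorithm considered the neighbour $v$, and monotonicity of $\arrival$ in its time argument---coupled with the recursion in \Cref{lemma:drw-table-correct}---yields an extended arrival vector $\preceq \vec{t}_P$. Thus $A[v]$ acquired either this vector or a $\preceq$-smaller one at that moment, and by the Pareto-filtering discipline some $\preceq$-smaller vector persists to termination. \Cref{obs:preceq_arrival_time} guarantees that this prefix-swapping loses no delay-robust extension.

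The main obstacle I foresee is the precise interaction between the priority queue and the Pareto-filtering of $A$: one has to show that no Pareto-minimal prefix vector is silently discarded before being used to extend the search. The crucial point is that each removal from $A[u]$ is accompanied by the insertion of a strictly $\preceq$-smaller vector into both $A[u]$ and $Q$, so a finite descent along $\preceq$ reaches an entry that is never further dominated and is therefore eventually popped. Once this bookkeeping is pinned down, chaining it with the length-induction in the completeness proof, together with the single-step correctness of \Cref{lemma:drw-table-correct}, yields the claim.
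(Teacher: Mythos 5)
Your proposal is correct and follows essentially the same route as the paper's own (much terser) argument: single-step correctness via \cref{lemma:drw-table-correct}, the observation that every surviving Pareto-minimal entry is eventually popped and expanded, and \cref{obs:preceq_arrival_time} to justify that keeping only $\preceq$-minimal prefixes loses nothing. Your soundness/completeness split and the explicit queue-bookkeeping argument are in fact more careful than the paper's proof; the only point to tighten is that the completeness invariant should be phrased against the \emph{rounded} arrival vectors (rounding to $\tau_w^+$ is monotone and leaves every $\arrival(w,\cdot,\cdot,\cdot)$ value unchanged, so this is routine).
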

\begin{proof}
	Given a route $P = (s, \ldots, v)$, we compute for a successor vertex $w$ of $v$ whether the route $P \circ (w) = (s, \ldots, v, w)$ is delay-robust for any delay up to size $x$ and its associated arrival time vector.
	This is done by the dynamic program introduced in \cref{subsec:drp_verification} and its correctness is proven in \cref{lemma:drw-table-correct}.
	This is done for all successor vertices of $v$.
	Any newly found delay-robust path is added to the priority queue $Q$, if there is no better path with respect to $\preceq$ and the same end vertex.
	Additionally, no item $(v,\vec{t})$ is removed from the queue without computing paths for the successor vertices of $v$, if there is no better path with respect to $\preceq$ and the end vertex $v$.
	Hence, all delay-robust routes from $s$ are found that are optimal with respect to $\preceq$.
	Due to \cref{obs:preceq_arrival_time}, this is enough to find out if there is a delay-robust path from vertex $s$ to any vertex $v$. 
\end{proof}

\cref{theorem:drp_xp_x} now directly follows from \cref{lemma:drp_fpt_x_time} and \cref{lem:XPcorrectness} together with the observation that $\abs{\tau_{\text{max}}^+} \le \abs{E}$.

\subsection{Timed Feedback Vertex Number}
In this section, we explore another way to generalize \cref{thm:polyforest}. We present an FPT algorithm for the so-called \emph{timed feedback vertex number} (introduced by Casteigts et al.~\cite{TemporalPathsUnderWaitingTime}) and the number $x$ of delays combined. Intuitively, the timed feedback vertex number is the minimum number of ``vertex appearances'' that need to be removed from the temporal graph to turn its underlying graph into a forest. Formally, it is defined as follows.

 Let
$\GG$ be a temporal graph and $X\subseteq V\times[T]$ a set of
\emph{vertex appearances}. Then we write $\GG-X:= (V, E')$,
where $E'=E\setminus \{(v,w,t,\lambda)\mid (v,t)\in
X \lor (w,t)\in X\}$.
A \emph{timed feedback vertex set} of $\GG$ is a set $X\subseteq V\times[T]$ of vertex appearances such that $\under{\GG-X}$ is cycle-free.
The \emph{timed feedback vertex number} of a temporal graph~$\GG$ is the minimum cardinality of a timed feedback vertex set of $\GG$.

\begin{theorem} \label{thm:tfvn}
	\drp{} can be solved in $2^{\bigO(x f \log f)} \cdot (\abs{V} + \abs{E})^{\bigO(1)}$ time, where $f$ is the timed feedback vertex number of the underlying graph.
\end{theorem}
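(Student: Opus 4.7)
To prove \Cref{thm:tfvn}, the plan is to reduce \drp{} on graphs with small timed feedback vertex number to the forest case of \Cref{thm:polyforest}, paying an FPT-sized enumeration factor for the route's interaction with the timed feedback vertex set.

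First, compute a timed feedback vertex set $X$ with $|X| \le f$ in FPT time with respect to $f$: every cycle in $\under{\GG}$ must have at least one appearance in $X$, so a bounded-search-tree branching analogous to the one for (static) feedback vertex set goes through, branching on short cycles and then on time arcs at the chosen vertex. Let $V_X := \{v \mid (v,t) \in X \text{ for some } t\}$, so $|V_X| \le f$ and $\under{\GG - X}$ is a forest by definition of $X$.

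By \Cref{lemma:vwalk_then_vpath}, I may assume the sought delay-robust $(s,z)$-route visits each vertex at most once. Such a route visits the vertices of $V_X$ in some linear order, giving a sequence $\sigma = (s = u_0, u_1, \ldots, u_k, u_{k+1} = z)$ with $u_1, \ldots, u_k \in V_X$ pairwise distinct. Enumerating $\sigma$ costs at most $\sum_{k=0}^{f} f!/(f-k)! \in 2^{O(f \log f)}$. For a fixed ordering, the subroute between consecutive $u_j$ and $u_{j+1}$ uses only intermediate vertices in $V \setminus V_X$; the edges among $V \setminus V_X$ are identical in $\under{\GG}$ and $\under{\GG - X}$ (since such edges cannot touch any appearance in $X$), so they form a subforest, and the intermediate portion of each simple subpath is uniquely determined by its first and last intermediate vertex. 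As $u_j$ and $u_{j+1}$ may have extra edges in $\under{\GG}$ compared to $\under{\GG - X}$, only the first edge out of $u_j$ and the last edge into $u_{j+1}$ are free, giving polynomially many candidate subpath shapes per transition.

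The core of the algorithm is a dynamic program along $\sigma$: at each $u_j$ maintain a set of Pareto-optimal arrival time vectors $\vec{t} = (t_0, \ldots, t_x)$ in the sense of \Cref{sec:drp_fpt_delay_tau}. The transition from $u_j$ to $u_{j+1}$ iterates over all subpath shapes and applies the dynamic program of \Cref{thm:polyforest} to compute the resulting arrival time vector at $u_{j+1}$, after which Pareto-dominated entries are pruned.

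The main obstacle is bounding the size of the Pareto front at each $u_j$ in order to reach the claimed running time $2^{O(xf \log f)} \cdot (|V| + |E|)^{O(1)}$. I expect that, after rounding arrival times to the next time step at which an outgoing arc is available (analogous to the rounding used in \Cref{sec:drp_fpt_delay_tau}), each coordinate of an arrival time vector at a $V_X$-vertex may be restricted to $O(f)$ distinguished values tied to the critical appearances of $X$ in its temporal neighborhood. This would give at most $f^{O(x)}$ Pareto-optimal vectors per $V_X$-vertex; combined over $O(f)$ transitions and the $2^{O(f \log f)}$ orderings, this yields the target bound $2^{O(xf \log f)} \cdot (|V|+|E|)^{O(1)}$. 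Proving this quantitative Pareto-front bound carefully is the most delicate part of the argument.
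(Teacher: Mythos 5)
Your overall decomposition matches the paper's: compute a timed feedback vertex set $X$, enumerate the order in which the route visits the vertices of $\hat{X}=\{v\mid (v,t)\in X\}$ (costing $2^{O(f\log f)}$), observe that the segments between consecutive feedback vertices live in the underlying forest and are determined up to polynomially many choices of entry/exit arcs, and handle each segment with \cref{thm:polyforest}. However, the proposal has a genuine gap exactly at the step you yourself flag as ``the most delicate part'': you never establish that the Pareto front of arrival time vectors at each vertex of $\hat{X}$ can be restricted to $f^{O(x)}$ candidates. Without that bound, your dynamic program degenerates to the Pareto fronts of \cref{sec:drp_fpt_delay_tau}, whose size is only bounded by $\abs{\tau^+_{\max}}^x$, i.e., you recover the XP algorithm of \cref{theorem:drp_xp_x} rather than the claimed $2^{O(xf\log f)}\cdot(\abs{V}+\abs{E})^{O(1)}$ bound. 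This is precisely where the paper's proof does its real work: it defines the set of \emph{relevant} time steps $\{t,\,t+\delta \mid \exists w: (w,t)\in X\}\cup\{\infty\}$, of size $O(f)$, guesses for each feedback vertex a \emph{delay profile} in this set raised to the power $x$ (giving the $f^{O(xf)}=2^{O(xf\log f)}$ factor), and then verifies consistency of consecutive profiles with a subroutine built on \cref{thm:polyforest}. Your candidate justification --- rounding to the next outgoing time step and then to ``$O(f)$ distinguished values tied to critical appearances of $X$'' --- is not obviously sound as stated: two arrival times at $u_j$ lying strictly between consecutive relevant time steps can still be separated by a forest time arc leaving $u_j$ in that interval, so one cannot simply identify them without an argument about how the following segment is checked. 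The paper avoids this by making the rounded profile itself the object that is guessed and verified (in both directions of the correctness argument), rather than by claiming the exact Pareto front is small.

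A secondary, more minor issue: you propose to compute the timed feedback vertex set by an ad hoc bounded search tree ``analogous to the one for (static) feedback vertex set,'' which is itself only sketched; the paper instead invokes the existing $2^{O(f)}\cdot(\abs{V}+\abs{E})^{O(1)}$-time algorithm of Casteigts et al.~\cite{TemporalPathsUnderWaitingTime}. This part is repairable by citation, but the missing Pareto-front (equivalently, delay-profile) bound is the substantive hole that must be filled for the theorem to follow.
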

In the following, we give a description of the main steps of the algorithm we use to obtain the above result. The algorithm follows a simple ``guess and check''-approach.
\newcommand{\relevant}{\hat{T}}
\begin{enumerate}
\item Compute a minimum timed feedback vertex set $X$ of the input graph using an algorithm provided by Casteigts et al.~\cite{TemporalPathsUnderWaitingTime}.
\item Let $\hat{X}=\{v\mid (v,t)\in X\}$. Iterate over all partitions $\hat{X}_0\uplus\hat{X}_1\uplus\hat{X}_2\uplus\hat{X}_3=\hat{X}$ of $\hat{X}$.
We distinguish two types of neighbors of a vertex.
A neighbor connected by a time arc that is preserved in $\GG - X$ is called a ``forest neighbor'',
while other neighbors are called ``feedback neighbors''.
Intuitively, in this step we guess for each vertex whether its predecessor resp.\ successor in the route is a feedback neighbor or a forest neighbor, leading to the following four cases:
\begin{itemize}
\item The route does not contain $v$ or the predecessor and successor of $v$ in the route are forest neighbors of $v$ (then $v\in\hat{X}_0$), 
\item the predecessor of $v$ in the route is a forest neighbor $v$, and the successor of $v$ in the route is a feedback neighbor of $v$  (then $v\in\hat{X}_1$),
\item the predecessor of $v$ in the route is a feedback neighbor $v$, and the successor of $v$ in the route is a forest neighbor of $v$ (then $v\in\hat{X}_2$), or
\item the predecessor and successor of $v$ in the route are feedback neighbors of $v$ (then $v\in\hat{X}_3$).
\end{itemize}
\item Iterate over all orders on $\hat{X}_1\cup\hat{X}_2\cup\hat{X}_3$. Intuitively, in this step we guess in which order the vertices appear in the route.
\item Let $\relevant =\{t, t+\delta \mid \exists w\in V : (w,t)\in X\} \cup \{\infty\}$ be the \emph{relevant} time steps.
For each vertex $v \in \hat{X}_1\cup\hat{X}_2\cup\hat{X}_3$,
iterate over all \emph{delay profiles} $(t_1, t_2, \ldots, t_x)\in \relevant^{x}$.
Intuitively, here we guess for each delay size~$i$
the smallest relevant time~$t_i$ which is at least the worst-case arrival time at~$v$.
\item Use \cref{thm:polyforest} to find route segments that respect the guessed delay profiles between consecutive vertices in $\hat{X}_1\cup\hat{X}_2\cup\hat{X}_3$ 
and which can be combined to an $x$-delay-robust $(s,z)$-route. 
\end{enumerate}

We first give a more detailed description of the last step of the algorithm.
Let $\relevant=\{t,t+\delta \mid \exists v\in V : (v,t)\in X\} \cup \{\infty\}$ be the set of relevant time steps.
Let $\{v_1,v_2,\ldots, v_{f'}\}=\hat{X}_1\uplus\hat{X}_2\uplus\hat{X}_3$ be the order of the vertices of the current iteration
and $\vec{v}_1,\vec{v}_2,\ldots, \vec{v}_{f'}\in \relevant^{x}$ the corresponding delay profiles.
For a vertex $v\in \hat{X}_1\uplus\hat{X}_2\uplus\hat{X}_3$,
the \emph{delay profile} $\vec{v}=(t_1, t_2, \ldots, t_x)\in  \relevant^{x}$
specifies, for each delay size~$i$,
the earliest relevant time step~$t_i$ that upper-bounds the worst-case arrival time at~$v$.

Now we want to be able to check for every consecutive pair of vertices $v_i,v_{i+1}$ with $i\in[f'-1]$ whether there is a route connecting them that respects both of their delay profiles. 
To do this, we use a subroutine to solve the following problem: 

Given an $(s,z)$-route~$R$
and delay profiles $\vec{s}=(t^{(s)}_1, t^{(s)}_2, \ldots, t^{(s)}_{x})$,
$\vec{z}=(t^{(z)}_1, t^{(z)}_2, \ldots, t^{(z)}_{x})$,
is it true for all~$j$ that $t_j$ is the smallest relevant time step
which, for all~$i \leq j$,
upper-bounds the worst-case arrival time at~$z$ for~$j-i$~delays
when starting at~$s$ at time~$t_i$?

It is easy to observe that this problem can be solved with $O(f^2)$ applications of \cref{thm:polyforest}.
From now on, let $\text{checkRoute}(R,\vec{s},\vec{z})\rightarrow \{\texttt{true},\texttt{false}\}$ denote a subroutine answering the above question. 

Now we go back to our original problem of checking whether there is a route connecting the vertices in $\hat{X}_1\uplus\hat{X}_2\uplus\hat{X}_3$.
For each $i \leq f' - 1$ there are four possible cases:
\begin{itemize}
\item $v_i\in\hat{X}_1\cup\hat{X}_3$ and $v_{i+1}\in\hat{X}_1$:
This means that the successor of $v_i$ is a feedback set neighbor and the predecessor of $v_{i+1}$ is a forest neighbor.

Let $N :=\{v\mid \exists t, \lambda: (v_i,v,t,\lambda)\in E\wedge (v_i,t)\in X\}$.
For each $u\in N$,
there is a unique $(u, v_{i+1})$-route~$R'$ in $\GG - X$.
Construct $R  = (v_i) \circ R'$
and check whether $\text{checkRoute}(R,\vec{v}_i,\vec{v}_{i+1})=\texttt{true}$.

\item $v_i\in\hat{X}_1\cup\hat{X}_3$ and $v_{i+1}\in\hat{X}_2\cup\hat{X}_3$:
This means that the successor of $v_i$ is a feedback neighbor and the predecessor of $v_{i+1}$ is also a feedback neighbor.

Let $N:=\{v\mid \exists t, \lambda: (v_i,v,t,\lambda)\in E\wedge (v_i,t)\in X\}$
and $N' :=\{v\mid \exists t, \lambda: (v,v_{i+1},t,\lambda)\in E\wedge (v_{i+1},t)\in X\}$.
For each $(u,u')\in N \times N'$,
there is a unique $(u,u')$-route~$R'$ in $\GG - X$.
Set $R = (v_i) \circ R' \circ (v_{i+1})$
and check whether $\text{checkRoute}(R,\vec{v}_i,\vec{v}_{i+1})=\texttt{true}$.

\item $v_i\in\hat{X}_2$ and $v_{i+1}\in\hat{X}_1$:
This means that the successor of $v_i$ is a forest neighbor and the predecessor of $v_{i+1}$ is also a forest neighbor.

There is a unique $(v_i, v_{i+1})$-route~$R$ in $\GG - X$.
We check whether $\text{checkRoute}(R,\vec{v}_i,\vec{v}_{i+1})=\texttt{true}$.

\item $v_i\in\hat{X}_2$ and $v_{i+1}\in\hat{X}_2\cup\hat{X}_3$:
This means that the successor of $v_i$ is a forest neighbor and the predecessor of $v_{i+1}$ is a feedback neighbor.

Let $N:=\{v\mid \exists t, \lambda: (v,v_{i+1},t,\lambda)\in E\wedge (v_{i+1},t)\in X\}$.
For each $u \in N$, there is a unique $(v_i, u)$-route~$R'$ in $\GG - X$.
Construct from this the route $R = R' \circ (v_{i+1})$.
We check whether $\text{checkRoute}(R,\vec{v}_i,\vec{v}_{i+1})=\texttt{true}$.
\end{itemize}
Finally, we check (using $\text{checkRoute}$) whether there is a route from $s$ to $v_1$ that respects the delay profile of $v_1$
and whether there is a route from $v_{f'}$ to $z$ for the delay profile of~$v_{f'}$. 

If, for any choices of $v_1, \dots, v_{f'}$ and $\vec{v_1}, \dots, \vec{v_{f'}}$,
all of the above checks succeed, then we have found a solution.
Otherwise we conclude that there is no solution.

\smallskip

Next, we show that we obtain the claimed running time bound. We analyze the running time of each of the steps of the algorithm.
\begin{enumerate}
\item Computing a minimum timed feedback vertex set takes $2^{\bigO(f)}\cdot(|V|+|E|)^{\bigO(1)}$ time~\cite{TemporalPathsUnderWaitingTime}.
\item There are $\bigO(4^f)$ partitions $\hat{X}_0\uplus\hat{X}_1\uplus\hat{X}_2\uplus\hat{X}_3=\hat{X}$ of $\hat{X}$.
\item There are $\bigO(f!)$ possible orderings for $\hat{X}_1\uplus\hat{X}_2\uplus\hat{X}_3$.
\item There are $f^{\bigO(x f)}$ delay profiles combinations for the vertices in $\hat{X}_1\uplus\hat{X}_2\uplus\hat{X}_3$ to consider.
\item For each complete guessing step, the check in the last step of the algorithm can be performed in polynomial time.
\end{enumerate}
It follows that we obtain the claimed running time bound.

Finally, we briefly sketch out why our algorithm is correct.
To this end, observe that if the checking step of the algorithm succeeds for a given guess, then $\GG$ contains an $x$-delay-robust $(s,z)$-route.
For the other direction, assume that $\GG$ contains an $x$-delay-robust $(s,z)$-route $R$.
We guess which vertices of $\hat{X}$ are visited by $R$ and in which order.
Furthermore, we guess the delay profiles for all vertices $v\in\hat{X}$ that are visited by $R$.
Now we know that all checks must succeed, otherwise we obtain a contradiction to the existence of~$R$.

\subsection{Underlying Feedback Edge Number}

In this section, we show that \drp{} admits an \CFPT{}-algorithm with respect to the feedback edge number of the underlying graph. 
Given a (static) undirected graph $G = (V,E)$, a feedback edge set $F \subseteq E$ is a set of edges, so that $G-F$ is acyclic.
The \emph{feedback edge number} is the cardinality of  a minimum feedback edge set of~$G$. Formally, we show the following.

\begin{theorem} \label{corr:drp_fpt_feedbackedge}
	\drp{} can be solved in $2^{\bigO(f)} \cdot (\abs{V} \cdot \abs{E} \cdot x^2) + \bigO(\abs{E} \cdot \log \abs{E})$ time,
	where $f$ is the feedback edge number of the underlying graph.
\end{theorem}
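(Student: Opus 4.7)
The plan is to enumerate all simple $(s,z)$-paths in $\under{\GG}$, of which there are at most $2^f$ by a cycle-space argument, and to test each for $x$-delay-robustness using \cref{thm:polyforest}.

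First, in $O(|E| \log |E|)$ time compute a spanning forest $T$ of $\under{\GG}$; the $f$ non-tree edges form a minimum feedback edge set $F$, and the $f$ fundamental cycles (one per edge of $F$) form a basis of the cycle space $\mathcal{C}$ of $\under{\GG}$. Compute by BFS an initial simple $(s,z)$-path $P_0$ in $\under{\GG}$ (if none exists, output ``no''). By \cref{lemma:vwalk_then_vpath}, the search may be restricted to simple $(s,z)$-routes, which in $\GG$ correspond exactly to simple $(s,z)$-paths in $\under{\GG}$.

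The key combinatorial observation is that for any two simple $(s,z)$-paths $P_1, P_2$, viewed as edge sets, every vertex of $P_1 \triangle P_2$ has even degree (since $s$ and $z$ have degree $1+1=2$ while every other vertex has even degree from each $P_i$ separately), so $P_1 \triangle P_2 \in \mathcal{C}$. Consequently, every simple $(s,z)$-path arises as $P_0 \triangle c$ for some $c \in \mathcal{C}$, yielding at most $|\mathcal{C}| = 2^f$ such paths.

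The algorithm iterates over all $2^f$ elements $c \in \mathcal{C}$ (parameterized by subsets of the $f$ fundamental cycles), forms the edge set $P_0 \triangle c$, checks in polynomial time whether it constitutes a simple $(s,z)$-path (by testing the degree conditions $\deg(s)=\deg(z)=1$ and $\deg(v) \in \{0,2\}$ otherwise, together with connectivity of the $s$-component), and, for each valid candidate route in $\GG$, invokes \cref{thm:polyforest} at cost $O(|V| \cdot |E| \cdot x^2)$ to test $x$-delay-robustness. If some candidate passes, output ``yes''; otherwise ``no''. Summing the $O(|E| \log |E|)$ preprocessing with the $2^f$ iterations gives the claimed $2^{\bigO(f)} \cdot (|V| \cdot |E| \cdot x^2) + O(|E| \log |E|)$ bound. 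The main obstacle is precisely the cycle-space bound on the number of simple $(s,z)$-paths; once the degree-parity argument above is in place, correctness and running time follow by combining this enumeration with the verification subroutine of \cref{thm:polyforest}.
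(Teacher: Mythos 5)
Your proof is correct, but it organizes the enumeration differently from the paper. The paper first deletes degree-one vertices (other than $s$ and $z$), decomposes the forest $\under{\GG}-F$ into $\bigO(f)$ maximal paths with endpoints in $F\cup V^{\ge 3}\cup\{s,z\}$, and then enumerates the $2^{\bigO(f)}$ ways of stitching these segments and feedback edges into an $(s,z)$-path, following the restless-temporal-path algorithm of Casteigts et al. You instead fix one simple $(s,z)$-path $P_0$ and observe that $P\mapsto P\mathbin{\triangle}P_0$ injects the set of simple $(s,z)$-paths into the cycle space, which has dimension $f$, so there are at most $2^f$ candidates, each recoverable as $P_0\mathbin{\triangle}c$ for a cycle-space element $c$. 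Both arguments reduce to the same final step --- running the verifier of \cref{thm:polyforest} on each of $2^{\bigO(f)}$ candidate vertex sequences --- and both yield the stated running time; your cycle-space argument is arguably cleaner and gives the sharper count of exactly $2^f$ candidates without any preprocessing, while the paper's version reuses existing machinery. Two minor points to tighten: the correspondence between simple $(s,z)$-routes and simple $(s,z)$-paths in $\under{\GG}$ is only an injection, not a bijection (an undirected path need not be traversable by the directed time arcs), but this is harmless because the dynamic program of \cref{thm:polyforest} returns $\infty$ on any sequence that is not even a route; and when testing whether $P_0\mathbin{\triangle}c$ is a simple $(s,z)$-path you should verify that the \emph{entire} nonempty edge set lies in the component of $s$ (otherwise it could be an $(s,z)$-path plus disjoint cycles), which is presumably what you intend by the connectivity check.
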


Casteigts et al.~\cite{TemporalPathsUnderWaitingTime} designed an \CFPT{}-algorithm for the so-called \textsc{Restless Temporal Path} parameterized by the feedback edge number of the underlying graph.
This algorithm can be applied to \drp{} as well with minor modifications.

The \CFPT{}-algorithm as given by Casteigts et al.~\cite{TemporalPathsUnderWaitingTime} consist of four steps (only the last step needs adaptation to our problem):
\begin{enumerate}
	\item Exhaustively remove vertices with degree $\le 1$ from $\under{\GG}$ (except $s$ and $z$).
	\item Compute a minimum feedback edge set $F$ of $\under{\GG}$. Let $f := \abs{F}$.
	\item Let $V^{\ge 3}$ denote all vertices of $\under{\GG}$ with degree at least three. 
		Partition the forest $\under{\GG} - F$ into a set of maximal paths $\mathcal{P}$ with endpoints in $F \cup V^{\ge 3} \cup \{s,z\}$, and intermediate vertices all of degree 2.
		It holds that $\abs{\mathcal{P}} \in \bigO(f)$.
	\item Any $(s,z)$-route in $G$ can be formed with feedback edges $F$ and paths from $\mathcal{P}$. 
		Enumerate all at most $2^{\bigO{(k)}}$ $(s,z)$-routes and check (using \cref{thm:polyforest}) if any of them forms a $x$-delay-robust $(s,z)$-route.
\end{enumerate}

Denote by $G = (V, E') = \under{\GG}$ the underlying graph.
Clearly it can be computed in $\bigO(\abs{\GG})$~time.
In step 1, vertices of degree $\le 1$ (except $s$ and $z$) can be removed safely, since they can never be part of an $(s,z)$-route.
This step is possible in $\bigO(|V|)$ time.
In step 2, the minimum feedback vertex set $F$ can be obtained by computing a spanning tree $S$ of $G$ in $\bigO(\abs{G})$ time, and taking $F = E' \setminus S$.
A key observation that is used in step 3 is that $\abs{V^{\ge 3}}$, the number of vertices of degree 3 and higher, in a graph without any vertex of degree 1 can be upper-bounded by $2\cdot f$ \cite{BentertDKNN20BetweennessCentrality}. 
Since $s$ and $z$ can have degree one and can not be removed from $\GG$ the actual upper bound is $2\cdot f + 2$.
(The deletion of a degree one vertex will lower the degree of exactly one other vertex in $G$,
and thus the deletion will only decrease the cardinality of $V^{\ge 3}$ by at most one.)
Thus, $\abs{F \cup V^{\ge 3} \cup \{s,z\}} \in \bigO(f)$ which are the endpoints of the paths $\mathcal{P}$.
Since $\under{\GG} - F$ is a forest we have $\abs{\mathcal{P}} \in \bigO(f)$.
The feedback edges and the paths $\mathcal{P}$ can be used to build all $(s,z)$-paths of~$G$.
Hence, the number of possible $(s,z)$-paths is bounded through the number of subsets of $F$ and $\mathcal{P}$,
which is itself in $2^{\bigO(f)}$.
We can now test for any $(s,z)$-path in $G$ whether it forms a robust route  by using the dynamic program from \cref{subsec:drp_verification}.
This requires an initial sorting of all time arcs in $\bigO(\abs{E} \log \abs{E})$~time,
plus $\bigO(\abs{V} \cdot \abs{E} \cdot x^2)$~time per path.
This leads to an overall time complexity of $2^{\bigO(f)} \cdot \abs{V} \cdot \abs{E} \cdot x^2 + \bigO(\abs{E} \log \abs{E})$.

\section{Conclusion}
We modeled a naturally motivated path-finding problem taking
into account delays
by means of (algorithmic) 
temporal graph theory. 
For our central problem, \drp{}, we found 
computational hardness already for some tree-like underlying (static) graphs.
While having provided a few encouraging parameterized tractability results, 
we leave plenty of room for further investigations into this direction.
In particular, we left open what happens for the special case when the number
of time labels per edge is bounded from above (in parameterized complexity 
terms, taking this as a parameter).
Recall that our central hardness reduction needs many time labels.
Moreover, the parameters vertex cover number or timed feedback vertex set 
number~\cite{TemporalPathsUnderWaitingTime} (as a single parameter) deserve investigations 
as well.
Rather from a modeling perspective, one might vary the basic problem 
by e.g.\ considering a \emph{global} delay budget or 
other variations of the 
delay concept.

\bibliography{strings-long,bibfile}

\end{document}